\newcommand{\I}{\mathrm{I}}
\newcommand{\cmark}{\ding{55}}
\newcommand{\defeq}{{\text{}:=}}
\newcommand{\eH}{\mathrm{H}}
\newcommand{\eps}{\varepsilon}
\newcommand{\brak}[1]{\{#1\}}
\newcommand{\clA}{\mathcal{A}}
\newcommand{\clB}{\mathcal{B}}
\newcommand{\emptystring}{{\scaleto{\emptyset}{8pt}}}
\renewcommand{\C}{\mathrm{C}}
\newcommand {\diver} [2] {{\D}(#1 \| #2)}
\newcommand{\vall}{\mathrm{val}}
\newcommand{\Id}{\mathbb{1}}
\newcommand{\bbF}{\mathbb{F}}
\newcommand{\cbra}[1]{[{#1}]}
\renewcommand{\C}{\mathrm{C}}
\renewcommand{\E}{\mathbb{E}}
\newcommand{\tab}{\hspace{2cm}}
\newcommand{\Tr}{\mathrm{Tr}}
\newcommand{\wt}{\mathrm{wt}}
\newcommand{\dist}{\mathrm{dist}}
\renewcommand{\poly}{\mathrm{poly}}
\renewcommand{\polylog}{\mathrm{polylog}}
\newcommand{\Sim}{\textsc{Sim}}
\newcommand{\ideal}{\textsc{ideal}}
\newcommand{\val}{\mathrm{val}}
\newcommand{\CNOT}{\mathsf{CNOT}}
\newcommand{\prt}[2]{\mathcal{P}_{#1 \vert #2}}
\newcommand{\boldcheckmark}{\checkmark}
\newcommand{\Pf}{\mathcal{P}_f}
\newcommand{\Pg}{\mathcal{P}_g}
\newcommand{\Pgf}{\prt{g}{f}}
\newcommand{\priorprt}{\mathcal{P}^{\textsc{pri}}}
\newcommand{\afterprt}{\mathcal{P}^{\textsc{aft}}}
\newcommand{\ketbra}[2]{\ket{#1}\!\!\bra{#2}}
\newcommand{\kb}[1]{\ketbra{#1}{#1}}
\newcommand{\norm}[1]{\ensuremath{\left\lVert #1 \right\rVert}}
\newcommand{\wtt}[1]{\widetilde{#1}}
\newcommand{\wht}[1]{\widehat{#1}}
\newtheorem{theorem}{Theorem}
\newtheorem{lemma}{Lemma}
\newtheorem{fact}{Fact}
\newtheorem{claim}{Claim}
\theoremstyle{definition}
\newtheorem{definition}{Definition}
\newcommand{\negl}{\mathsf{negl}}
\newcommand{\MS}{\textup{MS}}
\newcommand{\test}{\textsc{test}}
\newcommand{\hmin}{\mathrm{H}_{\min}}
\newcommand{\bob}{\mathsf{Bob}}
\newcommand{\alice}{\mathsf{Alice}}
\newcommand{\charlie}{\mathsf{Charlie}}
\newcommand{\ver}{\mathsf{Ver}}
\newcommand{\dave}{\mathsf{Dave}}
\newcommand{\good}{\mathsf{Good}}
\newcommand{\onenorm}[2]{\left\Vert #1 - #2\right\Vert_1}
\newcommand{\junkstate}{\ket{\texttt{junk}}}
\newcommand{\junkstatedagger}{\bra{\texttt{junk}}}
\newcommand{\ext}{\mathsf{Ext}}
\newcommand{\mcC}{\mathcal{C}}
\newcommand{\mcP}{\mathcal{P}}
\newcommand{\mcQ}{\mathcal{Q}}
\newcommand{\mcL}{\mathcal{L}}
\newcommand{\device}{\mathbf{D}}
\newcommand{\suppress}[1]{}
\newcommand{\optionaldesc}[2]{%
  \phantomsection
  #1\protected@edef\@currentlabel{#1}\label{#2}%
}
\title{\vspace{-1in}A robust and composable device-independent protocol for oblivious transfer using (fully) untrusted quantum devices in the bounded storage model}
\date{Appeared at QCrypt, 2025. ArXiv:2404.11283.}
\author[1]{Rishabh Batra}
\author[2]{Sayantan Chakraborty}
\author[1,3,4]{Rahul Jain}
\author[5]{Upendra Kapshikar}
\affil[1]{{\small Centre for Quantum Technologies, Singapore}}
\affil[2]{{\small Département d'informatique et de recherche opérationnelle (DIRO), Université de Montréal}}
\affil[3]{{\small Department of Computer Science, 
  National University of Singapore}}
\affil[4]{{\small MajuLab, UMI 3654, Singapore}}
\affil[5]{{\small QUASAR and Department of Mathematics and Statistics, University of Ottawa}}
\begin{document}
\maketitle

\thispagestyle{empty}

\vspace{-0.5in}

\begin{abstract}
We present a robust and composable device-independent (DI) quantum protocol between two parties for oblivious transfer (OT) using Magic Square devices in the bounded storage model \cite{DFR,DFSS_1} in which the (honest and cheating) devices and parties have no long-term quantum memory. After a fixed constant (real-world) time interval, referred to as \textbf{DELAY}, the quantum states decohere completely. The adversary (cheating party), with full control over the devices, is allowed joint (non-IID) quantum operations on the devices, and there are no time and space complexity bounds placed on its powers. The running time of the honest parties is $\polylog(\lambda)$ (where $\lambda$ is the security parameter). Our protocol has negligible (in $\lambda$) correctness and security errors and can be implemented in the NISQ (Noisy Intermediate Scale Quantum) era. By robustness, we mean that our protocol is correct even when devices are slightly off (by a small constant) from their ideal specification. This is an important property since small manufacturing errors in the real-world devices are inevitable.

Our protocol is sequentially composable and, hence, can be used as a building block to construct larger protocols (including DI bit-commitment and DI secure multi-party computation) while still preserving correctness and security guarantees. None of the known DI protocols for OT in the literature are secure against arbitrary (non-IID) devices and provide simulator-based (composable) security. This was a major open question in device-independent two-party distrustful cryptography, which we resolved.

We prove a parallel repetition theorem for a certain class of entangled games with a hybrid (quantum-classical) strategy. This parallel repetition allows us to show min-entropy guarantees on certain random variables, which helps in proving the security of our protocol. The hybrid strategy helps to incorporate \textbf{DELAY} in our protocol. This parallel repetition theorem is a main technical contribution of our work.  Since our games use hybrid strategies and the inputs to our games are not independent, we use a novel combination of ideas from previous works showing parallel repetition of classical games~\cite{Raz,holenstein}, quantum games~\cite{JPY,JMS_DIQKD,Jain_Kundu_DIQKD}, and anchored games~\cite{anchoring3,anchoring_2}.

Although we present security proof for protocols in the bounded storage model with no long-term quantum memory (after \textbf{DELAY}), we can extend our results, along the lines of \cite{DFR}, to incorporate linear (in the number of devices) long-term quantum memory.

\end{abstract}

\clearpage
\pagebreak
{
\thispagestyle{empty}
    \hypersetup{hidelinks}
    \tableofcontents
}
\pagebreak
\pagenumbering{arabic}

\section{Introduction}

The advent of quantum technologies often poses security risks for many existing classical cryptographic protocols. At the same time, it provides new ways of designing cryptographic protocols using quantum devices. However, in several situations, parties willing to engage in quantum protocols may not be able to manufacture quantum devices on their own and may have to source them from an external party. For example, consider the following real-world use case. 

Two (potentially distrustful) banks wish to conduct a cryptographic protocol between themselves (potentially for exchanging sensitive financial information). They have no quantum labs and cannot perform quantum operations on their own. They source quantum devices from a third-party vendor and interact with these (potentially distrustful) devices using classical inputs and classical outputs, the so-called {\em device-independent} (DI) setting. Since the banks are (potentially) distrustful, there could be collusion between an adversarial (cheating) bank and the vendor. Even when all the parties are honest, due to some small manufacturing errors expected in the real world, the vendor may provide slightly faulty devices to the banks, in which case, the protocol should still work. The protocol should be potentially implementable using the devices available in the NISQ (Noisy Intermediate Scale Quantum) era. The protocol should be composable so that it can be used as a building block to construct larger cryptographic protocols. The assumptions placed on the computational powers of the adversarial bank (potentially colluding with the vendor) should be as minimal as possible.

The key motivating question is:\vspace{-2mm}
\begin{quote}
\centering \emph{Can we design cryptographic protocols in this aforementioned setting?} \end{quote}
\subsection*{Oblivious transfer}
{\em Oblivious transfer} (OT) is one of the most important two-party distrustful cryptographic primitives. It can be used to securely compute any multiparty functionality~\cite{Killian_OT_implies_everything}, including bit-commitment (BC). Informally, in a protocol for OT, $\alice$ gets as input two bits $(s_0, s_1)$ and $\bob$ gets one bit $b$. At the end of the protocol, $\bob$ should get to know $s_b$ and should get no information about $s_{1-b}$ while $\alice$ should get no information about $b$.

It is well known that information-theoretically secure protocols for OT are not possible even if $\alice$ and $\bob$ have access to quantum devices~\cite{Mayers_BC_Impossible, Lo_imposibility_stuff, LO_Chahu_BC_Impossible}. Therefore, protocols for OT are considered under different bounds on the computational powers of the adversaries. A popular and extensively studied setting is that of time-bounded adversaries, for example~\cite{OT_time1, Morimae_2022, time3, brakerski2022computational}. Damg\r{a}rd, Fehr, Salvail, and Schaffner~\cite{DFSS_1} and Damg\r{a}rd, Fehr, Renner, Salvail, and Schaffner~\cite{DFR} considered the \textit{bounded quantum-storage model (BQSM) }for the adversaries and exhibited secure protocols for OT and BC in this setting. In this model, initially, the adversary is allowed arbitrary quantum memory and quantum computation power. After a certain point in time (referred to as the \textbf{DELAY}), the memory bound applies, which means that only a fixed number of qubits can be stored, and all the remaining qubits are measured, although there is no bound on the computing power of the adversary. Another popular model is the \textit{noisy-storage model} used in Konig, Wehner, and Wullschleger~\cite{Konig_Wehner_Noisy_Storage} (see also \cite{Noisy_storage}). Similar to the bounded storage model, after the \textbf{DELAY}, there is only a fixed amount of quantum memory that the adversary can retain, but this memory is \textit{noisy} and decays with time. \cite{Konig_Wehner_Noisy_Storage} also shows secure protocols for OT and BC in this model. We note that the adversary in the bounded storage model is stronger than the noisy-storage model, as in the bounded storage model, there is no decoherence in the quantum memory of the adversary after the \textbf{DELAY}. We note that the protocols proposed in \cite{DFR},~\cite{DFSS_1}, and \cite{Konig_Wehner_Noisy_Storage} were not in the DI setting. 

A common assumption that is made in the DI setting, for example, in the works of Broadbent and Yuen \cite{Broadbent} and Kaniewski and Wehner \cite{KW16}, is that the devices are independent and identically distributed (referred to as the IID assumption). This assumption often simplifies security proofs. However, in cryptographic settings in the real world, the adversary could very well be colluding with the vendor to source the devices as per the adversary's specifications, in which case the IID assumption on the devices will fail to hold. Designing DI protocols for OT, which are secure without the IID assumption on the devices, has been a major open problem~\cite{KW16}. 

\subsection*{Our result}
We present a protocol for OT between $\alice$ and $\bob$, with the following properties and assumptions. Let $\lambda$ be the security parameter. 

{\bf Properties:}
\begin{enumerate}

\item {\bf Device independence:} Honest $\alice$ and $\bob$ are classical parties which use {\em Magic Square} (MS) quantum devices to execute the protocol. They provide classical inputs to these quantum devices and obtain classical outputs. Apart from this, they perform other classical computations. 

\item {\bf Efficiency:} Honest $\alice$ and $\bob$ run in time $\polylog(\lambda)$ and use $\polylog(\lambda)$ MS devices. 

\item {\bf Implementable in NISQ:} Since MS devices can be assumed to be implementable in the NISQ era and the other operations of honest $\alice$ and $\bob$ are classical, our protocol can potentially be implemented in the NISQ era. Indeed, our protocol requires honest parties to store several EPR pairs for a short time. This is consistent with our assumption of BQSM, where there is no ``long-term" quantum memory, but quantum states can be stored for a short time. 

\textbf{Real-world scenario:} Suppose the native storage time of an EPR pair (with high fidelity) is $1$ microsecond, and it is (highly) decohered after $1$ second (these numbers may depend on specific experimental implementation). It is possible that several hundred EPR pairs may be generated, and honest Alice and Bob can perform their operations in $1$ microsecond (this would suffice for several hundred bits of security). We would use DELAY = $1$ second in the protocol. This is how the competing requirements of keeping several EPR pairs for a short time and having no long-term quantum memory can be satisfied in NISQ.

\item {\bf Negligible correctness and security errors:} Our protocol has negligible (in $\lambda$) correctness and security errors.

\item {\bf Robustness:} The protocol between honest $\alice$ and $\bob$ has negligible correctness error even when each of the $\polylog(\lambda)$ MS devices that they use is a constant away from an ideal MS device (see Definition \ref{def:device}). This is a very useful feature for real-world protocols where small manufacturing errors on quantum devices are almost inevitable. 

\item {\bf Composable security:} We present a framework for composing (two-party) quantum cryptographic protocols and show a (sequential) composition theorem for quantum protocols using (augmented) {\em simulator}-based security. In particular, we show that if we have two protocols with augmented security, then their (sequential) combination also has augmented security. We show that our OT protocol is augmented-secure and hence can be composed with any other augmented-secure protocol to build larger cryptographic protocols. Since OT can be used to obtain any secure multi-party functionality, our protocol implies protocols for all secure multi-party functionalities (e.g, bit-commitment) in the bounded storage model.  \textbf{None of the earlier DI two-party cryptographic protocols for OT are composable or prove simulator-based security.}

{\bf Remark:} Due to the $\textbf{DELAY}$ assumption, when our protocol is used as a subroutine in an outer protocol, the parties source fresh devices at the beginning of our protocol. For our protocol to be composable, we require that the quantum state of the outer protocol (even with adversarial parties) completely decoheres before the inner protocol for OT starts. This is a reasonable assumption in the bounded storage model and can be enforced by an additional \textbf{DELAY} at the beginning of the protocol for OT, which decoheres the end state of the outer protocol. Due to this decoherence, we don't face any quantum rewinding issues while proving composable security.

\item {\bf General (non-IID) attacks:} An adversarial party can design all the devices used in the protocol together (the combined states and measurements) before the protocol starts.   
\textbf{No other known DI protocols for OT allow for adversaries with general (non-IID) attacks.}

{Note that the no-signalling assumption (see below) implies that the adversary cannot change the behaviour of the share of devices held by the honest party, once the protocol starts.   }
\end{enumerate}
{\bf Assumptions:} 
\begin{enumerate}

\item {\bf Bounded quantum storage:} The honest and adversarial quantum devices and the adversarial parties ($\alice^*$ and $\bob^*$) have no long-term quantum memory. The quantum memory completely decoheres after a fixed constant (real-world) time interval, referred to as \textbf{DELAY}\footnote{If no input is provided to a device before \textbf{DELAY}, it is assumed to be 0.}. This is a justified assumption in the NISQ era, where the devices are unlikely to have a long-term quantum memory. This corresponds to the bounded quantum-storage model, with no long-term quantum memory \cite{DFR,DFSS_1}.

Although we present our proofs in this model, we state that we can extend our results along the lines of~\cite{DFR}, to incorporate linear (in the number of devices) quantum memory after \textbf{DELAY} for the (inner) OT protocol with the following brief justification. In our proof, we show a linear lower bound on the min-entropy of certain random variables conditioned on the adversary's side information. Our proof currently allows for classical side information by the adversary (after \textbf{DELAY}).  When allowing for linear quantum information, this reduces the conditional min-entropy by a linear term, which can be absorbed since our original lower bound is linear as well.

\item {\bf No-signalling devices:} There is no communication between $\alice$ and $\bob$'s parts of (a shared) device after inputs are provided to it. 

\textbf{Remark:} A desirable property of any quantum cryptographic protocol is leakage-resilience, that is, the protocol should be secure even when some communication happens between the devices after inputs are provided to them and before outputs are produced by them. We note that almost all DI protocols in the literature for well-studied primitives like Quantum Key Distribution also place the no-signalling assumption~\cite{DIQKD_5, Acin_2006, VV_DIQKD}. Only very recently, some leakage-resilient protocols for DI QKD have been presented \cite{Jain_Kundu_DIQKD}. We leave investigations into leakage-resilient DI protocols for OT as future work. Note that this leakage is related to the communication that happens between the quantum devices after inputs are provided to them and before outputs are produced by them, i.e., by the quantum components in the protocol. No leakage by classical components is a standard (and necessary) assumption in distrustful cryptography.

\item {\bf Complexity bounds on adversaries:} The adversarial parties can perform quantum operations. They have no bounds on quantum time complexity and classical space complexity. They have no bounds on quantum space usage before \textbf{DELAY}.

\item {\bf Trusted private randomenss:} Honest $\alice$ and $\bob$ have access to trusted private randomness.

\item {\bf Access to clocks:} Honest $\alice$ and $\bob$ have access to  synchronized clocks. In each round of the protocol, if a message is not received from one of the parties, the other party outputs the {\em abort} symbol $\bot$.

\item {\bf Classical communication:} $\alice$ and $\bob$ have access to an ideal (non-noisy) classical communication channel between them. 
\end{enumerate}

Our main result is:
\begin{theorem} [Informal]\label{cor:iid_OT_intro}
     There exists a protocol for OT with the aforementioned properties and assumptions.
\end{theorem}

\subsubsection*{Previous works}
\cite{DFSS_1,DFR, Konig_Wehner_Noisy_Storage} present secure protocols for OT and BC in the {bounded/noisy quantum memory} model. These protocols are in the non-DI setting where $\alice$  and $\bob$ must trust their preparation and measurements of the quantum state to guarantee the security of the protocol.

\cite{KW16} presents DI protocols for the task of {\em Weak String Erasure} in the bounded quantum memory and noisy-storage models with the IID assumption.
The work of \cite{KW16} also implies DI security in the non-IID setting under an additional assumption that the adversary makes only sequential attacks. The techniques used in \cite{KW16} also imply the security of a DI protocol for OT under the IID assumption. However, the model for device independence used in that paper is quite different from the one we consider. 
In \cite{KW16}, the sender can test her devices, but the receiver is unable to do so. \cite{KW16} doesn't claim robustness. \cite{KW16} uses a \emph{trusted} device in the protocol execution, called a `switch'. They argue that such a device can be manufactured easily by the honest party and hence can be trusted. In our protocol, we do not require any such trusted device. Our protocol is sequentially composable, whereas \cite{KW16} makes no such claims.

Kundu, Sikora, and Tan~\cite{srijita_xot} present a DI protocol for OT (also using MS devices)  where an adversary can cheat with a constant (bounded away from $1$) probability, as opposed to negligibly small in the security parameter. This presents a problem with multiple uses of their protocol as a subroutine in a larger cryptographic protocol. Their protocol does not impose any additional computational restrictions on the adversary. They don't claim that their protocol is composable. The design and analysis of the protocol presented in that paper differ significantly from ours.

\cite{Broadbent} presents a DI protocol for OT under the IID assumption on devices based on the hardness of solving the {\em Learning With Errors} (LWE) problem~\cite{Regev_LWE}.
They allow some leakage between the devices during the protocol. However, this protocol is not shown to be composable.

Some works have proposed frameworks for building composable quantum protocols~{\cite{unruh, Unruh_1, Unruh_2, Müller-Quade_2009, Wehner_composibile, kundu_tan}}. Kundu and Tan~\cite{kundu_tan} present composable DI protocols for {\em certified deletion}.
 \cite{Unruh_2, Unruh_1} provide a composable non-DI protocol for OT.
 
Table~\ref{table:main} presents the comparison of our work to prior works on different parameters (here $\mathfrak{D}^*$ denotes the set of all devices with arbitrary (non-IID) behavior). In the table, we have different rows comparing whether quantum communication in the protocol is needed or not, correctness and security when using arbitrary (non-IID) devices, self-testability of devices, robustness of the protocol, etc.

\begin{table}[!h]
\centering
\begin{tabular}{|>{\centering\arraybackslash}m{2.8cm}
                |>{\centering\arraybackslash}m{2.8cm}
                |>{\centering\arraybackslash}m{2cm}
                |>{\centering\arraybackslash}m{2cm}
                |>{\centering\arraybackslash}m{2cm}|}
\hline

 & \textbf{\cite{DFR}, \cite{DFSS_1}, \cite{Konig_Wehner_Noisy_Storage}}
 & \textbf{\cite{srijita_xot}}
 & \textbf{\cite{KW16}}
 & \textbf{This work} \\
\hline

\textbf{Primitive}
& OT and BC
& XOR OT
& WSE and BC
& OT \\
\hline

\textbf{Composability}
& \cmark
& \cmark
& \cmark
& \boldcheckmark \\
\hline

\textbf{Device independence}
& \cmark
& \boldcheckmark
& \boldcheckmark
& \boldcheckmark \\
\hline

\textbf{Quantum memory}
& BQSM or \newline Noisy-storage
& No assumptions
& BQSM or \newline Noisy-storage
& BQSM or \newline Noisy-storage \\
\hline

\textbf{Leakage between devices}
& Not Applicable
& Not Allowed
& Not Allowed
& Not Allowed \\
\hline

\textbf{Completeness and soundness error}
& $\negl$
& constant
& $\negl$
& $\negl$ \\
\hline

\textbf{Robustness}
& \boldcheckmark
& \cmark
& \cmark
& \boldcheckmark \\
\hline

\textbf{Quantum Communication needed between parties} 
& \boldcheckmark
& \cmark
& \boldcheckmark
& \cmark \\
\hline

\textbf{Self-testability of devices}
& Not Applicable
& \boldcheckmark
& \cmark
& \boldcheckmark \\
\hline

\textbf{Security against $\mathfrak{D}^*$}
& Not Applicable
& Not Applicable
& \cmark
& \boldcheckmark \\
\hline

\end{tabular}
\caption{Comparison with other works}
\label{table:main}
\end{table}

\subsubsection*{Novelty in techniques}

The first non-IID security proofs for DI QKD (after a considerable literature with the IID assumption) were very well received \cite{VV_DIQKD}. Since then, removing the IID assumption for any two-party distrustful primitives was a major open question in the DI literature, which we settle for DI OT. This required novel ideas of using parallel repetition (of anchored games with \textbf{DELAY}) to prove the security of our protocol. For DI QKD, a key tool used in the literature is the Entropy Accumulation Theorem (EAT) (see \cite{Arnon_Friedman_DIQKD_2}). Since for OT, one of the parties may not cooperate in playing non-local games, it is not a priori clear how EAT can be used in such a setting (unlike QKD), further emphasising the utility of security proofs via parallel repetition. 

We prove a parallel repetition theorem for a certain class of entangled games with a hybrid (quantum-classical) strategy. This parallel repetition allows us to show min-entropy guarantees on certain random variables, which helps in proving the security of our protocol. Parallel repetition results for games have been used to argue the security of quantum cryptographic primitives before, for example, DI QKD \cite{JMS_DIQKD, vidick_paralleldiqkd,  Jain_Kundu_DIQKD}. In DI QKD, $\alice$ and $\bob$  are two trusted parties that want to generate a shared secret string and can trust each other's input distributions when trying to test the devices (say, using non-local games). However, in the setting of OT, the cheating party may not be using the correct distribution on the inputs to the (untrusted) devices, and hence, self-testing in this scenario is not as straightforward. 

To show parallel repetition of our game, we use arguments similar to  Jain, Pereszlényi, and Yao~\cite{JPY} (which in turn uses techniques from Jain, Radhakrishnan, and Sen~\cite{JRS}). However, we cannot directly apply the ideas in \cite{JPY} because of the following reasons.
\begin{itemize}
    \vspace{-1mm} \item $\cite{JPY}$ deals with games in which input distributions are independent. In our case, the inputs to the two parties are not independent. We use the idea of anchored distributions to deal with this non-product nature of the input distribution, similar to \cite{anchoring3, anchoring_2}.
    \vspace{-1mm} \item Our protocol (and security game) is in the bounded storage model where, after  \textbf{DELAY}, the quantum state decoheres completely. This is different from the usual setting of non-local games, where quantum parties can do any arbitrary quantum operations to win the games.   We define a restricted (hybrid quantum-classical) strategy in which \textbf{DELAY} corresponds to a mandatory $\CNOT$ gate between the quantum state of a party (say $\varphi^N$) and environment registers, $\ket{0}^{N'}$, after which the parties do not have access to $N'$. Before the $\CNOT$, the parties can do any arbitrary quantum operation on $\varphi^N$.  The $\CNOT$ corresponds to a computational basis measurement that simulates the quantum state's decoherence after \textbf{DELAY}. After \textbf{DELAY}, the parties perform classical operations on their registers. 
    
 To show the parallel repetition theorem for such hybrid games, we use a novel combination of ideas from previous works showing parallel repetition of classical games~\cite{Raz,holenstein}, quantum games~\cite{JPY,JMS_DIQKD,Jain_Kundu_DIQKD}, and anchored games~\cite{anchoring3,anchoring_2}.

\end{itemize}
Using these novel techniques, we are able to show the security of our protocol, even when the adversary is allowed full control over the manufacturing of the devices (states and measurements) before the protocol starts (i.e., non-IID quantum attacks). Note that although our security proof is via a parallel repetition theorem for a class of games, it also works for sequential strategies, which are a special case of arbitrary parallel cheating strategies.

\subsection*{Proof overview}

 In an ideal Magic Square (MS) device, $\alice$ and $\bob$ provide inputs $x\in\lbrace 0,1,2\rbrace$ and $y\in\lbrace 0,1,2\rbrace$ respectively.
$\alice$ receives output $a\in\mathcal{A} = \lbrace 000,011,101,110\rbrace$ (parity $0$ strings) and $\bob$ receives output $b\in\mathcal{B}= \lbrace 001,010,100,111\rbrace$ (parity $1$ strings) such that $a_y=b_x$. In other words, both $\alice$ and $\bob$ receive $3$-bit strings with the guarantee that the {\em common-bit} is the same (see Definition \ref{def:MS_Game}). Our protocol for device-independent OT uses $n$ (= $\polylog(\lambda)$) MS devices as given in Table \ref{table:OT}. We denote the inputs to these devices as $X=X_1, X_2\dots X_n$ (and similarly for $Y$) and outputs as $A=A_1, A_2\dots A_n$ (and similarly for $B$). We use $A(X)$ to denote $A_1(X_1),A_2(X_2) \dots A_n(X_n)$. For a bit $b\in \brak{0,1}$, we use $A(b)$ to denote $A(b^n)$, i.e., we select the $b$-th column (or bit) for each of $\alice$'s outputs (and similarly for $\bob$).

\begin{table}[!h]
        \centering
            \begin{tabular}{  l c r }
            \hline \\
            $\alice$ (input $S_0,S_1$) &  & $\bob$ (input $D$) \\
            \hline \\ 
             
 \hdashline \\

& \large{\textsc{Setup for Test Phase}} &  \\

\hdashline  \\\\
$X\leftarrow \brak{0,1,2}^{n}$
& &   $Y=D^n$  \\ 
\resizebox{0.12\textwidth}{!}{%
\begin{circuitikz}
\tikzstyle{every node}=[font=\normalsize]
\draw (6.25,12.75) rectangle (7.5,11.75);
\node [font=\large] at (6.8,12.25) {$\MS^A$};
\draw [->, >=Stealth, dashed] (5.5,12.5) -- (6.25,12.5);
\draw [->, >=Stealth, dashed] (6.25,12) -- (5.5,12);
\node [font=\large] at (5.2,12.5) {$X$};
\node [font=\large] at (5.2,12) {$A$};
\end{circuitikz}
} && 
\resizebox{0.12\textwidth}{!}{%
\begin{circuitikz}
\tikzstyle{every node}=[font=\normalsize]
\draw (6.25,12.75) rectangle (7.5,11.75);
\node [font=\large] at (6.8,12.25) {$\MS^B$};
\draw [->, >=Stealth, dashed] (5.5,12.5) -- (6.25,12.5);
\draw [->, >=Stealth, dashed] (6.25,12) -- (5.5,12);
\node [font=\large] at (5.2,12.5) {$Y$};
\node [font=\large] at (5.2,12) {$B$};
\end{circuitikz}
} \\ 
&\textbf{DELAY} \\  \\
\hdashline \\

& \large{\textsc{Test Phase Checks}} &  \\

\hdashline  \\
$T_0,T_1\leftarrow_U\lbrace0,1\rbrace^{O(\polylog(n))}$ & & \\$R_{0} ={A}\left(0\right)$, $R_{1} ={A}\left(1\right)$  & & \\
    $W_{0} = \eH\left(R_{0}\right)$, $W_{1} = \eH\left(R_{1}\right)$ \\
        $F_0 = S_0 \oplus \ext\left(R_0,T_0\right)$ & & \\
          $F_1 = S_1 \oplus \ext\left(R_{1},T_1\right)$  
     & $\xrightarrow{{F_0,F_1,T_0, T_1, {X}}, W_0,W_1}$ & \\
     &&  $R^\prime ={B}\left({X}\right)$ \tab \hspace{1cm} \\
&& $E^\prime = \textsc{SyndDec}\left(\eH(R^\prime) - W_{D}\right)$  \\
  && $L^\prime = R^\prime+E^\prime$ \tab  \quad \quad \tab  \\
       && Output $F_D \oplus \ext(L^\prime,T_D)$\hspace{0.9cm} \\
 \hdashline \\      \end{tabular} 
        \caption{Simplified protocol for OT}
        \label{prot:OT-simple} 
    \end{table}
   
The proof idea for the protocol is as follows: $\alice$ receives as input two bits $S_0$ and $S_1$, and $\bob$ receives the choice bit $D$. At the beginning of the protocol, there is a test phase where $\alice$ selects a random (and large enough) subset of devices $S_{test}$ to test the MS predicate. $[n]$ in \Cref{prot:OT-simple} represents the non-test devices. She generates both her and $\bob$'s inputs for the test devices and sends $\bob$ the location of the test indices $S_{test}$ and his inputs. $\bob$ needs to send to $\alice$ the outputs of the test devices, which allows $\alice$ to test the MS predicate for the test subset. 
Note that the test phase is not pre-appended to the protocol but is interwoven with the protocol. In particular, the test checks are carried out after the \textbf{DELAY}, at which point the outputs used in the actual protocol have already been generated. 

After the test subset input and outputs are generated, during an honest execution of the protocol, $\alice$ generates $X$ uniformly from $\brak{0,1,2}^n$ and inputs them into her non-test MS devices. $\bob$, on the other hand, inputs $D$ into all of his non-test MS devices, i.e., he sets $Y=D^n$. The main idea is that $\alice$ will use random bits extracted from the first two columns of the output of her MS devices, $A$, to obfuscate $S_0$ and $S_1$ if the test checks pass after the \textbf{DELAY}. For this, she samples independently and uniformly at random two strings $T_0$ and $T_1$ which act as seeds for a strong seeded-extractor $\ext(\cdot,\cdot)$. Her messages to $\bob$ then include the bits $F_b= S_b \oplus\ext(A(b), T_b)$ for $b\in \brak{0,1}$, along with the strings $X, T_0, T_1$. 

\noindent {\bf Correctness:} Suppose the devices are ideal. Consider the case when $D=0$ and honest Alice and Bob. In that case, $B(X)=A(0)$ (by the property of the MS device). Then $\bob$ can compute $\ext(A(0), T_0)=\ext(B(X), T_0)$, and recover $S_0$ by computing $\ext(B(X), T_0)\oplus F_0$. However, if the devices are not ideal but slightly faulty, $B(X)$ may be close to $A(0)$ but not exactly equal. To remedy this, $\alice$ includes $W_0=\eH(A(0))$ and $W_1=\eH(A(1))$ with her messages, where $\eH$ is the parity check matrix of an appropriately chosen classical \textit{error correcting code}. If the test-phase passed, then the syndrome $W_0$ and the string $B(X)$ allow $\bob$ to do syndrome decoding to recover $A(0)$ (and hence, $S_0$) correctly with high probability. This guarantees the \textbf{robustness} of our protocol. 

\noindent {\bf Receiver Security:} The communication between (cheating) $\alice$ and $\bob$ for the test phase checks is independent of $\bob$'s input $D$. After the test phase, there is no communication from $\bob$ to $\alice$. This guarantees (informally) that $\alice$ does not get any information about $\bob$'s choice bit $D$.

    \noindent {\bf Sender Security:} For sender security, we require that a cheating $\bob$ cannot guess both $S_0$ and $S_1$. Since we use $\ext(A(b), T_b)$ to hide $S_b$, it suffices to show that there is min-entropy in either $A(0)$ or $A(1)$ given all bits of $\bob$.  
   \begin{itemize}
    \item  We construct a single-copy (hybrid quantum-classical) security game (see Table \ref{table:ot-single-copy}). The hybrid strategy takes care of the \textbf{DELAY} in the protocol. The idea behind the game (say $G_1$) is that with probability $\delta>0$, testing of the devices happens where the MS predicate is checked. Otherwise, $\bob$ needs to guess two of $\alice$'s output bits. 
    \item We bound the maximum probability of winning $G_1$. \textbf{Note that for this step, the bounded quantum storage is crucial.} We then prove a parallel repetition theorem for $G_1$ (see below), which shows that the probability of winning the $n$-copy game in parallel is exponentially small in $n$. 
\item Using this parallel repetition, we show that if the test checks passed, $\bob$ cannot guess both $A(0)$ and $A(1)$ (see Claim \ref{claim:case_analysis}), i.e.,  there exists a bit $\widetilde{D}$  for which min-entropy in $A(\wtt{D})$ given (all bits of) $\bob$ is large. 
\item The extractor ensures that $\ext(A(\widetilde{D}), T_{\widetilde{D}})$ is uniform and independent of $\bob$ and can be used to hide $S_{\widetilde{D}}$ as a one-time pad, hence implying sender security.
    \end{itemize}
     This parallel repetition theorem is a key technical contribution of this work.

\subsubsection*{Parallel repetition}
A parallel repetition theorem is concerned with the success probability 
of multiple copies of a game $G$ played in parallel. Let $\val(G)$ denote the maximum probability (among all strategies) of winning the game $G$. 
Given a game $G$ with $\val(G)$ bounded away from 1, a parallel-repetition theorem says that for an $n$-fold repetition of the game $G$ played in parallel (denoted as $G^n$), the $\val(G^n)$ is exponentially small in $n$. Parallel repetition of quantum games when the input distribution is independent across the two parties is known \cite{JPY}. Since the input distribution to our security game is not independent, we use the idea of anchored games~\cite{anchoring3,anchoring_2} to define a game $G_1$ (see Table \ref{table:ot-single-copy}) and show that $\val(G_1)$ is bounded away from $1$. In anchoring, the idea is to modify the input distribution and introduce a special symbol $*$ with some probability such that, conditioned on $*$, the input distribution for both parties becomes independent. 

\textbf{Bounding the value of $G_1$:} In the game $G_1$, with probability $\delta$, we test the MS predicate, and with probability $1-\delta$, we want (cheating) $\bob$ to guess two of $\alice$'s output bits. If the probability that the devices satisfy the MS predicate is small (say less than $1-\eps_r$), then we are done, as the value of the game is automatically bounded away from $1$. If the probability that the devices win the MS game is large (more than $1-\eps_r$), then using the rigidity theorem (see Fact~\ref{fact:rigidity}),  we get that $\alice$'s state and measurements (since $\alice$ is honest) are close to that of ideal MS measurements. Since $\bob$ is dishonest, we cannot get any guarantee for the measurements of $\bob$. Since $\bob$ does not know $X$ (which is uniformly random) before \textbf{DELAY}, and $\alice$'s measurements (for different values of  $X$) are roughly "mutually unbiased", $\bob$ cannot guess 
 two of $\alice$'s output bits; although $\bob$ can guess a single bit with certainty by playing the MS game honestly. After \textbf{DELAY}, $\bob$ gets $X$, but measurements have already taken place on the quantum state and hence $\vall(G_1)$ is bounded away from $1$. Note that for this step, the \textbf{bounded quantum storage is crucial}.
 

\textbf{Multiple-copy game}: Consider the $n$-copy game $G_n$. Let us condition on success on a subset $\mcC \subseteq [n]$. We are done if the probability of success above is already (exponentially) small. If not, we show that for a randomly selected coordinate $j\notin \mcC$, the probability of winning in the coordinate $j$, conditioned on success in $\mcC$, is close to $\val(G_1)$ and hence is bounded away from $1$. 
Then, the overall success probability becomes exponentially small in $n$, after we have identified $\Omega(n)$ such coordinates.

We suppose towards contradiction that for some coordinate $j\notin \mcC$, the probability of winning in the coordinate $j$, conditioned on success in $\mcC$, is close to $1$.  In our protocol, the inputs to $\alice$ and $\bob$ are $X=X_1\dots X_n$ and $Y=\wht{Y}C\wht{X}=\wht{Y}_1C_1\wht{X}_1\dots \wht{Y}_nC_n\wht{X}_n$ respectively, which are not independent. Using anchoring ideas, if we condition on $\wht{X}_j=*$, the input distribution for the $j$-th coordinate becomes independent. Let the starting state for the protocol be $\ket{\theta}$ and the global state conditioned on success in $\mcC$ be $\ket{\varphi}$ (see eq. \eqref{eq:def_varphi}). Define the state $\ket{\varphi_*}$ as  $\ket{\varphi}$ conditioned on $\wht{X}_j=*$. Let $\wtt{A}$ and $\wtt{B}$ be all registers in possession of $\alice$ and $\bob$ other than $\wtt{X}X$ and $\wtt{Y}Y$ respectively (here $\wtt{X}$ is a copy of $X$ and shows that $X$ is classical, similarly for $Y$).  

\textbf{Strategy for winning game $G_1$:} $\alice$ and $\bob$ share the state $\ket{\varphi_*}$ at the beginning. 
\begin{itemize}
    \item  Using chain rules for relative entropy and mutual information, we can show that $\I(X_j:\wtt{Y}Y\wtt{B})_{\ket{\varphi_*}}$ and $\I(Y_j:\wtt{X}X\wtt{A})_{\ket{\varphi_*}}$ are close to 0; we actually prove that the mutual information conditioned on $R_j$ defined in eq. \eqref{eq:define_R_j} is small. For this to hold, $\wht{X}_j=*$ is crucial. 
    \item We also show that for the $j$-th coordinate, the distribution of questions ${\varphi_*}^{X_j\wht{Y}_jC_j}$ is close to the anchored distribution $\mu(X_j\wht{Y}_jC_j)$ (see eq. \eqref{eq:anchored_def}). 
    \item Suppose $\alice$ and $\bob$ receive inputs $x$ and $\wht{y}c$ according to $\mu(X_j\wht{Y}_jC_j)$. Since $\I(X_j:\wtt{Y}Y\wtt{B})_{\ket{\varphi_*}}\approx 0$, $\bob$'s state is roughly independent of $X_j$ (similarly for $\alice$). Using ideas similar to \cite{JRS}, we show that there exist a unitary transformations $U_x$ and $V_{\wht{y}c}$ that $\alice$ and $\bob$ can apply on their sides of the state $\ket{\varphi_*}$ to correctly embed the inputs $x$ and $\wht{y}c$ in the $j$-th coordinate of $\ket{\varphi_*}$.
\item If $c_j=0$ (corresponding to the test-phase), $\bob$ measures his register $B_j$ to generate the answer. If $c_j=1$, a $\CNOT$ acts between all registers of $\bob$ and $\ket{0}^{N''}$ after which $N''$ cannot be accessed by $\bob$ (corresponding to after the \textbf{DELAY} in the protocol). After this, $\bob$ gets access to $\wht{x}_j$. We show that using classical post-processing, $\bob$ can now generate the register $B_j$ properly, even though the quantum states on his side have decohered. \end{itemize}
Thus the state in the relevant registers generated by $\alice$ and $\bob$ is close to the state of those registers in $\ket{\varphi}$ (conditioned on the correct inputs), in which we have the guarantee that the probability of winning in the $j$-th coordinate is high (close to $1$). This gives a strategy for winning the game $G_1$ with a probability close to $1$, which is a contradiction as $\val(G_1)$ is bounded away from $1$.  This completes the parallel repetition argument.

For our security result, we need a strengthening of the above parallel repetition result in the form of a threshold theorem where we show that for the $n$-copy game $G_n$ (using ideas similar to \cite{anup_rao, concentration}) for some $\delta>0$, the probability of winning more than $(\val(G_1)+\delta)n$ fraction of games in $G_n$ is also exponentially small.

\subsection*{Organization}
In Section~\ref{sec:Preliminaries_DI_BC} we introduce the preliminaries and notations required for this work.
The section also contains several useful claims and lemmas that will be useful for the analysis of the device-independent protocols that we propose. 
In particular, this section contains results regarding distributions obtained from Magic Square devices.
In Section~\ref{sec:framework_for_composibility}, we establish our framework of composability.
Section~\ref{section:OT} contains our device-independent protocol for OT, and its correctness and security analysis. In Section~\ref{threshold}, we give proof of our threshold theorem using parallel repetition. We also defer the proof of the composition theorem and several claims to the Appendix.

\section{Preliminaries}\label{sec:Preliminaries_DI_BC}
In this section, we present some notation, definitions, and facts that we will need later for our proofs. 

\begin{definition}[$\negl$ function]\label{def:negligible}
    We call a function $\mu : \mathbb{N} \rightarrow \mathbb{R}^{+}$ negligible if for every positive polynomial $p(\cdot)$, there exists an $N \in \mathbb{N}$ such that for all $\lambda > N$, $\mu(\lambda)< \frac{1}{p(\lambda)}$. We let $\mu(\lambda) = \negl(\lambda)$ represent that $\mu(\lambda)$ is a negligible function in $\lambda$. 
\end{definition}
\subsection*{Notation}
\begin{enumerate}
    \item For any $l \in \mathbb{R}$ we use $[l]$ to denote the set $\lbrace 1, 2, \ldots, \lfloor l \rfloor \rbrace$.
    \item  $x \leftarrow_P X$ denotes $x$ drawn from $X$ according to distribution $P$. We use $U$ to denote the uniform distribution when the underlying domain is clear from context. 
    
    \item We use $U_t$ to represent uniform distribution on $t$-bits.

    \item $\negl(n)$ is the class of negligible  functions in $n$.
    \item $A\approx_{\negl(n)}B$ denotes $\onenorm{A}{B}\leq \negl(n)$.
    \item Capital letters denote random variables, and small letters denote their realizations; for example, $X=x$.
   
    \item For a vector or a string $v=v_1,v_2,\ldots,v_n$ we use $v(i):=v_i$. 
    For a restriction to subset $S \subseteq [n]$, we use $v_{S}$; for example, $v_{\lbrace 1,4,6\rbrace}= v_1 v_4 v_6$.  For an index $j\in [n]$, $v_{-j}\defeq v_1 \dots v_{j-1}v_{j+1}\dots v_n$.
    \item For a random variable $X$, we use a shorthand $X(c)$ to denote $\Pr(X=c)$.
    \item $d_H(x,y)$ is the Hamming distance between binary strings $x$ and $y$ and $\wt_H(x)$ denotes the Hamming weight of $x$. 
     \item For a distribution $P^{AB}$ we use $\left(P\big\vert b\right)^A$ or $P_b^A$ to denote conditional distributions. 
     When the underlying space $A$ is clear from the context, we drop it. 
                  
                     \item For distributions $P_1$ and $P_2$, we use $P_1 \otimes P_2$ to denote $P_1$ and $P_2$ are independent.
                \item     Let $\rho^{XYR}=P^{XYR}$ be a classical state. We denote \begin{equation}
        \label{eq:notation}
\rho^{YR}_x=P^{YR}_x=P^Y_{x}\cdot P^R_{x,Y}=\rho^{Y}_x \cdot \rho^{R}_{x,Y}.
 \end{equation}
        
    \end{enumerate}

    \subsection*{Error correcting codes }
A linear error-correcting code $\C$ is a $k$-dimensional linear subspace of an $n$-dimensional space (for $k \leq n$). Elements of the code $\mathrm{C}$ are called \emph{codewords}.
In this paper, we shall only consider vector spaces over $\bbF_2$, which is the alphabet of the code.
Thus, a codeword $c \in \C$ is a bit string of length $n$ and will be represented as $c_1,c_2,\ldots, c_n$.

The \emph{minimum distance}, or simply \emph{distance}, of a code is the minimum Hamming distance between two distinct codewords $u$ and $v$, i.e., the number of components in the vector $u\oplus v$ that have non-zero entries. For a linear code $\C$, the minimum distance is the same as the minimum Hamming weight of a non-zero codeword (note that the zero vector is always a codeword), and it will be denoted by $\dist(\C)$. We denote the distance of a string $w$ from the code as $d_h(w,\C)=\min \{d_h(w,c)|c\in \C\}$. A $k$-dimensional linear code with distance $d$, sitting inside an $n$-dimensional subspace is denoted as $\cbra{n,k,d}$ code. The quantity $\frac{k}{n}$ is called the rate of the code and $\frac{d}{n}$ is called the relative distance of the code. 

Since a code $\C$ is a $k$-dimensional subspace of an $n$-dimensional space, it can be thought of as the kernel of a matrix $H \in \mathbb{F}_2^{(n-k)\times n}$, which is called its \emph{parity check matrix}.
The distance of a code is then the minimum Hamming weight of a non-zero vector $u$ such that $Hu = 0^{n-k}$.
For our purposes, we will need codes as given in Fact \ref{fact:existance_of_codes}. 
Well-known code constructions such as expander codes meet this requirement.

\begin{fact}[\cite{guruswami2012essential}] \label{fact:existance_of_codes} For any $0<r<1$, there exists an explicit $[n,r\cdot n,\gamma \cdot n]$ code family with rate $r$ for small enough relative distance $\gamma>0$ depending on $r$. 
\end{fact}

 {\begin{definition}[Syndrome decoding] Given a linear $[n,k,d]$ code $\C$ having parity check matrix $H$ and a syndrome  $s$  with the promise that $s=Hw$ for some $w$ having $d_H(w,\C)<d/2$, syndrome decoding finds out the unique vector $y$ satisfying $wt_H(y)<d/2$ and $Hy=s$. \\ Note: If the promise on $s$ is not satisfied, the syndrome decoding algorithm may not be able to find a string $y$ for which $wt_H(y)<d/2$ and $Hy=s$. In this case, we assume that the output of the syndrome decoding algorithm is $\bot$.
We can assume this without loss of generality because both conditions can be checked efficiently for any output of the syndrome decoding algorithm.    
\end{definition}
\begin{fact}[\cite{Overbeck2009}] \label{fact:syndrome_decoding}
For linear codes with efficient decoding, syndrome decoding is also efficient. 
    \end{fact}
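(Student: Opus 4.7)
The plan is to reduce the syndrome decoding task to a single invocation of the assumed efficient decoder for $\C$, preceded by routine linear algebra. First, given the parity check matrix $H \in \mathbb{F}_2^{(n-k)\times n}$ and the input syndrome $s$, I would use Gaussian elimination to find in polynomial time any particular vector $w' \in \mathbb{F}_2^n$ with $Hw' = s$. If no such $w'$ exists, $s$ is not in the column span of $H$ and the algorithm can safely output $\bot$.

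Next, I would feed $w'$ into the assumed efficient nearest-codeword decoder for $\C$, obtain some codeword $c' \in \C$, and output the candidate $y := w' \oplus c'$ after verifying that $Hy = s$ and $\wt_H(y) < d_c/2$ (outputting $\bot$ otherwise). The key correctness argument, under the promise of the definition, is as follows: by hypothesis there is some $w$ with $Hw=s$ and $d_H(w,\C) < d_c/2$, so there is a unique codeword $c^*$ with $y^* := w \oplus c^*$ of Hamming weight strictly less than $d_c/2$. Since $H(w' \oplus w) = 0$, the vector $w' \oplus w$ lies in $\ker H = \C$, so the codeword $c^* \oplus (w' \oplus w)$ is at Hamming distance $d_H(w,c^*) < d_c/2$ from $w'$. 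Uniqueness of nearest-codeword decoding within radius $d_c/2$ forces the decoder's output to equal $c' = c^* \oplus (w' \oplus w)$, and therefore $y = w' \oplus c' = w \oplus c^* = y^*$, exactly the desired output.

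The only subtlety, and it is not really an obstacle, is how to handle inputs that violate the promise: in that case the decoder may return an arbitrary codeword, but the explicit weight-and-syndrome checks on $y$ catch every such failure and trigger the $\bot$ output, which the definition already permits. Putting it together, the whole procedure consists of one linear-system solve plus one call to the decoder, both polynomial, so whenever nearest-codeword decoding is efficient so is syndrome decoding.
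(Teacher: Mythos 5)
Your reduction is correct and is the standard argument: solve $Hw'=s$ by Gaussian elimination, run the assumed near‑codeword decoder on $w'$, output $w'\oplus c'$, and verify the weight and syndrome conditions to catch promise violations. The paper states this as a cited fact from Overbeck's survey without an in‑text proof, and your argument is exactly the textbook equivalence that reference gives, so there is nothing to reconcile.
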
}

\subsection*{Extractors}
\begin{definition}[Min-entropy]
    For a random variable $X$, its min-entropy $\hmin(X)$ is defined as:
    \[\hmin(X) \defeq \min_{x} \left( - \log\left({\Pr(X=x)}\right)\right).\]
\end{definition}A \emph{source} having min-entropy at least $k$ {and supported on strings of length $n$ is said to be an $(n,k)$-min-entropy source. When the length of the strings produced by the source is clear from context, we omit $n$ and simply call it a $k$-source.} For the definition of conditional min-entropy, $\hmin(\cdot\vert\cdot)$, refer to \Cref{def:Hmin}.
        {\begin{definition}[Strong extractor]
A $(k,\eps)$-strong extractor is a function 
\[
\ext:\{0,1\}^n \times \{0,1\}^{l} \rightarrow \{0,1\}^{m},
\] 
such that for every  $k$-source $X$, we have
\[
\norm{\ext(X,U_{l})\otimes U_{l}-U_{m+l}}_1\leq \eps.
\]
\end{definition}
\begin{fact}[\cite{DeVidick}]\label{fact:strong_extractor}
For any $\eps>0$ and $m\in \mathbb{N}$, there exists an explicit $(k, \eps)$-strong extractor \[{{\ext}}: \lbrace{0,1\rbrace}^m\times \brak{0,1}^t\to \brak{0,1}\] with uniform seed $t= O(\log^2\frac{m}{\eps})$ and $k= O(\log \frac{1}{\eps})$.
\end{fact}

    \begin{fact}[\cite{DORS}] \label{fact:average_to_worst_min_entropy}
    For any $\eps >0$,
    \[\Pr_w\left( \hmin\left( X \vert w\right) \geq \hmin\left( X \vert W\right) - \log\left(\frac{1}{\eps} \right)\right) \geq 1- \eps.\]
\end{fact}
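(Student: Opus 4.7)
The plan is to reduce the statement to an application of Markov's inequality on the guessing-probability random variable. First I would spell out the two flavors of conditional min-entropy appearing in the claim: for a fixed outcome $w$, $\hmin(X\vert w) = -\log \max_x \Pr(X = x \vert W = w)$ is the ordinary min-entropy of the conditional distribution, while the \emph{average} conditional min-entropy is $\hmin(X\vert W) = -\log \E_{w \leftarrow W}\bigl[\max_x \Pr(X = x \vert W = w)\bigr]$ (this is the standard Dodis--Ostrovsky--Reyzin--Smith definition that the cited fact references).

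Next I would introduce the non-negative random variable $Z_w := \max_x \Pr(X = x \vert W = w)$, so that by definition $\hmin(X\vert w) = -\log Z_w$ and $\hmin(X \vert W) = -\log \E[Z_W]$. Taking logs in the target inequality, the event
\[
\hmin(X\vert w) \;\geq\; \hmin(X\vert W) - \log\tfrac{1}{\eps}
\]
is equivalent to the event $Z_w \leq \eps^{-1}\, \E[Z_W]$. Thus the statement we need to prove is exactly
\[
\Pr_{w \leftarrow W}\bigl(Z_w \leq \eps^{-1}\, \E[Z_W]\bigr) \;\geq\; 1 - \eps.
\]

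Finally I would apply Markov's inequality to the non-negative random variable $Z_W$ with threshold $t = \eps^{-1} \E[Z_W]$, giving $\Pr(Z_W \geq \eps^{-1}\E[Z_W]) \leq \eps$, and take the complement. That closes the argument.

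Honestly there is no real obstacle here; the only thing one must be careful about is matching the definition of $\hmin(X\vert W)$ used on the right-hand side (average min-entropy, defined via an expectation inside the logarithm) against the per-outcome min-entropy on the left. Once those definitions are aligned, the claim is just Markov applied to $Z_W$.
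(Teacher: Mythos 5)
Your proof is correct, and it is exactly the standard argument (the paper itself only cites this as a fact from [DORS] without giving a proof, and the proof in that reference is the same Markov-inequality calculation). You correctly identify the two flavors of min-entropy, note that $\hmin(X\vert W) = -\log \E_{w}\bigl[Z_w\bigr]$ with $Z_w = \max_x \Pr(X=x \vert W=w)$, translate the target event into $Z_w \leq \eps^{-1}\E[Z_W]$, and apply Markov to the non-negative variable $Z_W$. The only pedantic remark worth making is that Markov gives $\Pr(Z_W \geq \eps^{-1}\E[Z_W]) \leq \eps$, whose complement is the strict event $Z_W < \eps^{-1}\E[Z_W]$; this is contained in the non-strict event you want, so the bound still holds, but you should mention this inclusion rather than silently passing from one to the other.
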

\begin{fact}[\cite{operational_meaning}]\label{fact:p_guess}
    Let $\rho^{XB} = \sum_x p_x \ketbra{x}{x} \otimes \rho_x^B$  where $X$ is classical. Then
\[\hmin(X|B)_\rho = -\log (p_{guess}(X|B)_\rho),\]
where $p_{guess}(X|B)_\rho$ is the maximal probability of decoding $X$
from $B$ with a POVM $\{E_x^B\}_x$ on $B$, i.e.,
\begin{equation}
    \label{eq:guessing_prof_def}
    p_{guess}(X|B)_\rho \defeq \max_{\{E_x^B\}_x}
p_x\Tr(E_x^B\rho_x^B) .\end{equation}
If $B$ is trivial, then \[
p_{guess}(X)=\max_{x}p_x.
\]
 \end{fact}
   
\subsection*{Quantum Device}

\begin{definition}[Quantum Device] \label{def:device} 
    A quantum device $\device$ consists of input and output ports, allowing for classical inputs $x\in \mathcal{X}$ and classical outputs $o\in \mathcal{O}$. The behavior of the device $\device$ is specified by a tuple $(\rho,\brak{\Lambda^x_o}_{x,o})$ where  $\forall x \in \mathcal{X}: \sum\limits_{o}\Lambda^x_o=\mathbb{I}$, and for all $x,o,~\Lambda^{x}_o \geq 0$, i.e., $\forall x\in \mathcal{X}$, $\brak{\Lambda^x_o}_{o}$ is  a POVM.  The probability of output $o$ on input $x$ is given by $\Pr^{\device}(o|x) \defeq \Tr\left[\Lambda^x_o\rho\right]$.

    For a device $\device$, an $\eps$-near device $\device_\eps$ satisfies  $\forall (x,o): \left|\Pr^{\device}(o|x)  -  \Pr^{\device_\eps}(o|x) \right| \leq \eps$. We denote the set of all $\eps$-near devices to $\device$ as $\mathcal{D}_\eps\left( \device \right)$.  
\end{definition} 

The above device is used by parties participating in a communication protocol as follows: 
\begin{enumerate}
    \item An honest party can only interact with the device via classical input and classical output. 
    When input $x$ is chosen, the resultant output is obtained by measuring the state $\rho$ with POVM $\lbrace \Lambda^x_o \rbrace_{o}$. 
     \item A device may be shared between two parties (say $\alice$ and $\bob$). 
    In that case,  the device contains a joint state $\rho^{AB}$ (where $A$ is $\alice$'s share and $B$ is $\bob$'s share). 
    \item An adversarial party can design all the devices used in the protocol together (the combined states and measurements) before the protocol starts. 
\end{enumerate}

\begin{definition}
    [Ideal Magic Square device]\label{def:MS_Game}
   A Magic Square device has classical inputs $\mathcal{X} \times \mathcal{Y}$, where $\mathcal{X} = \mathcal{Y} = \lbrace 0,1,2\rbrace$.
   The output of a Magic Square device is $(a,b) \in \mathcal{A} \times \mathcal{B}$ where $\mathcal{A} = \lbrace 000,011,101,110\rbrace$ and $\mathcal{B}= \lbrace 001,010,100,111\rbrace$.
   On input $(x,y)$ an ideal Magic Square device produces outputs $(a,b)$ such that $a_y=b_x$. 
   The ideal state $\rho$ is a double-EPR state;
   \[ \rho =  \ket{\Psi^{\MS}} := \frac{1}{\sqrt{2}} \left( \ket{00}+ \ket{11}\right)^{M_1 N_1} \ \frac{1}{\sqrt{2}} \left( \ket{00}+ \ket{11}\right)^{M_2 N_2}.\]
   Note that the third bit of $a$ (and analogously $b$) is deterministically fixed by the first two bits.
   Hence, the first two bits suffice to uniquely describe the output.
   The measurements $\Lambda^{x,y}_{a,b}$ are defined via the following equation (see~\cite{srijita_xot} for more details):
   \[ \Lambda^{x,y}_{a,b} = \left(\Pi^{x0}_{a_0} \Pi^{x1}_{a_1} \right)^{M_1M_2} \otimes  \left(\Pi^{0y}_{b_0} \Pi^{1y}_{b_1} \right)^{N_1N_2} = M^{\MS}_{a|x} \otimes M^{\MS}_{b|y}, \text {where }\]
\begin{table}[H]
\centering
\begin{tabular}{|c|c|c|c|}
\hline
\text{$x, y$} & 0 & 1 & 2 \\\hline
\rule{0pt}{4.5ex}0 & $\begin{aligned} \Pi_0^{00} = \ketbra{0}{0}\otimes\Id \\ \quad \end{aligned}$ & $\begin{aligned} \Pi_0^{01} = \Id\otimes\ketbra{0}{0} \\ \quad \end{aligned}$ &  $\begin{aligned} \Pi_0^{02} & = \ketbra{0}{0}\otimes\ketbra{0}{0} \\ & \quad + \ketbra{1}{1}\otimes\ketbra{1}{1} \end{aligned}$ \\[0.5cm]
  & $\begin{aligned} \Pi_1^{00} = \kb{1}\otimes\Id \\ \quad \end{aligned}$ & $\begin{aligned} \Pi_1^{01} = \Id\otimes\kb{1} \\ \quad \end{aligned}$ & $\begin{aligned} \Pi_1^{02} & = \kb{0}\otimes\kb{1} \\ & \quad + \kb{1}\otimes\kb{0} \end{aligned}$ \\[0.5cm] \hline
\rule{0pt}{5ex}1 & {$\begin{aligned} \Pi_0^{10} = \Id\otimes \kb{+} \\ \quad \end{aligned}$} & {$\begin{aligned} \Pi_0^{11} = \kb{+}\otimes\Id \\ \quad \end{aligned}$} & {$\begin{aligned} \Pi_0^{12} & = \kb{+}\otimes\kb{+} \\ & \quad + \kb{-}\otimes\kb{-}\end{aligned}$} \\[0.5cm]
  &  $\begin{aligned} \Pi_1^{10} = \Id\otimes\kb{-} \\ \quad \end{aligned}$ & $\begin{aligned} \Pi_1^{11} =  \kb{-}\otimes\Id \\ \quad \end{aligned}$ & {$\begin{aligned}\Pi_1^{12} & = \kb{+}\otimes\kb{-} \\ & \quad + \kb{-}\otimes\kb{+}\end{aligned}$} \\[0.5cm]\hline
\rule{0pt}{5ex}2 & $\begin{aligned} \Pi_0^{20} & = \kb{1}\otimes\kb{+} \\ & \quad + \kb{0}\otimes\kb{-}\end{aligned}$ & $\begin{aligned} \Pi_0^{21} & = \kb{+}\otimes\kb{1} \\ & \quad + \kb{-}\otimes\kb{0}\end{aligned}$ & $\begin{aligned} \Pi_0^{22} & = \kb{+i}\otimes\kb{+i} \\ & \quad + \kb{-i}\otimes\kb{-i}\end{aligned}$ \\[0.5cm]
  & $\begin{aligned} \Pi_1^{20} & = \kb{0}\otimes\kb{+} \\ & \quad + \kb{1}\otimes\kb{-}\end{aligned}$ & $\begin{aligned} \Pi_1^{21} & = \kb{+}\otimes\kb{0} \\ & \quad + \kb{-}\otimes\kb{1}\end{aligned}$ & $\begin{aligned} \Pi_1^{22} & = \kb{+i}\otimes\kb{-i} \\ & \quad + \kb{-i}\otimes\kb{+i}\end{aligned}$ \\[0.5cm] \hline
\end{tabular}
\caption{$\Lambda^{x,y}_{a,b}$ description for a Magic Square device}
\label{tab:ms-meas}
\end{table} \label{example:idea_MS}
\end{definition}

\begin{definition}[\MS~Predicate] Let $X,Y$ be random variables on $\lbrace 0,1,2 \rbrace$ and let $\rho^{AB}$ be a state. 
Let $\Lambda^A$ and $\Lambda^B$ be POVMs on $A$ and $B$ respectively, that is,
$\Lambda^A := \lbrace \Lambda^{A}_{o_A}\rbrace_{o_A}$ and $\Lambda^B := \lbrace \Lambda^B_{o_B} \rbrace_{o_B}$ be such that
\begin{enumerate}
    \item For all $o_A , o_B \in \lbrace 0,1 \rbrace^3$, we have,  $ 0 \leq \Lambda^{A}_{o_A} \leq \Id^A$ and $ 0 \leq \Lambda^{B}_{o_B} \leq \Id^B$.
    \item  $\sum\limits_{o_A \in \lbrace 0,1\rbrace^3} \Lambda^{A}_{o_A} = \Id^A$ and $\sum\limits_{o_B \in \lbrace 0,1\rbrace^3} \Lambda^{B}_{o_B} = \Id^B$.
\end{enumerate}
Let $\mu^A \mu^B$ be the distribution obtained by measuring $\rho$ with POVM $\Lambda^A \otimes \Lambda^B$. 
We define a random variable 
$\MS(X,Y,\Lambda^A, \Lambda^B)_{\rho} := \MS (X,Y,\mu^A,\mu^B) = \Id_{\mu^{A}(Y) = \mu^B(X)}$.    
\end{definition}
The following fact can be easily verified (also stated in \cite{srijita_xot}).
\begin{fact}\label{fact:ms_distribution_both_sides} For all $a,b,x,y$, the probability distribution satisfies: 
     \begin{align*}
  \Pr\left( a,b \vert x,y\right) & = \frac{1}{8}  & \mbox{if $a_y=b_x, \oplus_i a_i=0, \oplus_i b_i=1$}
  \\
  & = 0 & \mbox{otherwise}.
\end{align*}
\end{fact}

\subsection*{Rigidity manipulations}

Rigidity theorems are a class of theorems that say that any robust device must produce distributions close to the ideal one in order to win games with a probability close to 1.
\begin{fact}[Lemma 5, \cite{srijita_xot}]\label{fact:rigidity}
    Consider any state $\ket{\rho}$ on the registers $AB$ and projective measurements $M_{a|x}$ and $N_{b|y}$ such that $M_{a|x}$ acts only on $A$ and $N_{b|y}$ acts only on $B$ (see Definition \ref{def:MS_Game}).
    If  this state and measurements win the Magic Square game with probability $1-\eps_r$, then there exist local isometries $V_A:A\to A_0 A_1J^A$ and $V_B:B \to B_0B_1J^B$ and a state $\ket{\texttt{junk}}$ on $J^AJ^B$ such that for all $x, y, a, b$, we have:    
    \begin{align*}
        &\norm{\left(V_A\otimes V_B\right)\ket{\rho}-\ket{\Psi^{\MS}} \otimes \junkstate}_2 \leq O(\eps_r^{1/4}), \\
                &\norm{\left(V_A\otimes V_B\right)\left(M_{a|x}\otimes\Id\right)\ket{\rho}-\left(M^{\MS}_{a|x}\otimes \Id \right)\ket{\Psi^{\MS}}\otimes {\junkstate}}_2 \leq O(\eps_r^{1/4}), \\
                &\norm{\left(V_A\otimes V_B\right)\left(\Id\otimes N_{b|y}\right)\ket{\rho}-\left(\Id\otimes N^{\MS}_{b|y}  \right)\ket{\Psi^{\MS}}\otimes {\junkstate}}_2 \leq O(\eps_r^{1/4}).
    \end{align*}
    \end{fact}
    \begin{claim}\label{claim:length}
For all $a,x$ such that $\oplus_i a_i=0$,
    \[\norm{\left(M^{\MS}_{a|x}\otimes \Id \right)\ket{\Psi^{\MS}}\otimes {\junkstate}}_2=\frac{1}{2}.\]
\end{claim}
\begin{proof}
Since $M^\MS_{a|x}$ is a projective measurement, we have
\begin{align*} 
\norm{\left(M^{\MS}_{a|x}\otimes \Id \right)\ket{\Psi^{\MS}}\otimes {\junkstate}}_2^2 &=\junkstatedagger\otimes\bra{\Psi^{\MS}}\left(M^{\MS}_{a|x}\otimes \Id \right)\ket{\Psi^{\MS}}\otimes {\junkstate}.
\\&=\bra{\Psi^{\MS}}\left(M^{\MS}_{a|x}\otimes \Id \right)\ket{\Psi^{\MS}}
\\&=\frac{1}{4}. &\mbox{(from Fact \ref{fact:ms_distribution_both_sides})}
\end{align*}
Taking square root now completes the proof.
\end{proof}
\subsection*{Quantum information tools}
\begin{definition}[$\ell_1$ distance] For an operator $A$, the $\ell_1$ norm is defined as $\|A\|_1 \defeq \Tr \sqrt{A^\dagger A}$. For operators $A,B$, their $\ell_1$ distance is defined as $\onenorm{A}{B}$. We use shorthand $A\approx_{\eps}B$ to denote $\onenorm{A}{B}\leq \eps$.
    
\end{definition}

\begin{definition}[Fidelity]
    For (quantum) states $\rho,\sigma$, \[
    F(\rho,\sigma)\defeq \|\sqrt{\rho}\sqrt{\sigma}\|_1.
    \]
\end{definition}
\begin{definition}[Bures metric]
    For states $\rho,\sigma$, \[
    \Delta_B(\rho,\sigma)\defeq \sqrt{1-F(\rho,\sigma)}.
    \]
   
\end{definition}

\begin{fact}[\cite{fuchs}]\label{fact:fuchs}
    For states $\rho, \sigma$,
\[ \Delta_B^2(\rho, \sigma) \leq \frac{\onenorm{\rho}{ \sigma}}{2} \leq
\sqrt{2}\Delta_B(\rho, \sigma).\]
\end{fact}
\begin{definition}\label{def:entropy}For a  state $\rho^A$, its von Neumann entropy is defined as \[
{\S}(A)_\rho\defeq -\Tr(\rho\log\rho).
\]
\end{definition}
\begin{fact}[\cite{wilde}]\label{fact:entropic_inequalities}
    For a state $\rho^A$, \[{\S}(A)_\rho \leq |A|.\]
\end{fact}
    \begin{fact}[Gentle Measurement Lemma \cite{wilde}] \label{fact:gentle_measurement}
     Let $\rho$ be a state. Let $M \in \mcL(\mathcal{H}_A)$   such
that $M^\dagger M \leq I_A$ and $\Tr(M \rho M^\dagger ) \geq 1 - \eps$. Let $\rho' =
\frac{M \rho M^\dagger }
{\Tr M \rho M^\dagger} $. Then, \[\Delta_B (\rho, \rho') \leq \sqrt{\eps}.\]
    \end{fact}
\begin{claim}\label{claim:L_2_to_L_1}
    Let $\ket{u}$ and $\ket{v}$ be such that
    \begin{enumerate}
    \item $\Vert \ket{u} \Vert_2 \leq 1$ and $\Vert \ket{v} \Vert_2 \leq 1$.
        \item $\Vert \ket{u}-\ket{v} \Vert_2 \leq \eps$ for some small $\eps>0$. 
        \item $\Vert \ket{u} \Vert_2 \geq c$ for some $c>0$ ($\eps<c/10$).
    \end{enumerate}
    Then, \[\onenorm{\ketbra{u}{u}}{\ketbra{v}{v}} \leq O\left(\frac{\eps}{c^4}\right).\]
\end{claim}
\begin{proof}
     We defer the proof of this claim to the Appendix.  
\end{proof}

 \begin{definition}[Relative entropy \cite{wilde}] The quantum relative entropy between two states $\rho$ and $\sigma$ is defined as:
    \[\D(\rho\|\sigma)\defeq \Tr\left(\rho(\log(\rho)-\log(\sigma))\right).\]  
    
    \end{definition}
    \begin{definition}[Max-divergence \cite{marco_book}]\label{def:H_infinity}
    \[\D_\infty(\rho\|\sigma)\defeq \min\{\lambda : \rho
   \leq 2^\lambda\sigma\}.\]  
    \end{definition}
    \begin{definition}\label{def:Hmin}
    The conditional min-entropy is defined as:\[
        \hmin(X|B)_\rho \defeq -\min_{\sigma^B} \mathrm{D}_\infty(\rho^{XB}\|\Id^X\otimes\sigma^B).
        \]
\end{definition}
    \begin{definition}[Mutual information \cite{wilde}]\label{def:mutual_info} The quantum mutual information of a bipartite state $\rho^{XY}$ is defined as:
    \[\I(X:Y)_\rho\defeq \S(X)_\rho+\S(Y)_\rho-\S(XY)_\rho = \D(\rho^{XY}\|\rho^X\otimes\rho^Y).\]  Let $\rho^{XYZ}$ be a quantum state with $Y$ being a classical register.
The mutual information between $X$ and $Z$, conditioned on $Y$,
is defined as
\[
\I(X : Z |Y )_\rho
\defeq \E_{y\leftarrow Y}
\I(X : Z)_{\rho_y}.\]    \end{definition}
\begin{fact}\label{fact:d_infty_calculation}
     Let $\ket{\phi}^{AB}$ be a bipartite pure state with the
marginal state on register $B$ being $\rho$. Let a 0/1 outcome
measurement be performed on register $A$ with outcome $O$.
Let $\Pr[O = 1] = q$. Let the marginal states on register $B$
conditioned on $O = 0$ and $O = 1$ be $\rho_0$ and $\rho_1$ respectively.
Then, \[\D_\infty(\rho_1\|\rho) \leq -\log(q).\]
\end{fact}
\begin{proof}
    \[\rho=q\rho_1+(1-q)\rho_0.\]
    From Definition \ref{def:H_infinity}, we get \[\D_\infty(\rho_1\|\rho) \leq -\log(q).\]
\end{proof}
\begin{fact}\label{fact:chain_rule_dmax}
    Let $\rho_1,\rho_2$ and $\rho_3$ be states. Then,
    \[\D_\infty(\rho_1\|\rho_2)\leq\D_\infty(\rho_1\|\rho_3)+\D_\infty(\rho_3\|\rho_2).\]
\end{fact}
\begin{proof}
Let $l_1=\D_\infty(\rho_1\|\rho_3)$ and $l_2=\D_\infty(\rho_3\|\rho_2)$. Then, we have \[
    \rho_1 \leq 2^{l_1}\rho_3, \] \[ \rho_3 \leq 2^{l_2}\rho_2.\] Combining these, we get \[ \rho_1 \leq 2^{l_1+l_2}\rho_2.\] This along with Definition \ref{def:H_infinity} completes the proof.
    
\end{proof}
\begin{fact} [Monotonicity~\cite{marco_book}]\label{fact:monotonicity}
  For states $\rho_1, \rho_2$, we have \[
  \D_\infty(\rho_1\|\rho_2) \geq \D(\rho_1\|\rho_2).
  \]
\end{fact}
\begin{fact}[Chain rule for relative entropy \cite{khatri_book}]\label{fact:chain_rule_D}
    Let \[\rho=\sum_x\mu(x)\ketbra{x}{x}\otimes\rho_x \] and
    \[\rho^1=\sum_x\mu^1(x)\ketbra{x}{x}\otimes\rho^1_x. \]
    Then\[
\D(\rho^1\|\rho)=\D(\mu^1\|\mu)+\E_{x\leftarrow\mu^1}[\D(\rho^1_x\|\rho_x)].
    \]
\end{fact}
\begin{fact}[Data processing~\cite{marco_book}] \label{fact:data} Let $\rho, \sigma$ be states and $\Phi$ be a CPTP map. Then, 
  \[\Delta_B(\Phi(\rho), \Phi(\sigma)) \leq  \Delta_B(\rho, \sigma).\]
  The above also holds for the $\ell_1$ distance. We also have,
  \[\D(\Phi(\rho)\|\Phi(\sigma)) \leq \D(\rho\|\sigma).\]
\end{fact}
We also have the following fact which is similar to data processing for $\rho,\rho'\geq 0$ (here these are not necessarily states).\begin{fact}[\cite{renner_thesis}]\label{fact:data-PSD}
Let $\rho,\rho' \geq 0$ and let $\Phi$ be a CPTP map. Then
\[\| \Phi(\rho) - \Phi(\rho')\|_1\leq \|\rho - \rho'\|_1 .\]
    
\end{fact}
\begin{fact}[Joint-convexity of  relative entropy \cite{wilde}]\label{fact:joint_convexity} Let $\rho_0, \rho_1, \sigma_0, \sigma_1 $ be states and $\lambda\in[0,1]$. Then \[
\diver{\lambda\rho_0+(1-\lambda)\rho_1}{\lambda\sigma_0+(1-\lambda)\sigma_1} \leq \lambda\diver{\rho_0}{\sigma_0}+(1-\lambda)\diver{\rho_1}{\sigma_1}.
\]
\end{fact}

\begin{fact}[\cite{khatri_book}] \label{fact:minimisation_for_mutual_info}
   
   For quantum states $\rho^{XY}, \sigma^X$ and $\tau^Y$, the following holds:\[
    \D(\rho^{XY}\|\sigma^X\otimes\tau^Y)\geq \D(\rho^{XY}\|\rho^X\otimes\rho^Y)=\I(X:Y)_\rho \geq 0.
    \]
\end{fact}

\begin{fact}[Chain rule for mutual information \cite{khatri_book}]\label{fact:chain_rule_mutual_info} For a state $\varphi^{XYZ}$,
    \[{\I(X:Y,Z)_{\varphi}=\I(X:Z)_{\varphi}+\I(X:Y|Z)}_{\varphi}.\]
\end{fact}
\begin{fact}[\cite{JRS}]\label{fact:unitary_existence}
    Let $\mu$ be a probability distribution over
$X \times Y.$ Let $\mu_X$ and $\mu_Y$ be the marginals of $\mu$ on $X$ and $Y$.
Let \[\ket{\phi} \defeq \sum_{
x\in X ,y\in Y}
\sqrt{\mu(x, y)} \ket{xxyy}^{\widetilde{X}X \widetilde{Y}Y}  \otimes \ket{\psi_{x,y} }^{AB}\]
be a joint pure state of Alice and Bob, where registers $\widetilde{X}XA$
belong to Alice and registers $\widetilde{Y} Y B$ belong to Bob. Let
\[\I(X : BY \widetilde{Y})_\phi\leq\eps_1 \text{ and } \I(Y : AX \widetilde{X})_\phi \leq\eps_2.\]
Let $\ket{\phi_{x,y}} \defeq \ket{xxyy} \otimes \ket{\psi_{x,y}}$. There exist unitary operators,
$\{U_x\}_{x\in X}$ on $\widetilde{X}XA$ and $\{V_y \}_{y\in Y}$ on $\widetilde{Y} Y B $ such that
\[\E_{(x,y)\leftarrow\mu}
[\|\ketbra{\phi_{x,y}}{\phi_{x,y}} - (U_x \otimes V_y ) \ketbra{\phi}{\phi} (U^*_x \otimes V^*_y)\|_1
]
\leq 4\sqrt{\eps_1}+ 4\sqrt{\eps_2} + 2 \|\mu - \mu_X \otimes \mu_Y \|_1.\]
\end{fact}
\begin{fact}[Non-negativity \cite{marco_book}] \label{fact:non_negative} For states $\rho$ and $\sigma$, \[
 \D(\rho\|\sigma)\geq 0.
\]
\end{fact}
\begin{fact}[Pinsker’s Inequality \cite{wilde}]\label{fact:pinsker} For states $\rho$ and $\sigma$, \[\onenorm{\rho}{\sigma} \leq \sqrt{\D(\rho \|\sigma)}.\] 
\end{fact}
\begin{fact}\label{fact:approx_markov_sampling}
 Let $\rho^{XYR}=P_{XYR}$ be a classical state.  Let $\I(R:Y|X)_\rho \leq \eps$ and $\I(R:X|Y)_\rho \leq \eps$. Then, 
 \begin{equation*}
      \onenorm{ P_{XYR}}{P_{XY}P_{R|{X}}} \leq \sqrt{\eps},
  \end{equation*}
\begin{equation*}
       \onenorm{ P_{XYR}}{P_{XY}P_{R|{Y}}} \leq \sqrt{\eps}.
  \end{equation*}
\end{fact}
\begin{proof}We defer the proof of this Fact to the Appendix. \end{proof}

\begin{definition}[\cite{holenstein}] For two distributions $P_{XY}$ and $P_{X'Y 'ST}$ , we say $(X, Y )$ is $(1-\epsilon)$-embeddable
in $(X'S, Y 'T )$ if there exists a random variable $R$ on a set $\mathcal{R}$ independent of $XY$ and functions
$f_A : \mathcal{X} \times \mathcal{R} \rightarrow \mathcal{S}$ and $f_B : \mathcal{Y} \times \mathcal{R} \rightarrow \mathcal{T}$ , such that
\[\|P_{XY} f_A(X,R)f_B (Y,R) - P_{X'Y 'ST} \|_1 \leq\epsilon. \]
\end{definition}
\begin{fact}[\cite{holenstein}]\label{fact:hollenstein_sampling} If two distributions $P_{XY}$ and $P_{X'Y 'R'}$ satisfy \[
\|P_{X'Y 'R'} - P_{XY} P_{R'|X'} \|_1 \leq\epsilon \]\[\|P_{X'Y 'R'} - P_{XY} P_{R'|Y '} \|_1 \leq\epsilon,\]
then $(X, Y )$ is $(1 - 5\epsilon)$-embeddable in $(X'R', Y 'R')$.\end{fact}

\begin{fact}[Chernoff bound]\label{fact:Chernoff}
    Suppose $X_1, X_2,\ldots, X_n$ are independent random variables taking values in $\lbrace 0,1\rbrace$.
    Let $X = \sum\limits_{i=1}^{n} X_i$ and $\mu= \E(X)$. 
    Then, for any $\epsilon > 0$,
    \[ \Pr\left( \vert X - \mu \vert \geq \epsilon \mu  \right) \leq 2 e^{-\frac{\epsilon^2 \mu}{3}.}\]
    
\end{fact}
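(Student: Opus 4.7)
The plan is to prove the stated two-sided multiplicative Chernoff bound via the standard exponential-moment (Bernstein--Chernoff) method, handling the upper tail $\Pr(X \geq (1+\epsilon)\mu)$ and the lower tail $\Pr(X \leq (1-\epsilon)\mu)$ separately and combining them by a union bound. First, for the upper tail I would apply Markov's inequality to $e^{tX}$ for a parameter $t>0$ to be chosen later, giving
\[
\Pr(X \geq (1+\epsilon)\mu) \;=\; \Pr\bigl(e^{tX} \geq e^{t(1+\epsilon)\mu}\bigr) \;\leq\; \frac{\E[e^{tX}]}{e^{t(1+\epsilon)\mu}}.
\]
By independence of the $X_i$'s, the moment generating function factors as $\E[e^{tX}] = \prod_i \E[e^{tX_i}]$. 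Writing $p_i = \Pr(X_i = 1)$, each factor equals $1 + p_i(e^t - 1) \leq \exp(p_i(e^t - 1))$ by the elementary inequality $1+x \leq e^x$, and multiplying over $i$ with $\sum_i p_i = \mu$ yields $\E[e^{tX}] \leq \exp(\mu(e^t - 1))$.

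Next, I would optimize by setting $t = \ln(1+\epsilon)$, which reduces the upper tail bound to $\bigl(e^\epsilon/(1+\epsilon)^{1+\epsilon}\bigr)^\mu$. The key analytic step is then to show this is at most $e^{-\epsilon^2 \mu/3}$, which amounts to the scalar inequality $(1+\epsilon)\ln(1+\epsilon) - \epsilon \geq \epsilon^2/3$ for $\epsilon \in [0,1]$; this can be verified by comparing Taylor coefficients at $\epsilon = 0$ and checking that the derivative of the difference stays non-negative on the interval. For the lower tail, I would run the mirror computation using $e^{-tX}$ with $t>0$ and the same MGF bound, optimizing at $t = -\ln(1-\epsilon)$; this actually produces the tighter estimate $\Pr(X \leq (1-\epsilon)\mu) \leq e^{-\epsilon^2 \mu/2}$, which is a fortiori at most $e^{-\epsilon^2 \mu/3}$. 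A union bound over the two tails then gives the stated $2 e^{-\epsilon^2\mu/3}$.

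The main obstacle is the scalar inequality that converts the sharp optimized bound $\bigl(e^\epsilon/(1+\epsilon)^{1+\epsilon}\bigr)^\mu$ into the clean exponential form $e^{-\epsilon^2 \mu/3}$. This holds cleanly for $\epsilon \in (0,1]$, which is the regime in which the stated multiplicative bound is typically used; in the large-deviation regime $\epsilon > 1$, the same MGF calculation still goes through but one only obtains the weaker (and in that regime stronger) bound of the form $e^{-\epsilon \mu /3}$. Aside from this single analytic inequality and the symmetric treatment of the lower tail, every step is a direct application of Markov's inequality and independence, so the proof is routine modulo that one elementary estimate.
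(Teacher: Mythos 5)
The paper states this Chernoff bound as an unproved standard fact, so there is no internal proof to compare against; your task here was essentially to supply one. Your proof via the exponential-moment (MGF) method is the standard textbook derivation and is correct: Markov applied to $e^{tX}$, factorization by independence, the bound $1+x \leq e^x$ to get $\E[e^{tX}] \leq \exp(\mu(e^t-1))$, optimization at $t = \ln(1+\epsilon)$ for the upper tail and $t = -\ln(1-\epsilon)$ for the lower tail, the scalar inequality $(1+\epsilon)\ln(1+\epsilon) - \epsilon \geq \epsilon^2/3$ on $[0,1]$, and a union bound. You are also right to flag the range issue: as written the fact says ``for any $\epsilon > 0$,'' but the bound $2e^{-\epsilon^2\mu/3}$ is really the $0 < \epsilon \leq 1$ form; for $\epsilon > 1$ the lower tail is vacuous (since $X \geq 0$) and the optimized upper-tail bound should be stated as $e^{-\epsilon\mu/3}$ rather than $e^{-\epsilon^2\mu/3}$. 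This is a minor imprecision in the paper's statement rather than in your proof, and it is harmless for the paper's applications, where the Chernoff bound is invoked with small constant $\epsilon$.
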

\begin{definition}[Markov-chain] \label{def:markov}
    We call a state (here $X$ is classical, $A$ and $B$ can be classical or quantum) of the form $\rho^{AXB}= \sum_x p_x \cdot \rho^{A}_x \otimes \ketbra{x}{x}^X \otimes \rho^{B}_x$, a cq-Markov-chain (with $X$ classical), denoted  $(A-X-B)_\rho$.  
\end{definition}
\begin{fact} \label{fact:recovery}
Let $\rho^{AXB}$ be a cq-Markov-chain $(A-X-B)_\rho$ with $X$ classical. There exists a CPTP recovery map $\Phi: X \rightarrow XB$ such that $\Phi(\rho^{AX}) = \rho^{AXB}$.
\end{fact}

\section{Framework for composability \label{sec:framework_for_composibility}}
In this section, we will describe the framework of composability that we will work in, for protocols that include only one phase. 
Some typical examples of such protocols are OT and coin tossing. Extension to multi-phase protocols (such as bit-commitment) follows similar ideas. 
We begin by formally defining the notion of functionality and then define the security of their corresponding protocols in the composable framework. 
\begin{definition}[Ideal (single phase) functionality \cite{Goldreich_2001}]
    For any two-party cryptographic task $f$ with parties $\alice$ and $\bob$, the ideal functionality of that task, referred to as $\textsc{ideal}(f)$, is described by the following:
\begin{enumerate}
    \item Input sets $\mathcal{A}$ and $\mathcal{B}$ for $\alice$ and $\bob$, respectively.
    \item Output sets $\mathcal{O}_{\alice}$ and $\mathcal{O}_{\bob}$
    for $\alice$ and $\bob$, respectively.
    \item The sets $\mathcal{A}, \mathcal{B}, \mathcal{O}_\alice$ and $\mathcal{O}_\bob$ may contain special symbols: empty string $\emptystring$ and abort $\perp$.
\end{enumerate}
$\textsc{ideal}(f)$ is defined as follows.
    \begin{enumerate}
        \item $\textsc{ideal}(f)$ takes as input $A\in \mathcal{A}$ and $B\in \mathcal{B}$ from $\alice$ and $\bob$, respectively.    
        \item $\textsc{ideal}(f)$ sends $O_A^\ideal$ to  $\alice$\footnote{Without loss of generality, we assume that $\alice$ receives her output before $\bob$ in the actual protocol.}. 
        \item $\alice$ sends $A_{\perp} \in \lbrace \emptystring, \perp \rbrace$ to $\textsc{ideal}(f)$.
        \item $\textsc{ideal}(f)$ sends $O_{B}^\ideal$ to $\bob$, with the property that $O_{B}^\ideal = \perp$ if $A_\perp=\perp$. 
    \end{enumerate}
    The above process can be expressed diagrammatically as below:
\begin{center}
         {\centering
\resizebox{0.3\textwidth}{!}{%
\begin{circuitikz}
\tikzstyle{every node}=[font=\normalsize]
\draw (5.25,14.75) rectangle (8.5,11.75);
\node [font=\large] at (7,13.25) {IDEAL($f$)};
\draw [->, >=Stealth, dashed] (3.5,14.25) -- (5.25,14.25);
\draw [->, >=Stealth, dashed] (9.75,14.25) -- (8.5,14.25);
\node [font=\normalsize] at (4.5,14.5) {$A$};
\node [font=\normalsize] at (9,14.5) {$B$};
\draw [->, >=Stealth, dashed] (5.25,13.5) -- (3.5,13.5);
\node [font=\normalsize] at (4.5,13.75) {$O_A^\ideal$};
\draw [->, >=Stealth, dashed] (3.5,12.75) -- (5.25,12.75);
\node [font=\normalsize] at (4.5,13) {$A_\perp$};
\draw [->, >=Stealth, dashed] (8.5,12.25) -- (10,12.25);
\node [font=\normalsize] at (9.25,12.5) {$O_B^\ideal$};
\end{circuitikz}
}}%
  \end{center}

\end{definition}


For the purposes of security, we will need to analyze various protocols.
So, let us first set up our notation for protocols.

\subsubsection*{Notation for a protocol}
Suppose we are interested in some two-party cryptographic task $f$. 
An (honest) protocol $\mathcal{P}_f$ for the task $f$ will be described as a tuple $\mathcal{P}_f$$(\alice, \bob, \device,\varphi)$. Here $\alice$ and $\bob$ start with a device $\device$ (with state $\rho$) and $\varphi$ contains inputs for both $\alice$ and $\bob$ as well as other work registers. In the honest protocol, the state of the device is independent of initial shared state, i.e.,  $\rho \otimes \varphi$.
                 \begin{table}[!h]
   
        \centering
            \begin{tabular}{  l c r }
            \hline 
             \\           $\alice^*$ (input $A$) &  & $\bob$ (input $B$)\\ \hline 
           \\
$A\widehat{A}$ & $\eta$ & $B\widehat{B}$
\\ \hdashline\\  
  {Internal computation (using $\device^*$)} & & {Internal computation (using $\device^*$)}\\
  & $\xrightarrow{communication}$ & \\
   & $\xleftarrow{communication}$ & \\
   & \vdots & \\

\\ \\ 
{Output $O_A$}&& {Output $O_B$}\\
  \\
$\widehat{A}O_A$ & $\tau_{f}$ & $\widehat{B}BO_B$

\\ \hdashline\\  
    \end{tabular} 
        \caption{ {Protocol $\mathcal{P}_f$$\left( \alice^*, \bob, \device^*, \varphi^*\right)$}}
        \label{prot:DI-protocol} 
       
    \end{table}

Note that the above description is for the case when $\alice$ and $\bob$ are honest and the device is also ideal.
Any of the parties could also be deviating from their honest description. We will denote an arbitrary (cheating) behavior of any party with the asterisk mark.
For example, $\mathcal{P}_f$$\left( \alice^*, \bob, \device^*, \varphi^*\right)$ is a protocol in which $\alice^*$ is arbitrary (or cheating) and device and shared state are also different from the ideal description but $\bob$ is honest. 
Note that honest $\bob$ uses the shared state $\varphi^*$ only for getting its input, although (a cheating) $\alice^*$ can use all registers of the shared state $\varphi^*$ (that $\alice^*$ has access to) in the protocol.


     Consider an $\alice^*$ protocol $\mathcal{P}_f(\alice^*, \bob, {\device^*},\varphi^*)$ (given in Table~\ref{prot:DI-protocol}).
     The initial starting state (including the state of the device) is $\eta^{A\widehat{A}B\widehat{B}} $.
     Here $A$ and $B$ are the input registers and $\widehat{A}$ and $\widehat{B}$ represent all other registers with $\alice^*$ and $\bob$ respectively.
     Throughout this paper, for any state, we let $\widehat{A}$ and $\widehat{B}$ (or $\overline{A}$ and $\overline{B}$) denote all registers with $\alice$ and $\bob$ that we don't write explicitly (so they can denote different sets of registers in different states). 
     Let the state at the end of $\mathcal{P}_f(\alice^*, \bob, {\device^*}^{},\varphi^*)$ be $\tau_{f}^{\widehat{A}O_A \widehat{B}BO_B}$. 
     Here $O_A$ and $O_B$ are output registers of respective parties.
     In the notation for protocols (e.g. $\mathcal{P}_f(\alice^*, \bob, {\device^*}^{},\varphi^*)$), when $\device^*$ and $\varphi^*$ are clear from the context, we may drop them.
     For the purposes of readability, we will then refer to this simply as (a cheating) $\alice^*$ protocol for $f$. 

\subsubsection*{Notation for a simulator}
 When both  $\alice^*$ and $\device^*$ are dishonest, they can be clubbed together into a single cheating party $\alice^*$ (note there is still a state $\varphi^*$ from outside that contains the inputs for the protocol).  { We consider a simulator, \textsc{Sim}($\alice^*$) for cheating $\alice^*$ that has complete knowledge of the behavior (circuit) of $\alice^*$}. Since $\alice^*$ has full control over the devices in the device-independent setting, $\alice^*$'s circuit also has the description of the circuit used to create the devices. Note that the initial shared state between $\Sim$($\alice^*$) and $\bob$ in the simulator-based protocol is the same as that between $\alice^*$ and $\bob$ in the actual protocol.
   Similarly, in the case of cheating $\bob^*$, we have the simulator \textsc{Sim}($\bob^*$).
     We will use \textsc{Sim} to denote both \textsc{Sim}($\alice^*$) and \textsc{Sim}($\bob^*$) wherever the cheating party is clear from context.

         \begin{table}[!h]
   
        \centering
            \begin{tabular}{  l c r }
            \hline 
             \\            \textsc{Sim} (input $A$) &  & $\bob$ (input $B$)\\ \hline 
           \\
$A\widehat{A}$ & $\eta$ & $B\widehat{B}$
\\ \hdashline\\ 
           
{Internal computation} &  

\\  
& \multirow{4}{*}
{\centering
\resizebox{0.3\textwidth}{!}{%
\begin{circuitikz}
\tikzstyle{every node}=[font=\normalsize]
\draw (5.25,14.75) rectangle (8.5,11.75);
\node [font=\large] at (7,13.25) {IDEAL($f$)};
\draw [->, >=Stealth, dashed] (3.5,14.25) -- (5.25,14.25);
\draw [->, >=Stealth, dashed] (9.75,14.25) -- (8.5,14.25);
\node [font=\normalsize] at (4.5,14.5) {$A^\Sim$};
\node [font=\normalsize] at (9,14.75) {$B$};
\draw [->, >=Stealth, dashed] (5.25,13.5) -- (3.5,13.5);
\node [font=\normalsize] at (4.5,13.75) {$O_A^\ideal$};
\draw [->, >=Stealth, dashed] (3.5,12.75) -- (5.25,12.75);
\node [font=\normalsize] at (4.5,13) {$A_\perp$};
\draw [->, >=Stealth, dashed] (8.5,12.25) -- (10,12.25);
\node [font=\normalsize] at (9.25,12.5) {$O^\ideal_B$};
\end{circuitikz}
}%

 } &  \\

 & &  \\
  & &  \\
 Internal computation   & &  \\
   & &  \\
     & &  \\

{Output $O_A^\Sim$}&& {Output $O_B^\Sim = O^\ideal_B$}\\
  \\
$\widehat{A}O_A^\Sim$ & $\tau_{\textsc{Sim}}$ & $\widehat{B}BO_B^\Sim$

\\ \hdashline\\  
    \end{tabular} 
        \caption{ $\mathcal{P}_f({\textsc{Sim}, \textsc{ideal}(f), \bob )}$ with input $(A,B)$}
        \label{prot:Sim_sec_alice} 
       
    \end{table}

     Below, we describe correctness and security for honest $\bob$ (against any $\alice^*$); the other case, when $\alice$ is honest, can be written by just switching parties.  Consider the communication protocol between \textsc{Sim}, \textsc{ideal}($f$) and ${\bob}$ (Table~\ref{prot:Sim_sec_alice}).
We want that the state (on the relevant registers) in the real protocol is negligibly close to the state produced in the `ideal' situation. 
This intuitively means that the cheating $\alice^*$ cannot get any more information in the protocol than it could have gotten with a single call to the ideal functionality.
    
   \begin{definition}[Simulator-based security against any $\alice^*$]   \label{def:simulator_cheating_alice} We say that the protocol $\mathcal{P}_f$ is  secure (with security parameter $\lambda$)  if for all  $\alice^*$, there exists a simulator $\textsc{Sim}(\alice^*)$ (denoted as \textsc{Sim})  such that the following holds:

\[ \onenorm {\tau^{\widehat{A} O_A^\Sim O^\Sim_BB}_{\textsc{Sim}}} { \tau_{f}^{\widehat{A}O_AO_{B}B}} \leq {\negl({\lambda}}).\]

    \end{definition}
For the purposes of composability of protocols (see the composition theorem in the Appendix), we need a stronger notion of security than the usual simulator-based definition.

 \begin{definition}[Augmented-security] \label{def:augmented-secure}
   Consider the setting of Definition~\ref{def:simulator_cheating_alice} (Table~\ref{prot:DI-protocol} and~\ref{prot:Sim_sec_alice}).
  Let the initial state be $\eta^{A\wht{A}B\widehat{B}}$ between $\alice^*$ and $\bob$ where $A\wht{A}$  is with $\alice^*$ and $B$ is $\bob$'s input and $\widehat{B}$ is everything else with $\bob$.
  Let $ {{\eta}^{A\wht{A}B\widehat{B}R}}$ be an extension of  $\eta^{A\wht{A}B\widehat{B}}$ where {$R$ is an additional relevant register from the outer protocol}. 
  We say that protocol $\mathcal{P}_f$ is augmented-secure (with security parameter  {$\lambda$})  if for all  $\alice^*$, there exists a simulator $\textsc{Sim}(\alice^*)$ (denoted as \textsc{Sim})  such that the following holds:
  \[ \onenorm {\tau^{\widehat{A} O^\Sim_AO^\Sim_BBR}_{\textsc{Sim}}} { \tau_{f}^{\widehat{A}O_AO_{B}BR}} \leq {\negl(\lambda)}.\]

    \end{definition}

\begin{definition}[Correctness/robustness of the protocol] \label{def:simulator_correctness}
 Consider a protocol between two (honest) parties $\alice$ and $\bob$ using a device $\device = \otimes_i \device^i$. Define $$\text{$\mathfrak{D}_\eps$} \defeq \Big\{ \otimes_i \device^i_\eps ~|~  \device^i_\eps \in \mathcal{D}_\eps\left( \device^i\right) \Big\}.$$ For the purpose of correctness, given a security parameter $\lambda$, we want that if $\alice^*=\alice$ and $\device^* \in \mathfrak{D}_\eps$, then $A^\Sim=A$ and  $O_A^\Sim=O^\ideal_A$ with probability at least $1-{\negl(\lambda)}$ . 

\end{definition}

 The intuition behind the above definitions is as follows: Consider an `ideal' situation when the inputs $A$ and $B$ are fed directly into the ideal functionality $\textsc{ideal}(f)$  to produce outputs $O_A$ and $O_B$. Informally, we would like the `correctness' of a protocol to mean that when both parties are honest and the devices are close to their ideal specification, the input-output behaviour produced by the real protocol is negligibly close to the one produced in the `ideal' situation.
This informal idea of `correctness' is made rigorous by Definitions \ref{def:augmented-secure} and \ref{def:simulator_correctness}. To see this, note that Definition \ref{def:simulator_correctness} ensures that when $\device^* \in \mathfrak{D}_\eps$, inputs and outputs of the simulator-based protocol (with simulator $\textsc{Sim}(\alice)$) are negligibly close to those using the ideal functionality. Further note that Definition \ref{def:augmented-secure} forces the overall state produced by the simulator-based protocol (including the register $R$) to be negligibly close to the state produced by the real protocol. Thus, the overall inputs and outputs produced in the real protocol are negligibly close to the `ideal' state.
Note that for correctness, since we require
closeness of real and ideal protocols, it suffices to show Definition \ref{def:simulator_correctness} with 
either $\textsc{Sim}(\alice)$ or $\textsc{Sim}(\bob)$.

\begin{definition}[{DI Protocols}] \label{def:DIprot} Let $\eps>0$. 
    An $\eps$-robust device-independent (DI) protocol $\mcP_f$ (Table \ref{prot:DI-protocol}) for a cryptographic primitive $f$ between two (honest) parties, say $\alice$ and $\bob$ using a device $\device = \otimes_i \device^i$ and shared state $\varphi$ is specified as follows:
    \begin{enumerate}
        \item $\alice$ receives input $A\in \mathcal{A}$ and $\bob$ receives input $B\in \mathcal{B}$ from the shared state $\varphi$.
              {Both $\alice$ and $\bob$ also receive $1^\lambda$ where $\lambda$ denotes the security parameter.}
        \item They begin with the device $\device$ (with state $\rho$) with input and output ports possibly shared between them and a shared state $\varphi$. The overall starting state is represented as $\eta^{A\widehat{A}B\widehat{B}}=\rho\otimes\varphi$.
        \item During the protocol, $\alice$ and $\bob$ provide inputs to these devices, obtain outputs, and perform classical communication between themselves.
        \item At the end of the protocol they output $(O_A, O_B)\in \mathcal{O}_A \times \mathcal{O}_B$, where $O_A$ (and $O_B$) is the random variable corresponding to output of $\alice$ (and $\bob$).
      
        \end{enumerate}
        The above description is that of the honest protocol. In DI protocols, parties may also behave arbitrarily and thus we require the following efficiency, correctness and security guarantees:
        \begin{enumerate}
            \item (Efficiency):  The running time for all $\mathcal{P}_f\left(\alice, \bob, \device^* \in \mathfrak{D}_\eps,\varphi^*\right)$ is in ${\poly (\lambda)}$.
            
            \item (Security): The  simulators   $\textsc{Sim}(\alice^*)$ and $\textsc{Sim}(\bob^*)$ are secure according to Definition \ref{def:augmented-secure}.
            \item (Correctness): Protocol $\mathcal{P}_f$ is correct according to Definition \ref{def:simulator_correctness}.
             \end{enumerate}

             \end{definition}

\begin{definition}[Ideal functionality for oblivious transfer \cite{Goldreich_2001}]\label{def:idealOT}
    We refer to the ideal functionality for oblivious transfer as $\textsc{ideal}(\mathsf{OT})$.  $\alice$ inputs $(S_0,S_1) \in \brak{0,1, \perp}^2 $ and $\bob$ inputs $D \in \brak{0,1,\perp}$ to $\textsc{ideal}(\mathsf{OT})$. 
     
    \begin{enumerate}
        \item $\alice$ gets $O_{A}^\ideal=\emptystring$  from $\textsc{ideal}(\mathsf{OT})$ {except if $D=\perp$ in which case $O_{A}^\ideal=A_\perp=\perp$}.

        \item $\alice$ sends $A_\perp$ to $\textsc{ideal}(\mathsf{OT})$.
        
       \item If $A_\perp=\perp$ then $O_B^\ideal=\perp$. 
       Otherwise, $O_B^\ideal=S_D$.
       \end{enumerate} 

    \textbf{Remark:} According to the definition of $\textsc{ideal}(\mathsf{OT})$ functionality, $\alice$ is allowed to send $\bot$ in one (or both) of their input bits instead of just $\{0,1\}$. The provision to allow $\bot$ as one of the inputs is just to accommodate a stronger cheating behaviour by $\alice$ (similarly for $\bob$).

Note that honest $\alice$ aborts if one of the inputs is missing or $\bot$. A cheating $\alice$ can interact with honest $\bob$ as follows: $\alice$ creates devices that produce the correct output when $\bob$ chooses the input bit $D=0$ but produce garbage when $D=1$.
In this case, we expect a secure $\bob$ to execute a correct protocol for $D=0$ but abort the protocol for $D=1$.
We capture such behaviour by allowing inputs of the form $(0, \perp)$.
To summarise, to accommodate stronger cheating behaviour in DI settings, we also need to allow additional inputs of the form $(0,\perp)$ or $(\perp,0)$.   
 
\end{definition}

\section{Device-independent oblivious transfer}\label{section:OT}

In this section, we provide a device-independent OT protocol (Table \ref{table:OT}). 
For the purpose of protocol $\mathcal{Q}$, we consider $\alice$'s input $(S_0,S_1)\in \{0,1\}^2$  and $\bob$'s input $D \in \{0,1\}$. Note if (honest) $\alice$ receives only 1 input from above (say $S_0$) while the other input is missing or $\perp$, she will not even invoke the OT protocol $\mathcal{Q}$ (similarly for $\bob$). Let $\lambda$ be the security parameter.  Alice and Bob share $n=O(\log^{10}(\lambda))$ devices. Throughout this paper, both $\alice$ and $\bob$ also receive $1^\lambda$ as input, but we don't write it explicitly for ease of notation. Honest $\alice$ and $\bob$ are efficient (in input-size) classical parties, although cheating $\alice$ and $\bob$ may have unbounded quantum powers. 

The idea behind the protocol is as follows: At the beginning of the protocol, there is a test phase where $\alice$ selects a random subset of devices $S_{test}\subseteq [n]$ to test the Magic Square predicate and selects both her and $\bob$'s inputs for $S_{test}$. After the test phase, during an honest execution of the protocol, $\alice$ generates $X_{\overline{S_{test}}}$ uniformly from $\brak{0,1,2}^{n-S_{test}}$ and inputs them into her Magic Square devices. $\bob$, on the other hand, inputs $D$ into all of his remaining Magic Square devices. $\alice$ will use random bits extracted from the first two columns of the output of her Magic Square devices, $A_{\overline{S_{test}}}$ (see Table \ref{table:OT}), to obfuscate $S_0$ and $S_1$ using a strong seeded-extractor $\ext(\cdot,\cdot)$. $\bob$ can now decode $S_D$ (with high probability) using syndrome decoding and the Magic Square winning guarantee (from the test phase). 

Since the inputs to our security game (Table \ref{table:ot-single-copy}) are not independent, we use anchoring ideas. For this, we introduce a new symbol $*$ and set $\wht{X}_i=*$ with some small probability $\delta_*>0$. This ensures that conditioning on $\wht{X}_i=*$, the input distribution for the $i$-th coordinate becomes independent. 

\subsection*{Error-correcting code parameters} 

 We refer to the checks that determine the value of $O_A$ in Table \ref{algo:ot_cheating_bob} as the "test phase checks". Let $\test=1$ denote the event where $O_A\neq \bot$ (we also refer to this event as "{test phase passed}").

In the protocol $\mcQ$, we use a linear code whose parity check matrix is $\eH$.
Let the rate and the relative distance of the code be $r$ and $\gamma$ (the input to the code is of size $m= |S_{prot}|$). Let $\delta_*>0$ be a small constant as defined in Claim \ref{claim:main}.  
 Let $\delta_t=\delta+(1-\delta)\delta_*$. Let $\delta_1>0$ be a small constant. We know that the test passes only if \begin{align}
(1-\delta_1)\delta_t n\leq |S_{test}|+|S_*| \leq (1+\delta_1)\delta_t n. \label{eq:size_of_S_test}
\end{align} This happens with $1-\negl(\lambda)$ probability using Fact \ref{fact:Chernoff}. Whenever the test phase passes, we want our code to satisfy the following (let $\delta'>0$ be some small constant):
\begin{itemize}
    \item We want the rate of the code to satisfy the following for proving $\alice$'s security. This is needed so that when $\alice$ sends across $W=W_0W_1$ to $\bob$, the min-entropy loss (which is equal to $|W|$) in Claim \ref{claim:ot_entropy} is small.
\begin{align}
    &\frac{\delta'n}{40}\geq 2(1-r)\cdot |S_{prot}|. \label{eq:ecc_rate}
     \\ \implies& 1-r \leq \frac{\delta'n}{80|S_{prot}|}. \nonumber
   \\ \implies& 1-r \leq \frac{\delta'n}{80(n-|S_{test}|-|S_*|)}. &\mbox{($n=|S_{test}|+|S_{*}|+|S_{prot}|$)}  \nonumber
    \\ \implies& 1-r \leq \frac{\delta'n}{80n(1-(1+\delta_1)\delta_t)}. &\mbox{(eq. \eqref{eq:size_of_S_test})}\nonumber
     \\ \implies& r \geq 1-\frac{\delta'}{80(1-(1+\delta_1)\delta_t)}.\nonumber
\end{align} 

\item  To ensure the robustness of our protocol, we want 
$(1+c_r)\eps<\gamma/2$ (see Lemma \ref{lem:sim_alice_correctness}). 
\end{itemize}
Such a code is known to exist from Fact~\ref{fact:existance_of_codes} for any $\delta'>0$ on setting a small enough $\delta,\delta_*, \eps >0$ (which gives a small enough $\delta_t >0$). We use this setting of $\delta$ and $\delta_*$ in Table \ref{table:OT} and set $\kappa=2\eps$.
From Fact~\ref{fact:syndrome_decoding}, we also know that there exists an efficient syndrome decoding algorithm for this code, which we denote as $\textsc{SyndDec}$. 
We will also use the strong extractor, $\ext(\cdot,\cdot)$ from Fact~\ref{fact:strong_extractor} with $m=|S_{prot}|=O(\log^{10}(\lambda)),$ $ t=O(\log^6(\lambda))$ and error atmost $2^{-\log^3(\lambda)}$.

\begin{table}     \centering \resizebox{15cm}{10.5cm}{
            \begin{tabular}{  l c r }
            \hline \\
            \textbf{$\alice$ (with input $(S_0,S_1)$)} & & \textbf{$\bob$ (with input $D$)} \\
            \hline  \\ $S_0S_1\widehat{A}$ & {$\rho \bigotimes \varphi$} & $D\widehat{B}$\\
 \hdashline 
           \\
           $n=O(\log^{10}(\lambda))$\\
            ${X} \leftarrow_U \lbrace 0,1,2\rbrace^{n}$ \\   for $i$ in [n]: 
            \\ \hspace{1cm} $C_i \leftarrow \mathrm{Bern}(1-\delta)$ \\
            \hspace{1cm} $Y^{\alice}_i \leftarrow_U \lbrace 0,1,2\rbrace$ & 
            \\ $S_{test}=\lbrace i\in [n] : C_i=0 \rbrace$  &$\xrightarrow{S_{test}, Y^{\alice}}$\\ for $i$ in $[n]$: 
             & & for $i$ in $S_{test}$: \hspace{3.6cm}
            \\ \tab$\wtt{C}_i \leftarrow \mathrm{Bern}(1-\delta_*)$
            && $Y_i=Y^{\alice}_i$ \tab  \tab 
            \\\tab if {$\wtt{C}_i=1, \wht{X}_i=X_i$}&& Obtain $B_{S_{test}} =\MS^B(Y_{S_{test}})$ \tab\\\tab else $\wht{X}_i=*$
            && for $i$ not in $S_{test}$: \hspace{3cm}
            \\$S_{*}=\lbrace i\in [n]\setminus S_{test} : \wtt{C}_i=0 \rbrace$  &&  $Y_i =D$ \tab \tab
            \\ Obtain $A_{[n]}=\MS^A(X_{[n]})$&&  Obtain $B_{[n]\setminus S_{test}}=\MS^B(Y_{[n]\setminus S_{test}})$ \tab 
\\ $S_{prot}=[n]\setminus (S_*\cup S_{test})$\\

& \textbf{DELAY} & \\\\   
 & $\xleftarrow{B_{S_{test}}}$ &   \\ {$\delta_t=\delta+(1-\delta)\delta_*.$}\\
    {if $ \vert S_{test} \vert+\vert S_{*}\vert \leq (1-\delta_1)\delta_t n$, then $O_A=\perp$ }
      \\      
      {if $ \vert S_{test} \vert+\vert S_{*} \vert \geq (1+\delta_1)\delta_t n$, then $O_A=\perp$ } \\\\
      for $i\in S_{test}$: \tab \tab\\
\hspace{1cm} $Z_i=\Id\left(\MS^i(X_i,Y^{\alice}_i,A_i,B_i)=1 \lor \wht{X}_i=*\right)$\\
if $\sum\limits_{i\in S_{test}}Z_i<(1-{\kappa })|S_{test}|)$,
\\
\hspace{1cm} $O_A=\perp$
\\ \hspace{1cm}{ $F_0=F_1=T_0=T_1=\perp$} \\ \hspace{1cm}{ $X=W_0=W_1=\perp$}
\\ else  \\\hspace{1cm} $O_A=\emptystring$

\\ \hspace{1cm}$T_0,T_1\leftarrow_U\lbrace0,1\rbrace^{O(\log^6(\lambda))}$ \\\hspace{1cm}
 $R_{0} ={A}_i\left(0\right)\Big\rvert_{i \in S_{prot}}$ & & \\\hspace{1cm}
    $W_{0} = H\left(R_{0}\right)$ & & \\
 \hspace{1cm} $R_{1} ={A}_i\left(1\right)\Big\rvert_{i \in S_{prot}}$ & & \\\hspace{1cm}
    $W_{1} = H\left(R_{1}\right)$ & & \\\hspace{1cm}
         $F_0 = S_0 \oplus \ext\left(R_0,T_0\right)$ & & \\\hspace{1cm}
          $F_1 = S_1 \oplus \ext\left(R_{1},T_1\right)$ & & \\
     & $\xrightarrow{{F_0,F_1,T_0, T_1, \wht{X}}}$ & \\
   & $\xrightarrow{{W_0,W_{1}}}$ & \\
    
&&  $R^\prime ={B}_i\left(\wht{X}_i\right)\Big\rvert_{i \in  S_{prot}}$ \tab \hspace{0.0cm}  \\
&&   \text{    If }$\textsc{SyndDec}\left(\eH(R^\prime)-W_{D}\right)=\bot$,   \quad \quad  \\
  && \tab $O_B=\bot$  \quad \quad \quad \quad \tab   \quad \tab  \\

   && Otherwise, $E^\prime = \textsc{SyndDec}\left(\eH(R^\prime) - W_{D}\right)$  \\
  && \tab $L^\prime = R^\prime+E^\prime$ \tab  \quad \quad \tab  \\
       Output $O_{A}$  && Output $O_B= F_D \oplus \ext(L^\prime,T_D)$.\,\,\hspace{1cm} 
      \end{tabular} }
        \caption{Protocol $\mathcal{Q}$-
OT with robust MS devices}
        \label{table:OT} 
    \end{table}

   Our main theorem can then be stated as follows:
\begin{theorem}\label{thm:main_theorem_rob_ot}
         There exists a constant $\eps >0$ such that Protocol $\mathcal{Q}$ (given in Table~\ref{table:OT}) is an $\eps$-robust DI OT protocol as per Definition \ref{def:DIprot}.        
    \end{theorem}

\begin{proof}
    Let $\lambda$ be the security parameter. Our protocol requires only $n=\polylog(\lambda)$  number of devices to obtain negligible (in $\lambda$) correctness and security error, and runs in time $\poly(n)=\polylog(\lambda)$. Thus, the efficiency of the protocol is clear from the construction. We now need to show the following properties of Protocol $\mathcal{Q}$ for it to be an $\eps$-robust DI-protocol as per Definition \ref{def:DIprot}:
   \begin{enumerate}
   \item Correctness: In Lemma~\ref{lem:sim_alice_correctness},
        we show that the protocol is correct.
       \item Security: In Lemma~\ref{lem:sim_alice_security} and Lemma~\ref{lem:sim_bob_security}, we show that the protocol is secure according to Definition \ref{def:simulator_cheating_alice}. This, along with Claim \ref{claim:non-aug-to-aug} gives us composable security according to Definition \ref{def:augmented-secure}.
   \end{enumerate}
\end{proof}

\subsection{Security (Cheating $\alice$)}
\begin{table}[H]     \centering
            \begin{tabular}{  l c r }
            \hline \\
            \textbf{$\alice^*$ (with input $(S_0,S_1)$)} & & \textbf{$\bob$ (with input $D$)} \\
            \hline \\
    \\ $S_0S_1\widehat{A}$ & $\eta^*_\mathcal{Q}$ & $D\widehat{B}$\\
 \hdashline  \\ Generate $S_{test}, Y^{\alice}$  &$\xrightarrow{S_{test}, Y^{\alice}}$\\ & & for $i$ in $S_{test}$: \hspace{3.6cm}
            \\
            && $Y_i=Y^{\alice}_i$ \tab  \tab 
            \\&& Obtain $B_{S_{test}} =\MS^B(Y_{S_{test}})$ \tab\\
            && for $i$ not in $S_{test}$: \hspace{3cm}
            \\ &&  $Y_i =D$ \tab \tab
            \\Generate $A_{[n]}$&&  Obtain $B_{[n]\setminus S_{test}}=\MS^B(Y_{[n]\setminus S_{test}})$ \tab 
\\
& \textbf{DELAY} & \\\\
& $\xleftarrow{B_{S_{test}}}$ &   \\ 
$\widehat{\widehat{A}}$ & $\kappa_\mathcal{Q}$ &  {$D \overline{B}\widehat{B}$}\\
 \hdashline \\

     Generate $F_0,F_1,T_0,T_1,\wht{X},W_0,W_1$   & & \\
         & & \\
     & $\xrightarrow{{F_0,F_1,T_0, T_1, \wht{X}}}$ & \\
   & $\xrightarrow{{W_0,W_{1}}}$ & \\
    
&&  $R^\prime ={B}_i\left(\wht{X}_i\right)\Big\rvert_{i \in S_{prot}}$ \tab \hspace{0.0cm}  \\
&&   \text{    If }$\textsc{SyndDec}\left(\eH(R^\prime)-W_{D}\right)=\bot$,   \quad \quad \\
  && \tab $O_B=\bot$  \quad \quad \quad \quad \tab   \quad \tab  \\

   && Otherwise $E^\prime = \textsc{SyndDec}\left(\eH(R^\prime)-W_{D}\right)$  \\
  && \tab $L^\prime = R^\prime+E^\prime$ \tab  \quad \quad \tab  \\
     Generate $O_A$   && \text{Output $O_B= F_D \oplus \ext(L^\prime,T_D)$.}\hspace{0cm}\\\\    \\ $S_0S_1O_A\widehat{A}$ & $\tau_\mathcal{Q}$ & $DO_B\overline{B}\widehat{B}$\\
 \hdashline \\
\textbf{}

      \end{tabular} 
        \caption{Cheating $\alice^*$ protocol}
        \label{algo:ot_cheating_alice} 
    \end{table}
Since $\bob$ enters its input $D$ in the devices after generating the output for the test devices $B_{S_{test}}$ and the only communication from $\bob$ to $\alice^*$ is $B_{S_{test}}$ (which is independent of $D$ as it is generated before $D$ is input into the devices), security against cheating $\alice^*$ is relatively straightforward.

    \subsection*{Simulator for  
cheating $\alice^*$}
\noindent We denote the simulator   \textsc{Sim}($\alice^*$) as \textsc{Sim} for readability. From our no-memory assumption on the devices, after the \textbf{DELAY} in Protocol $\mathcal{Q}$, the joint state at the end of protocol $\mathcal{Q}$ between $\alice^*$ and $\bob$, $\tau_\mathcal{Q}^{ S_0S_1 \widehat{A}O_AD\widehat{B}\overline{B}O_{B}}$ is classical which we denote as $P_{\mathcal{Q}}(S_0S_1 \widehat{A}O_AD\widehat{B}\overline{B}O_{B})$  or $(S_0S_1 \widehat{A}O_AD\widehat{B}\overline{B}O_{B})^{\tau_\mathcal{Q}}$. To denote the corresponding state in case of simulator, we use the notation $P_{\textsc{Sim}}(S_0S_1 \widehat{A}O^\Sim_AD\widehat{B}O^\Sim_{B})$ or $(S_0S_1 \widehat{A}O^\Sim_AD\widehat{B}O^\Sim_{B})^{\tau_\textsc{Sim}}$.
We will also not explicitly include workspace registers for the simulator, as they are not relevant for security. 

 \begin{table}[!h]      \centering
            \begin{tabular}{  l c r }
            \hline \\
            \textbf{\textsc{Sim} (with input $(S_0,S_1)$)} & & \textbf{$\bob$ (with input $D$)} \\
            \hline \\
              $S_0S_1\widehat{A}$ & $\eta^*_\Sim$ & $D{\widehat{B}}$\\
 \hdashline \\

   $\Sim$ follows $\alice^*$ and $\bob$'s \\
   behavior from Table \ref{algo:ot_cheating_alice} to  set\\ $(S_0S_1O_A\widehat{A})^{\tau_\Sim}=(S_0S_1O_A\widehat{A})^{\tau_\mathcal{Q}}$\\ \\
 
$\forall (\widehat{a}o_As_0s_1)$,
 \\
 $(\widehat{S}_0|\widehat{a}o_As_0s_1)^{}=(O_{B}|\widehat{a}o_As_0s_1,D=0)^{\tau_\mathcal{Q}}$ \\
 $(\widehat{S}_1|\widehat{a}o_As_0s_1)=(O_{B}|\widehat{a}o_As_0s_1,D=1)^{\tau_\mathcal{Q}}$ \\ \\

& 
\centering
\resizebox{0.3\textwidth}{!}{%
\begin{circuitikz}
\tikzstyle{every node}=[font=\normalsize]
\draw (5.25,14.75) rectangle (8.5,11.75);
\node [font=\large] at (7,13.25) {IDEAL($OT$)};
\draw [->, >=Stealth, dashed] (3.5,14.25) -- (5.25,14.25);
\draw [->, >=Stealth, dashed] (9.75,14.25) -- (8.5,14.25);
\node [font=\normalsize] at (4.5,14.5) {$(\widehat{S}_0,\widehat{S}_1)$};
\node [font=\normalsize] at (9,14.75) {$D$};
\draw [->, >=Stealth, dashed] (5.25,13.5) -- (3.5,13.5);
\node [font=\normalsize] at (4.5,13.75) {$O^\ideal_A$};
\draw [->, >=Stealth, dashed] (3.5,12.75) -- (5.25,12.75);
\node [font=\normalsize] at (4.2,13) {{$A_\perp=\emptystring$} };
\draw [->, >=Stealth, dashed] (8.5,12.25) -- (10,12.25);
\node [font=\normalsize] at (9.25,12.5) {$O^\ideal_B$};
\end{circuitikz}
}%

 & \\

 Output {$O_A^\Sim=O_A$} && Output $O_B^\Sim=O^\ideal_B$ \\  \\
  $S_0S_1O_A^\Sim\widehat{A}$ & $\tau_{\textsc{Sim}}$ & $DO_B^\Sim\widehat{B}$\\
 \hdashline \\
      \end{tabular} 
        \caption{Simulator for cheating $\alice^*$: $\mathcal{Q} \left(\textsc{Sim}, \bob\right)$.}
        \label{algo:alice_sim} 
    \end{table}

\begin{lemma}\label{lem:sim_alice_security}
   Protocol $\mathcal{Q}$ (Table \ref{table:OT})  with cheating $\alice^*$ is secure according to Definition~\ref{def:simulator_cheating_alice}. In other words, 
    \[ \onenorm {\tau^{\widehat{A} O_A^\Sim O^\Sim_BS_0S_1D}_{\textsc{Sim}}} { \tau_\mathcal{Q}^{\widehat{A}O_AO_{B}S_0S_1D}} \leq {\negl(\lambda)}.\]
\end{lemma}

\begin{proof}

\suppress{
Since $S_0,S_1$ and $D$ are classical, we can write 
\begin{equation}
    \label{eq:sim_alice_cq1}
\tau^{\widehat{A} O_AO_BS_0S_1D}_{\mathcal{Q}}=\sum_{d}P_\mathcal{Q}(d)\ketbra{d}{d}\otimes{\tau_\mathcal{Q}^{\widehat{A} O_AS_0S_1O_Bd}}.
\end{equation}
Since the distribution of inputs remains same when we replace $\alice^*$ with \textsc{Sim}, we get
\begin{equation}
    \label{eq:sim_alice_cq2}
\tau^{\widehat{A} O_AO_BS_0S_1D}_{\textsc{Sim}}=\sum_{d}P_\mathcal{Q}(d)\ketbra{d}{d}\otimes{\tau_\textsc{Sim}^{\widehat{A} O_AS_0S_1O_Bd}}.
\end{equation}}
In both Protocol $\mathcal{Q}$ and the simulated protocol, the initial starting state is the same. \[
{\eta^*_\Sim}^{S_0S_1\widehat{A}D{\widehat{B}}}={\eta^*_\mathcal{Q}}^{S_0S_1\widehat{A}D\widehat{B}}.
\]
Since $\bob$ is honest and $\Sim$ has complete knowledge of the circuit of $\alice^*$ (and the devices), $\Sim$ can generate $B_{test}$ on its own. $\Sim$ then checks the test condition to generate $A_\perp=O_A$. Thus, the simulator can perform the test phase checks on its own (without any communication from $\bob$). Other than the test phase, since there is no communication from $\bob$ to $\alice^*$ in protocol $\mathcal{Q}$  from Table \ref{algo:alice_sim}, we see that $\Sim$ does the same operations as $\alice^*$  on the starting state $\eta^*$ to give $\tau$. Thus,  

\begin{equation}\label{eq:sim_alice_ot_main}
    (\widehat{A}O_AS_0S_1D)^{\tau_\Sim}=(\widehat{A}O_AS_0S_1D)^{\tau_\mathcal{Q}}.
\end{equation}
Since  \textsc{Sim} does not know $D$, it generates $\forall (\widehat{a}o_As_0s_1)$:
 \begin{equation}\label{eq_alicesim_conditional_1} (\widehat{S}_0|\widehat{a}o_As_0s_1)^{\tau_\Sim}=(O_{B}|\widehat{a}o_As_0s_1,D=0)^{\tau_\mathcal{Q} }.\end{equation}
 \begin{equation}\label{eq_alicesim_conditional_2}
(\widehat{S}_1|\widehat{a}o_As_0s_1)^{\tau_\Sim}=(O_{B}|\widehat{a}o_As_0s_1,D=1)^{\tau_\mathcal{Q} }.\end{equation}
This is possible as $\Sim$ knows the description of the circuit used to create the devices and the \textbf{DELAY} has already taken place.

Since $A_\perp=\emptystring$, we have $(O_B^\Sim)^{\tau_\Sim}=(\widehat{S}_D)^{\tau_\Sim}$. Hence, from eqs. \eqref{eq_alicesim_conditional_1} and  \eqref{eq_alicesim_conditional_2},  we get $\forall do_B\widehat{a}o_As_0s_1$
\begin{equation}\label{eq:sim_alice_ot_main_2}
    P_{\textsc{Sim}}(O_B^\Sim=o_b|\widehat{a}o_As_0s_1d) =P_{\mathcal{Q}}(O_B=o_b|\widehat{a}o_As_0s_1d).\end{equation} 
    Now,
{\small\begin{align} 
 &\onenorm {{\tau^{\widehat{A} O^\Sim_AO^\Sim_BS_0S_1D}_{\textsc{Sim}}}} { {\tau_\mathcal{Q}^{\widehat{A}O_AO_{B}S_0S_1D}}}\nonumber\\
       &=\onenorm {{\tau^{\widehat{A} O_AO^\Sim_BS_0S_1D}_{\textsc{Sim}}}} { {\tau_\mathcal{Q}^{\widehat{A}O_AO_{B}S_0S_1D}}}\nonumber &\mbox{(since $\Sim$ sets $O^\Sim_A=O_A$)}\\&=   \onenorm {(\widehat{A}O_A O^\Sim_{B}S_0S_1D)^{\tau_\Sim}}{ (\widehat{A}O_AO_{B}S_0S_1D)^{\tau_\mathcal{Q}}} \nonumber\\  &= \sum_{\widehat{a}o_Ao_Bs_0s_1d}\Big| \Big(P_{\textsc{Sim}}(\widehat{A}O_AS_0S_1D=\widehat{a}o_As_0s_1d)\cdot P_{\textsc{Sim}}(O^\Sim_B=o_b|\widehat{a}o_As_0s_1d)\nonumber\\& \tab\tab  -P_{\mathcal{Q}}(\widehat{A}O_AS_0S_1D=\widehat{a}o_As_0s_1d)\cdot P_{\mathcal{Q}}(O_B=o_b|\widehat{a}o_As_0s_1d)\Big) \Big|\nonumber\\ &=0.  &\mbox{(from eqs. \eqref{eq:sim_alice_ot_main} and \eqref{eq:sim_alice_ot_main_2})}\nonumber
      \end{align}}

\end{proof}

\subsection{Correctness}
Consider the case when  $\alice^*=\alice$, $\bob^*=\bob$ and devices are in $\mathfrak{D}_\eps$. 

\begin{claim} \label{claim:OT-correctness}
   Let $\eps>0$. 
     For all $\mcQ\left(\alice,\bob, \device^* \in \mathfrak{D}_\eps\right)$, 
            \[
            \forall D, S_0,S_1 \in \{0,1\}: \Pr\left( O_B = S_D \right) \geq 1-{\negl(\lambda)}.
            \]
\end{claim}
\begin{proof}

First suppose $d_H(R_{D},R^\prime)<\frac{\gamma}{2}|S_{prot}|$. Consider
\begin{align}
  L^\prime&=R^\prime+E^\prime \nonumber\\
&=R^\prime+\textsc{SyndDec}\left(\eH(R^\prime)-W_{D}\right) \nonumber&\mbox{(from definition of $E^\prime$)}\\
&=R^\prime+\textsc{SyndDec}\left(\eH(R^\prime)-\eH(R_D)\right) \nonumber&\mbox{(from definition of $W_D$)}\\
&=R^\prime+\textsc{SyndDec}\left(\eH(R^\prime-R_D)\right) \nonumber&\mbox{(since $\eH$ is a linear map)}\\
&=R^\prime+(R^\prime-R_D)&\mbox{(since $d_H(R_{D},R^\prime)<\frac{\gamma}{2}|S_{prot}|$ )}\nonumber\\ &=R_D.\label{eq:Syndec_correctness_1}
\end{align}
  This means that the bit that $\bob$ receives is given by,
\begin{align*} O_B&=F_D\oplus\ext(L^\prime,T_D)\\
&=F_D\oplus\ext(R_D,T_D) &\mbox{(from eq.~\eqref{eq:Syndec_correctness_1})}
\\&=S_D\oplus\ext(R_D,T_D)\oplus\ext(R_D,T_D) &\mbox{(from definition of $F_D$)}\\&=S_D.
\end{align*}  
Thus, the probability that $\bob$ does not receive the correct bit is given by,  
    \begin{align}
        \Pr\left( \text{$O_B\neq S_D$ }\right) & \leq  \Pr \left( ~   d_{H}(R_{D},R^\prime) \geq \frac{\gamma}{2} |S_{prot}| \right) .\label{eq:ot_correctness_1}
    \end{align}
 Note that
 $R_{D} ={A}_i\left(D\right)\Big\rvert_{S_{prot}}$  and $R^\prime_{} ={B}_i\left(X_i\right)\Big\rvert_{S_{prot}}$ where $\alice$ and $\bob$ input $X_{S_{prot}}$ and $Y_{S_{prot}}=D^{|S_{prot}|}$ to the Magic Square devices (outside $S_{test}$ and $S_*$) respectively. 
\begin{align}
  \nonumber  \Pr\left(   d_{H}\left(R_{D},R^\prime\right) \geq \frac{\gamma}{2} |S_{prot}| \right) 
   \nonumber = &  \Pr\left( \sum_{i\in S_{prot}} \Id_{A_{i}(Y_i) \neq B_{i}(X_i)}   \geq \frac{\gamma}{2} |S_{prot}|    \right) &\mbox{(since $Y_i=D$) } \\
    \leq & \Pr\left( \sum_{i\in S_{prot}} \Id_{A_{i}(Y_i) \neq B_{i}(X_i)}   \geq (1+c_r) \eps |S_{prot}| \right) \label{OT_correctness1}\\
     \leq &~  {{2^{-c\cdot |S_{prot}|}} }\label{OT_correctness2}\\
     \leq &~{\negl(\lambda)}, \label{eq:ot_correctness_3}
\end{align}
{where $c$ is some absolute constant depending on $\eps$ and $c_r$. For eq.~\eqref{OT_correctness1} to hold,  we choose a small enough $c_r$ and appropriate $\eps$ such that $(1+c_r)\eps<\gamma/2$.}
 Eq.~\eqref{OT_correctness2} follows due to Fact \ref{fact:Chernoff}, the independence on the devices coming from $\mathfrak{D}_\eps$, and the fact that the devices win with probability at least $1-\eps$. Eq.~\eqref{eq:ot_correctness_3} follows as $(1-\delta_1)\delta_t n\leq|S_{prot}|\leq (1+\delta_1)\delta_t n$ with $1-\negl(\lambda)$ probability (using Fact \ref{fact:Chernoff}). 
The desired now follows from eq.~\eqref{eq:ot_correctness_1} and \eqref{eq:ot_correctness_3}.
\end{proof}

\begin{lemma}\label{lem:sim_alice_correctness}
 Protocol $\mathcal{Q}$ (Table \ref{table:OT})  is correct  according to Definition~\ref{def:simulator_correctness}.
\end{lemma}

\begin{proof}
Consider the simulator for  $\alice^*$ from Table \ref{algo:alice_sim}. Let $\alice^*=\alice$ and $\bob^*=\bob$. 
\begin{enumerate}
\item We see from Table \ref{algo:alice_sim} that $O_A^\ideal=\emptystring$. Also when $\device^*\in \mathfrak{D}_\eps$, with probability at least $1-\negl(\lambda)$, $O_A^\textsc{Sim}=O_A=\emptystring$ for a small enough $\eps>0$ using Fact \ref{fact:Chernoff} (since $\kappa=2\eps$). Thus with probability at least $1-\negl(\lambda)$, $O_A^\textsc{Sim}=O_A^\ideal$.

\item  From Claim \ref{claim:OT-correctness}, we see that with probability at least $1-{\negl(\lambda)}$,  $\widehat{S}_0=S_0$ (and  $\widehat{S}_1=S_1$) whenever $\device^*\in \mathfrak{D}_\eps$.

\end{enumerate}
Combining these gives us 
correctness according to Definition~\ref{def:simulator_correctness}.

\end{proof}

\subsection{Security (Cheating $\bob$)}
\noindent We first describe the cheating $\bob^*$ protocol in Table \ref{algo:ot_cheating_bob}. We define events $\mathcal{L}_0(\wht{x}\widehat{b}\overline{b})$ and $\mathcal{L}_1(\wht{x}\widehat{b}\overline{b})$ as in eqs. \eqref{eq:def_L0} and \eqref{eq:def_L1}.
  We will use a shorthand $W$ to denote the combined vector $W_{0}W_{1}$.

\begin{table}[H]     \centering \resizebox{12.5cm}{9.3cm}{
            \begin{tabular}{  l c r }
            \hline \\
            \textbf{$\alice$ (with input $(S_0,S_1)$)} & & \textbf{$\bob^*$ (with input $D$)} \\
            \hline  \\ $S_0S_1\overline{A} $& $\kappa_\mcQ$ & $D\overline{B}$\\
 \hdashline 
           \\
           $n=O(\log^{10}(\lambda))$\\
            ${X} \leftarrow_U \lbrace 0,1,2\rbrace^{n}$ \\   for $i$ in [n]: 
            \\ \hspace{1cm} $C_i \leftarrow \mathrm{Bern}(1-\delta)$ \\
            \hspace{1cm} $Y^{\alice}_i \leftarrow_U \lbrace 0,1,2\rbrace$ & 
            \\ $S_{test}=\lbrace i\in [n] : C_i=0 \rbrace$  &$\xrightarrow{S_{test}, Y^{\alice}}$\\ for $i$ in $[n]$: 
             & &
            \\ \tab$\wtt{C}_i \leftarrow \mathrm{Bern}(1-\delta_*)$
            &&  \tab  \tab 
            \\\tab if $\wtt{C}_i=1, \wht{X}_i=X_i$&& Generate $B_{S_{test}}$ \tab\\\tab else $\wht{X}_i=*$
            &&  Generate $B_{[n]\setminus S_{test}}$
            \\$S_{*}=\lbrace i\in [n]\setminus S_{test} : \wtt{C}_i=0 \rbrace$  &&  
            \\ Obtain $A_{[n]}=\MS^A(X_{[n]})$&&  \tab 
\\ $S_{prot}=[n]\setminus (S_*\cup S_{test})$\\

& \textbf{DELAY} & \\\\   
 & $\xleftarrow{B_{S_{test}}}$ &   \\ {$\delta_t=\delta+(1-\delta)\delta_*.$}\\
    {if $ \vert S_{test} \vert+\vert S_{*}\vert \leq (1-\delta_1)\delta_t n$, then $O_A=\perp$ }
      \\      
      {if $ \vert S_{test} \vert+\vert S_{*} \vert \geq (1+\delta_1)\delta_t n$, then $O_A=\perp$ } \\\\
      for $i\in S_{test}$: \tab \tab\\
\hspace{1cm} $Z_i=\Id\left(\MS^i(X_i,Y^{\alice}_i,A_i,B_i)=1 \lor \wht{X}_i=*\right)$\\
if $\sum\limits_{i\in S_{test}}Z_i<(1-{\kappa })|S_{test}|$,
\\
\hspace{1cm} $O_A=\perp$
\\ \hspace{1cm}{ $F_0=F_1=T_0=T_1=\perp$} \\ \hspace{1cm}{ $X=W_0=W_1=\perp$}
\\ else  \\\hspace{1cm} $O_A=\emptystring$

\\ \hspace{1cm}$T_0,T_1\leftarrow_U\lbrace0,1\rbrace^{O(\log^6(\lambda))}$ &&Processing\\\hspace{1cm}
 $R_{0} ={A}_i\left(0\right)\Big\rvert_{i \in S_{prot}}$ & & \\\hspace{1cm}
    $W_{0} = H\left(R_{0}\right)$ & & \\
 \hspace{1cm} $R_{1} ={A}_i\left(1\right)\Big\rvert_{i \in S_{prot}}$ & & \\\hspace{1cm}
    $W_{1} = H\left(R_{1}\right)$ & & \\\hspace{1cm}
         $F_0 = S_0 \oplus \ext\left(R_0,T_0\right)$ & & \\\hspace{1cm}
          $F_1 = S_1 \oplus \ext\left(R_{1},T_1\right)$ & & \\
     & $\xrightarrow{{F_0,F_1,T_0, T_1, \wht{X}}}$ & \\
   & $\xrightarrow{{W_0,W_{1}}}$ \\
   
 \\ $\widehat{A}$ & $\eta_\mathcal{Q}$ & $F_0F_1T_0T_1\wht{X}W\widehat{B}\overline{B}$
 \\ \hdashline \\
       Output $O_{A}$  && Generate output $O_B$ \\\\ $O_A\widehat{A}$ & $\tau_\mathcal{Q}$ & $O_B\widehat{B}$\\ \hdashline \\
      \end{tabular} }
        \caption{Cheating $\bob^*$ protocol }
        \label{algo:ot_cheating_bob} 
    \end{table}
\subsection*{Simulator for  
cheating $\bob^*$}

The simulator for cheating $\bob^*$,  \textsc{Sim}($\bob^*$) (denoted as \textsc{Sim}), takes as input the description of the circuit of $\bob^*$. 

 \begin{table} [H]   \resizebox{15.5cm}{7.2cm}{   \centering
            \begin{tabular}{  l c r }
            \hline \\
            \textbf{$\alice$ (with input $(S_0,S_1)$)} &  & \textbf{\textsc{Sim} (with input $D$)} \\
            \hline \\
            \\ $S_0S_1\overline{A} $& $\kappa_{\Sim}$ & $D\overline{B}$\\
 \hdashline \\
 & &  Perform test phase checks locally \\
            & &    If test fails, $\widehat{D}=\perp$ \\
            \text{
            } & &  Perform $\bob^*$'s steps from Table $\ref{algo:ot_cheating_bob}$  
            \\ && before $\alice$ sends $W_0W_1\wht{X}T_0T_1F_0F_1$\\&& This generates $\widehat{B}\overline{B}$
            \\&& Generate $W\wht{X}T_0T_1$ from the conditional\\
 &&  distribution $(W\wht{X}T_0T_1|\widehat{B}\overline{B})^{\eta_\mathcal{Q}}$
\\ && If $\mathcal{L}_0(\wht{X}\widehat{B}\overline{B})$ holds, $\widehat{D}=1$
\\&&
Otherwise $\widehat{D}=0$ 
            
    \\

    & 
\centering
\resizebox{0.3\textwidth}{!}{%
\begin{circuitikz}
\tikzstyle{every node}=[font=\normalsize]
\draw (5.25,14.75) rectangle (8.5,11.75);
\node [font=\large] at (7,13.25) {IDEAL($OT$)};
\draw [->, >=Stealth, dashed] (3.5,14.25) -- (5.25,14.25);
\draw [->, >=Stealth, dashed] (9.75,14.25) -- (8.5,14.25);
\node [font=\normalsize] at (4.5,14.5) {$(S_0,S_1)$};
\node [font=\normalsize] at (9,14.75) {$\widehat{D}$};
\draw [->, >=Stealth, dashed] (5.25,13.5) -- (3.5,13.5);
\node [font=\normalsize] at (4.5,13.75) {$O^\ideal_A$};
\draw [->, >=Stealth, dashed] (3.5,12.75) -- (5.25,12.75);
\node [font=\normalsize] at (4.1,13.05) {$A_\perp=O^\ideal_A$};
\draw [->, >=Stealth, dashed] (8.5,12.25) -- (10,12.25);
\node [font=\normalsize] at (9.7,12.7) {$O_B^\ideal = S_{\widehat{D}}$};
\end{circuitikz}
}%

 & \\

    && Generate $(F_{\wht{D}}|S_{\wht{D}}\widehat{B}\overline{B}W\wht{X}T_{0}T_{1})$ \\
    &&Set $F_{1-\wht{D}}=U_1$\\
 
  \\ $\widehat{A}$ & $\eta_\textsc{Sim}$ & $F_0F_1T_0T_1\wht{X}W\widehat{B}\overline{B}$
 \\ \hdashline

  \\&& Perform $\bob^*$'s steps from Table~\ref{algo:ot_cheating_bob}
  \\&& after communication from $\alice$\\ Output $O^\Sim_A=O^\ideal_A$&& Output $O^\Sim_B=O_B$ \\  \\
  $O^\Sim_A\widehat{A}$ & $\tau_{\textsc{Sim}}$ & $O^\Sim_B\widehat{B}$\\
 \hdashline \\
      \end{tabular} }
        \caption{Simulator for $\bob^*$ : $\mathcal{Q} \left(\alice,\textsc{Sim}\right)$.}
        \label{algo:bob_sim} 
    \end{table}
        
        \begin{lemma}\label{lem:sim_bob_security}
 Protocol $\mathcal{Q}$  is secure  according to Definition~\ref{def:simulator_cheating_alice} . In other words,
    \[ \onenorm {{(\widehat{B} O^\Sim_AO^\Sim_BS_0S_1)}^{\tau_\textsc{Sim}}} { {(\widehat{B}O_AO_{B}S_0S_1)}^{\tau_\mathcal{Q}}} \leq {\negl(\lambda)}.\]
\end{lemma}
\begin{proof}

Since $\alice$ is honest and $\Sim$ has knowledge of the circuit used to create the devices, the test phase checks can be performed by the simulator without any communication from $\alice$, and hence, $\Sim$ can generate $O_A$. If $O_A=\perp,~\Sim$ inputs $\widehat{D}=\perp$ which gives $O_A^\ideal=\perp$. Otherwise if $O_A=\emptystring$, $\Sim$ inputs $\wht{D}=1$ if $\mcL_0$ holds and $\wht{D}=0$ if $\mcL_1$ holds. In this case, $O_A^\ideal=\emptystring$. 
 Thus, we get (in the state $\tau_\Sim$), $\Pr[O_A^\ideal=O_A]=1$. Since $\textsc{Sim}$ sets $O^\Sim_A=O_A^\ideal$, we get

\begin{equation}{\label{eq:sim_bob_main_eq}
  \onenorm {{(\widehat{B} O^\Sim_AO^\Sim_BS_0S_1)}^{\tau_\textsc{Sim}}} { {(\widehat{B}O_AO_{B}S_0S_1)}^{\tau_\mathcal{Q}}}=  \onenorm {{(\widehat{B} O^\Sim_BS_0S_1)}^{\tau_\textsc{Sim}}} { {(\widehat{B}O_{B}S_0S_1)}^{\tau_\mathcal{Q}}}}.
 \end{equation}
Since the simulator follows exactly the same steps as $\bob^*$ from Table \ref{algo:ot_cheating_bob}, if the simulator can generate $\alice$'s message $W\wht{X}T_0T_1F_0F_1$ from Table \ref{algo:ot_cheating_bob}, it can generate the final state $\tau_\mathcal{Q}$.
 In case the test phase does not pass ($O_A=\perp$), $\Sim$ can easily generate this message as all of these are set to $\perp$ in Table \ref{algo:ot_cheating_bob}.

Suppose that the test phase passes. Let the joint state of $\alice$ and $\bob^*$  after $\alice$ communicates  $F_0,F_1,T_0,T_1,\wht{X},W$ to $\bob^*$ in Protocol $\mathcal{Q}$ be  $\eta_\mathcal{Q}$. The corresponding joint state between $\alice$ and \textsc{Sim} is denoted as $\eta_{\textsc{Sim}}$.

 From the simulator construction, we see $(S_0S_1\widehat{B}\overline{B})^{\eta_\mathcal{Q}}=(S_0S_1\widehat{B}\overline{B})^{\eta_\Sim}$ since $\textsc{Sim}$ generates these exactly as in Protocol $\mathcal{Q}$.
The simulator now generates  $W\wht{X}T_0T_1$  from the conditional distribution 
$(W\wht{X}T_0T_1|\widehat{B}\overline{B})^{\eta_\mathcal{Q}}$. This is possible as $\Sim$ knows the description of the circuit used to create the devices and the following Markov Chain in the state $\eta_Q$ (since $S_0,S_1$ are only used for computing $F_0,F_1$):
\begin{equation}
    \label{eq:sim-markov1}
S_0S_1-\widehat{B}\overline{B}-W\wht{X}T_0T_1.
\end{equation}
We suppose, without loss of generality, that $\mathcal{L}_0(\wht{X}\widehat{B}\overline{B})$ holds, which gives $\wht{D}=1$. $\Sim$ then generates $(F_{1}|S_{1}\widehat{B}\overline{B}W\wht{X}T_{0}T_{1})$ and sets $F_0=U_1$.
\begin{align*}
  & ( S_{0}S_{1}\widehat{B}\overline{B}W\wht{X}T_{0}T_{1}F_{0}F_{1} )^{\eta_\Sim} \\&= \left(S_{0}S_{1}\widehat{B}\overline{B}~(W\wht{X}T_{0}T_{1}|\widehat{B}\overline{B})~(F_{1}|S_{1}\widehat{B}\overline{B}W\wht{X}T_{0}T_{1})\otimes U_1\right)^{\eta_\mathcal{Q}} \\&= \left(S_{0}S_{1}\widehat{B}\overline{B}W\wht{X}T_{0}T_{1}(F_{1}|S_{1}\widehat{B}\overline{B}W\wht{X}T_{0}T_{1})\otimes U_1\right)^{\eta_\mathcal{Q}} &\mbox{(eq. \eqref{eq:sim-markov1})}\\&\approx_{{\negl(\lambda)}} ( S_{0}S_{1}\widehat{B}\overline{B}W\wht{X}F_{1}T_{1}T_{0}F_{0} )^{\eta_\mathcal{Q}}. &\mbox{(from Claim \ref{claim:OT_markov_actual_protocol})}
\end{align*}
Since \textsc{Sim} and $\bob^*$  perform exactly the same operations on $\widehat{B}\overline{B}W\wht{X}F_{1}T_{1}T_{0}F_{0}$  (from Table \ref{algo:ot_cheating_bob}) after communication from $\alice$ to generate $\widehat{B}O_B$, Fact \ref{fact:data} gives us the following:
\begin{equation}\label{eq:sim_bob_closeness}
   \onenorm {{(\widehat{B} O_BS_0S_1)}^{{\tau_\Sim}}} { {(\widehat{B}O_{B}S_0S_1)}^{\tau_\mathcal{Q}}} \leq {\negl(\lambda)}. \end{equation} 
   Since $\Sim$ sets $O^\Sim_B=O_B$, using eqs.~\eqref{eq:sim_bob_main_eq} and \eqref{eq:sim_bob_closeness}, we get
    \[ \onenorm {{(\widehat{B} O^\Sim_AO^\Sim_BS_0S_1)}^{{\tau_\Sim}}} { {(\widehat{B}O_AO_{B}S_0S_1)}^{\tau_\mathcal{Q}}} \leq {\negl(\lambda)}.\]
   \end{proof}

We also show how to go from non-augmented security to augmented security for Protocol~$\mathcal{Q}$.
We show this only for the cheating $\alice^*$ simulator (Table \ref{algo:ot_cheating_alice}).
The proof for cheating $\bob^*$ follows similarly.
\begin{claim}\label{claim:non-aug-to-aug}
   Protocol $\mathcal{Q}$ is augmented secure and hence, composable.
\end{claim}
\begin{proof}We defer the proof of this claim to the Appendix.
\end{proof}

 \begin{claim}\label{claim:OT_markov_actual_protocol}  In state $\eta_\mathcal{Q}$ (Table \ref{algo:ot_cheating_bob}), conditioned on $\test=1$ and $\mathcal{L}_0(\wht{X}\widehat{B}\overline{B})$
 \begin{equation*}
    S_{0}S_{1}\widehat{B}\overline{B}W\wht{X}T_{0}T_{1}F_{0}F_{1} \approx_{{\negl(\lambda)}}  S_{0}S_{1}\widehat{B}\overline{B}W\wht{X}T_{0}T_{1}(F_{1} |S_{1}\widehat{B}\overline{B} {W\wht{X}T_0T_1}) \otimes U_{1}.
\end{equation*}

Similarly, conditioned on $\test=1$ and $\mathcal{L}_1(\wht{X}\widehat{B}\overline{B})$,
 \begin{equation*}
    S_{0}S_{1}\widehat{B}\overline{B}W\wht{X}T_{0}T_{1}F_{0}F_{1} \approx_{{\negl(\lambda)}}  S_{0}S_{1}\widehat{B}\overline{B}W\wht{X}T_{0}T_{1}(F_{0}|S_{0}\widehat{B}\overline{B} {W\wht{X}T_0T_1}) \otimes U_{1}.
\end{equation*}
\end{claim}
\begin{proof}
We consider the case when $\test=1$ and $\mathcal{L}_0(\wht{X}\widehat{B}\overline{B})$. The other case follows similarly.  From Claim~\ref{claim:OT_security},  in state $\eta_\mathcal{Q}$ we have,
\begin{equation} \label{eq:closeness_with_uniform}
S_{0}S_{1}\widehat{B}\overline{B}W\wht{X}T_{0}T_{1}F_{0}F_{1} \approx_{{\negl(\lambda)}}  S_{0}S_{1}\widehat{B}\overline{B}W\wht{X}T_{0}T_{1}F_{1} \otimes U_{1}. \end{equation}  
Also, since $F_1$ can be generated from $S_{1}\widehat{B}\overline{B}W\wht{X}T_{1}T_{0}$ without involving $S_0$, in state $\eta_\mathcal{Q}$ we have,
{\begin{equation} \label{eq:markov_closeness_with_uniform}S_{0}S_{1}\widehat{B}\overline{B}W\wht{X}T_{0}T_{1}F_{1} = S_{0}-S_{1}\widehat{B}\overline{B}W\wht{X}T_{1}T_{0}-F_1.\end{equation}}
Combining eqs. \eqref{eq:closeness_with_uniform} and \eqref{eq:markov_closeness_with_uniform}, we get,  
\begin{align*}
  S_{0}S_{1}\widehat{B}\overline{B}W\wht{X}T_{0}T_{1}F_{0}F_{1} &\approx_{{\negl(\lambda)}} S_{0}S_{1}\widehat{B}\overline{B}W\wht{X}T_{0}T_{1}(F_1|S_{1}\widehat{B}\overline{B}W\wht{X}T_{0}T_{1}) \otimes U_1.
  \end{align*}
  
\end{proof}
Let us define the set $\textsc{good}_{\wht{x}\widehat{b}\overline{b}}(0)$  as follows (the case $\textsc{good}_{\wht{x}\widehat{b}\overline{b}}(1)$ is similar):
 \begin{equation} \label{eq:def_of_good_0}
    \textsc{good}_{\wht{x}\widehat{b}\overline{b}}(0) = \Big\lbrace wt_1f_1\ : \hmin(A(0)_{S_{prot}}|\wht{x}\widehat{b}\overline{b}wt_1f_1)_{\eta_{\mathcal{Q}}}\geq {\Omega(\log^{10}(\lambda))}\Big\rbrace .
\end{equation} 
\begin{equation*}
    \textsc{good}_{}(0) = \Big\lbrace \wht{x}\widehat{b}\overline{b}wt_1f_1\ : wt_1f_1 \in \textsc{good}_{\wht{x}\widehat{b}\overline{b}}(0)\Big\rbrace .
\end{equation*}
Before going to security, we will prove some useful claims.
We prove relevant claims only for the case when $\mathcal{L}_0(\wht{X}\widehat{B}\overline{B})$ holds.  
The case when $\mathcal{L}_1(\wht{X}\widehat{B}\overline{B})$ holds follows analogously.
 
In Claim \ref{claim:OT_security} and Claim \ref{claim:wtc_good}, we use the following shorthand for ease of notation (where $\eta_\mathcal{Q}$ is from Table \ref{algo:ot_cheating_bob}): \[{\Pr_{\wht{x}\widehat{b}\overline{b}wt_1f_1\leftarrow \eta_\mcQ^{\hat{X}\widehat{B}\overline{B}WT_1F_1}} \equiv\Pr_{\wht{x}\widehat{b}\overline{b}wt_1f_1}}\]

\begin{claim} \label{claim:OT_security}  Conditioned on $\test=1$ and $\mathcal{L}_0(\wht{X}\widehat{B}\overline{B})$, $\forall s_0,s_1\in \{0,1\}$ (in state $\eta_\mathcal{Q}$),
\[
\onenorm{F_0T_0F_1T_1W\wht{X}\widehat{B}\overline{B}}{U_{1}\otimes T_0F_1T_1W\wht{X}\widehat{B}\overline{B}} \leq {\negl(\lambda)}.
\] 

\end{claim}

\begin{proof} 
From our \textbf{DELAY} assumption, after the \textbf{DELAY}, the entire state in the protocol is classical. We represent the state $\eta_\mathcal{Q}$ (see Table \ref{algo:ot_cheating_bob}) conditioned on $\test=1$ and $\mathcal{L}_0(\wht{X}\widehat{B}\overline{B})$ as a probability distribution $P_0$.
 \small{
\begin{flalign*}
     & \onenorm{F_0T_0F_1T_1W\wht{X}\widehat{B}\overline{B}_{\eta_\mathcal{Q}}}{U_{1}\otimes T_0F_1T_1W\wht{X}\widehat{B}\overline{B}_{\eta_\mathcal{Q}}} 
\\=&\onenorm{P_0^{F_0T_0F_1T_1W\wht{X}\widehat{B}\overline{B}}}{U_{1+\polylog(n)}\otimes P_0^{F_1T_1W\wht{X}\widehat{B}\overline{B}}}&\mbox{{(since $\alice$ is honest)}}
    \\ \leq  &  
    \sum_{\substack{f_0t_0f_1t_1w\wht{x}\widehat{b}\overline{b},\\ \wht{x}\widehat{b}\overline{b}wt_1f_1 \in \textsc{good}_{}(0)  }} \left\vert P_0({f_0t_0f_1t_1w\wht{x}\widehat{b}\overline{b}}) - U(f_0t_0)P_0({f_1t_1w\wht{x}\widehat{b}\overline{b}}) \right\vert \\& \tab \tab \tab + \Pr_{\wht{x}\widehat{b}\overline{b}wt_1f_1}(\wht{x}\widehat{b}\overline{b}wt_1f_1\not\in\textsc{good}_{}(0))\\ \leq  &  
    \sum_{\substack{f_0t_0f_1t_1w\wht{x}\widehat{b}\overline{b},\\ \wht{x}\widehat{b}\overline{b}wt_1f_1 \in \textsc{good}_{}(0)  }} \left\vert P_0({f_0t_0f_1t_1w\wht{x}\widehat{b}\overline{b}}) - U(f_0t_0)P_0({f_1t_1w\wht{x}\widehat{b}\overline{b}}) \right\vert  +{\negl(\lambda)} &\mbox{(from Claim~\ref{claim:wtc_good})}\\  
{\leq} & \sum_{\substack{f_1t_1w\wht{x}\widehat{b}\overline{b},\\ \wht{x}\widehat{b}\overline{b}wt_1f_1 \in \textsc{good}_{}(0)}} P_0 (f_1t_1w\wht{x}\widehat{b}\overline{b}) \sum_{f_0t_0}\left\vert P_0 \vert_{f_1t_1w\wht{x}\widehat{b}\overline{b}} ({f_0t_0}) - U({f_0t_0}) \right\vert + {\negl(\lambda)}
\\
= &\sum_{\substack{f_1t_1w\wht{x}\widehat{b}\overline{b},\\ \wht{x}\widehat{b}\overline{b}wt_1f_1 \in \textsc{good}_{}(0)}} P_0 (f_1t_1w\wht{x}\widehat{b}\overline{b})  \left\Vert P_0^{F_0T_0} \vert_{\wht{x}\widehat{b}\overline{b}wt_1f_1} - U^{F_0T_0} \right\Vert_1  + {\negl(\lambda)}\\ 
     \leq &\sum_{f_1{t_1w\wht{x}\widehat{b}\overline{b}}} P_0 (f_1t_1w\wht{x}\widehat{b}\overline{b})\cdot{\negl(\lambda)}  +{\negl(\lambda)}&\mbox{(from Claim~\ref{claim:ot_extractor})}\\ 
      \leq & ~ {\negl(\lambda)}. %
\end{flalign*} }
\end{proof}

\begin{claim} \label{claim:wtc_good}
Conditioned on $\test=1$ and $\mathcal{L}_0(\wht{X}\widehat{B}\overline{B})$ in the state in state $\eta_\mathcal{Q}^{\wht{A}F_0F_1T_0T_1\wht{X}W\widehat{B}\overline{B}}$,  $$\Pr_{\wht{x}\widehat{b}\overline{b}wt_1f_1} \left( \wht{x}\widehat{b}\overline{b}wt_1f_1 \notin \textsc{good}_{}(0)\right) \leq {\negl(\lambda)}.$$
\end{claim}
\begin{proof}From Claim \ref{claim:ot_entropy}, with probability $1-\negl(\lambda)$ (over choice of $\wht{x}\widehat{b}\overline{b}$), conditioned on $\test=1$ and $\mathcal{L}_0(\wht{x}\widehat{b}\overline{b})$, we have
{\small\begin{flalign*}
    &\Pr_{wt_1f_1}\left( \hmin(A(0)_{S_{prot}}|\wht{x}\widehat{b}\overline{b}wt_1f_1)_{\eta_{\mathcal{Q}}}\geq {\Omega(\log^{10}(\lambda))}\right) \\
    &\geq \Pr_{wt_1f_1}\left( \hmin(A(0)_{S_{prot}}|\wht{x}\widehat{b}\overline{b}wt_1f_1)_{\eta_{\mathcal{Q}}}\geq\hmin(A(0)_{S_{prot}}|x\widehat{b}\overline{b}T_1WF_1)_{\eta_{\mathcal{Q}}}- {\log^2(\lambda)}\right)& (\mbox{from Claim~\ref{claim:ot_entropy}}) \\
    &\geq 1-2^{-{\log^2(\lambda)}}. & (\mbox{from Fact~\ref{fact:average_to_worst_min_entropy}})
\end{flalign*}}
So we get,
    \begin{align*}
        \Pr_{\wht{x}\widehat{b}\overline{b}wt_1f_1} \left( \wht{x}\widehat{b}\overline{b}wt_1f_1 \notin \textsc{good}_{}(0)\right) 
        & \leq   2^{-{\log^2(\lambda)}} +\negl(\lambda)
        \leq {\negl(\lambda)}. & 
    \end{align*}
\end{proof}

\begin{claim}
    
 \label{claim:ot_extractor}
   Conditioned on $\test=1$, $ \wht{x}\widehat{b}\overline{b}wt_1f_1 \in \textsc{good}_{}(0)$ and $\mathcal{L}_0(\wht{x}\widehat{b}\overline{b})$,
 $\forall   s_0,s_1$,
    \[ \onenorm{P_0^{F_0T_0} \vert_{\wht{x}\widehat{b}\overline{b}wt_1f_1} }{U^{F_0T_0}  }  \leq {\negl(\lambda)}.\]
\end{claim}
\begin{proof}
    Recall that by eq.~\eqref{eq:def_of_good_0}, we have 
\[\hmin(A(0)_{S_{prot}}|\wht{x}\widehat{b}\overline{b}wt_1f_1)_{\eta_{\mathcal{Q}}}\geq \Omega({\log^{10}(\lambda)}) .\]
 Fact \ref{fact:strong_extractor} and our setting of $\eps= 2^{-\log^3(\lambda)}$ and $m=O(\log^{10}(\lambda))$ imply that our source has enough min-entropy for the extractor to work. Since $T_0$ and $T_1$ are independently and uniformly generated by $\alice$, we get
    \begin{equation}
        \label{eq:entropy_final_old}
   \onenorm{\ext\left( R_0,T_0\right)\circ T_0 \vert_{\wht{x}\widehat{b}\overline{b}wt_1f_1}}{U_{1+{O(\log^6(\lambda))}}} =\negl(\lambda). \end{equation}
This gives
\begin{align*}
    \onenorm{P_0^{F_0T_0} \vert_{\wht{x}\widehat{b}\overline{b}wt_1f_1} }{U^{F_0T_0}  } 
&= \onenorm{(s_0 \oplus \ext\left( R_0,T_0\right))\circ T_0 \Big\vert_{\wht{x}\widehat{b}\overline{b}wt_1f_1} }{U_{1+{O(\log^6(\lambda))}}} \\
&= \onenorm{( \ext\left( R_0,T_0\right))\circ T_0 \Big\vert_{\wht{x}\widehat{b}\overline{b}wt_1f_1} }{U_{1+{O(\log^6(\lambda))}}} \\
       & = {\negl(\lambda)}.
\end{align*}
\end{proof}
\begin{claim} \label{claim:ot_entropy}
   With probability $1-\negl(\lambda)$ (over choice of $\wht{x}\widehat{b}\overline{b}$), conditioned on $\test=1$ and  $\mathcal{L}_0(\wht{x}\widehat{b}\overline{b})$, we have   
  $$\hmin(A(0)_{S_{prot}}|\wht{x}\widehat{b}\overline{b}T_1WF_1)_{\eta_{\mathcal{Q}}}\geq{\Omega(\log^{10}(\lambda))}.$$

\end{claim}

\begin{proof}
Recall that $H$ is a parity check matrix for a $\left(  |S_{prot}|, k, d\right)$-code.
Thus, from eq. \eqref{eq:ecc_rate}, we have \[|W|=2(|S_{prot}|-k) = 2|S_{prot}|(1- r ) \leq  \delta'n/40  .  \]
With probability $1-\negl(\lambda)$ (over choice of $\wht{x}\widehat{b}\overline{b}$), conditioned on $\test=1$ and  $\mathcal{L}_0(\wht{x}\widehat{b}\overline{b})$, we have
 \begin{align*}
     &\hmin(A(0)_{S_{prot}}|\wht{x}\widehat{b}\overline{b}T_1WF_1)_{\eta_{\mathcal{Q}}}\\
     & \geq \hmin(A(0)_{S_{prot}}|\wht{x} \widehat{b}\overline{b})_{\eta_{\mathcal{Q}}}-|W|-|T_1|-|F_1|
    \\ &\geq  \delta'n/20-\delta'n/40 -  O({\log^6(\lambda)})  - 1  &\mbox{(from Claim \ref{claim:case_analysis})}\\
    &\geq {\Omega(\log^{10}(\lambda))} .& 
 \end{align*}
 \end{proof}

\begin{claim} \label{claim:case_analysis} Let $\kappa  \leq \frac{1}{4}\eps_*-\delta_0$ where $\kappa$ is from Table \ref{table:OT}.
Let the registers with $\bob$ (after $\alice$'s communication) be $\eta_{\mcQ}^{F_0F_1T_0T_1W_0W_1\wht{X}\widehat{B}\overline{B}}$ (see Table \ref{algo:ot_cheating_bob}). Let $\mcL_0(\wht{x}\widehat{b}\overline{b}),\mcL_1(\wht{x}\widehat{b}\overline{b})$ be complementary events from eqs. \eqref{eq:def_L0} and \eqref{eq:def_L1}. Then one of the following holds:
\begin{itemize}
    \item probability of test phase passing is $\negl(\lambda)$ (in this case, $\alice$'s security holds trivially).
    \item conditioned on the test phase passing, with probability $1-\negl(\lambda)$ (over the choice of $\wht{x}\widehat{b}\overline{b}$)
    \begin{enumerate}
        \item if $\mcL_0(\wht{x}\widehat{b}\overline{b})$ happens,
    \[
\hmin(A(0)_{S_{prot}}|\wht{x}\wht{b}\bar{b})_{\eta_{\mathcal{Q}}}\geq\delta'n/20.
\] 
\item if $\mcL_1(\wht{x}\widehat{b}\overline{b})$ happens,
\[
\hmin(A(1)_{S_{prot}}|\wht{x}\wht{b}\bar{b})_{\eta_{\mathcal{Q}}}\geq\delta'n/20.       
\]
  \end{enumerate}
\end{itemize}
\end{claim}

\begin{proof}
Let $W_i$ be as defined in Claim \ref{claim:threshold}. Consider \[
    Lose \defeq\{i\in [n]: W_i=0\}.\]
  If $\test=1$, we have \[|S_{test}\cap Lose| \leq \kappa |S_{test}|.\] 
    
    \textbf{Case 1:}\[
    \Pr(\test=1) \leq 2^{-\delta' n/2}=\negl(\lambda).
    \]
     In this case, $\alice$ aborts with $1-\negl(\lambda)$ probability and sets $F_0=F_1=T_0=T_1=\wht{X}=W_0=W_1=\perp$. In this case, security follows trivially as all communication from $\alice$ to $\bob$ is independent of $\alice$'s input. 
  
    \textbf{Case 2:}\[
    \Pr(\test=1) > 2^{-\delta' n/2}.
    \] Let the probability of $\bob$ correctly guessing $A(0)_{S_{prot}}$ and $A(1)_{S_{prot}}$ in the state $\eta_{\mcQ}^{\wht{A}\wht{X}\widehat{B}\overline{B}}$ be as in eq. \eqref{eq:guessing_prof_def}. We have
    \begin{align}
& \Pr(\text{$\bob$ correctly guesses $A(0)_{S_{prot}}$ and $A(1)_{S_{prot}}$}~~|~~\test=1)_{\eta_{\mcQ}^{\wht{A}\wht{X}\widehat{B}\overline{B}}} \nonumber\\&\leq \Pr(\text{$\bob$ correctly guesses $A[0,1]$ for at least } (1-\kappa )|S_{prot}| \text{ devices}~~|~~\test=1)_{\eta_{\mcQ}^{\wht{A}\wht{X}\widehat{B}\overline{B}}}
        \nonumber\\&=\frac{\Pr(\text{$\bob$ correctly guesses $A[0,1]$ for at least } (1-\kappa )|S_{prot}| \text{ devices } \cap \test=1)_{\eta_{\mcQ}^{\wht{A}\wht{X}\widehat{B}\overline{B}}}}{\Pr(\test=1)}\nonumber
         \\&{\leq\frac{\Pr(\sum_{i\in [n]}W_i \geq (1-\frac{1}{4}\eps_*+\delta_0)n)_{G_n}}{\Pr(\test=1)}}\nonumber
         \\&\leq \frac{2^{-\delta' n}}{2^{-\delta' n/2}}  \hspace{4.6in} \mbox{(Claim \ref{claim:threshold})}\nonumber
         \\&\leq 2^{-\delta' n/2}. \label{eq:take_log}
    \end{align}
    For the 3rd last inequality to hold, we need \begin{align*}
        &  (1-\kappa )|S_{prot}|+|S_*|+(1-\kappa )|S_{test}| \geq (1-\frac{1}{4}\eps_*+\delta_0)n.
        \end{align*}
       This holds when \begin{align*}
           \kappa  \leq \frac{1}{4}\eps_*-\delta_0. 
    \end{align*}
    Using Fact \ref{fact:p_guess}, we get
    \[
\E_{\wht{x}\widehat{b}\overline{b}\leftarrow \wht{X}\widehat{B}\overline{B}|\test=1}\left[p_{guess}(A(0)_{S_{prot}}, A(1)_{S_{prot}}|\wht{X}\widehat{B}\overline{B}=\wht{x}\widehat{b}\overline{b},\test=1)\right] \leq 2^{-\delta' n/2}.  
    \]
 Applying Markov's inequality, we get that with probability atleast $1-2^{-\delta'n/4}=1-\negl(\lambda)$ over the choice of  $\wht{X}\widehat{B}\overline{B}=\wht{x}\widehat{b}\overline{b}$ and conditioned on $\test=1$,  the following holds: \begin{equation}
\label{eq:joint_prob_small}
    p_{guess}(A(0)_{S_{prot}}, A(1)_{S_{prot}}|\wht{X}\widehat{B}\overline{B}=\wht{x}\widehat{b}\overline{b},\test=1) \leq 2^{-\delta' n/4}.  
     \end{equation}
   Conditioning on a such a "good" $\wht{X}\widehat{B}\overline{B}=\wht{x}\widehat{b}\overline{b}$ (that satisfies the above equation) and $\test=1$, we have \begin{equation*}
    p_{guess}(A(0)_{S_{prot}}, A(1)_{S_{prot}}) \leq 2^{-\delta' n/4}. \end{equation*}  
    Define sets \[\mcP_0(\wht{x}\widehat{b}\overline{b})=\{a(0)_{S_{prot}}: \Pr(a(0)_{S_{prot}}|\wht{x}\widehat{b}\overline{b})\leq 2^{-\delta' n/10} \},\]
    \[\mcP_1(\wht{x}\widehat{b}\overline{b})=\{a(0)_{S_{prot}}: \Pr(a(0)_{S_{prot}}|\wht{x}\widehat{b}\overline{b})> 2^{-\delta' n/10} \}.\] 
    For all $a(0)_{S_{prot}}\in \mcP_1(\wht{x}\widehat{b}\overline{b}) $ and $a(1)_{S_{prot}}$, we have  \begin{align*}
       & \Pr(a(1)_{S_{prot}}| a(0)_{S_{prot}})
       \\&= \frac{\Pr(a(1)_{S_{prot}}a(0)_{S_{prot}})}{\Pr(  a(0)_{S_{prot}})}
       \\&\leq \frac{2^{-\delta' n/4}}{2^{-\delta'n/10}} \leq 2^{-\delta' n/10}. 
    \end{align*}
    Therefore, for all $a(1)_{S_{prot}}$, we have 
    \begin{equation}
\label{eq:prob_conditioned_on_L1}\Pr(a(1)_{S_{prot}}| A(0)_{S_{prot}}\in\mcP_1(\wht{x}\widehat{b}\overline{b}))\leq 2^{-\delta' n/10}.\end{equation} 
Therefore \[p_{guess}(A(1)_{S_{prot}}| A(0)_{S_{prot}}\in\mcP_1(\wht{x}\widehat{b}\overline{b}))\leq 2^{-\delta' n/10}.\]
Define \begin{align}
&\mcL_0(\wht{x}\wht{b}\bar{b})\defeq\{A(0)_{S_{prot}}\in \mcP_0(\wht{x}\widehat{b}\overline{b})\}, \label{eq:def_L0}
\\&\mcL_1(x\wht{b}\bar{b})\defeq\{A(0)_{S_{prot}}\in \mcP_1(\wht{x}\widehat{b}\overline{b})\}.\label{eq:def_L1}
\end{align}
Thus, conditioned on $\test=1$ and $\mcL_1(\wht{x}\wht{b}\bar{b})$, we have \[\hmin(A(1)_{S_{prot}}|\wht{x}\wht{b}\bar{b}) \geq \delta'n/10. \]
{If $\Pr(\mcL_0(\wht{x}\wht{b}\bar{b}))\geq 2^{-\delta'n/100}$, we get (conditioned on $\test=1$ and $\mcL_0(\wht{x}\wht{b}\bar{b})$)\[\hmin(A(0)_{S_{prot}}|\wht{x}\wht{b}\bar{b})\geq \delta'n/10-\delta'n/100 \geq \delta'n/20. \]}
\end{proof}

Consider single-copy game $G_1$ (see Table \ref{table:ot-single-copy}).  We will see that proving parallel repetition on this game helps us in proving security against cheating $\alice^*$. Let us represent the $n$-fold repetition of the game $G_1$ played in parallel by $G_n$. For the following claim, the parameters $\delta_0, t$, and $\delta_*$ are defined in eqs. \eqref{eq:parameter_t} and \eqref{eq:parameter_delta_0} and $\eps_*$ is from Claim \ref{claim:value_G_1}.
\begin{claim}[Threshold theorem]\label{claim:threshold}

  Define $T_i=V(X_i,Y_i,A_i,B_i)$, i.e., $T_i=1$ if the $n$-copy security game $G_n$  is won in the $i$-th coordinate.  
    Define  $W_i=\Id_{T_i=1}$.  Let $\delta_0>0$ be such that $\delta_0 n=t+100\sqrt{\delta_*}n$. Then \[
    \Pr\left(\sum_{i\in [n]}W_i\geq (1-\frac{1}{4}\eps_*+\delta_0)n\right)_{G_n} \leq  2^{-\delta'n},
    \]for some small constant $\delta'>0$.
\end{claim}

\vspace{-8mm}
\section{Proof of threshold theorem}\label{threshold}

Consider the following security game in the case of cheating $\bob$ and honest $\alice$. Here $\charlie$ (along with the $\ver$) and $\dave$ can be thought of as $\alice$ and $\bob$ respectively from Table \ref{table:OT}. Note that $\charlie$ and $\dave$ receive inputs $(S_0,S_1)$ and $D$ respectively.

\begin{table}[H]  \centering
          \resizebox{11cm}{7.3cm}{  \begin{tabular}{  l l c r r  }
            \hline \\
            \textbf{$\charlie$ } & &$\ver$ && \textbf{$\dave$} \\
            \hline \\ 
&&${X} \leftarrow_U \lbrace 0,1,2\rbrace^{n}$ 
            \\ && $\forall i\in[n], C_i \leftarrow \mathrm{Bern}(1-\delta)$ \\
           &&  $Y^{\alice}_i \leftarrow_U \lbrace 0,1,2\rbrace$ & 
            \\ &&$S_{test}=\lbrace i\in [n] : C_i=0\rbrace$ \\ &&$\forall i\in[n], \wtt{C}_i \leftarrow \mathrm{Bern}(1-\delta_*)$  
            \\ &&if $\wtt{C}_i=1, \wht{X}_i=X_i$  \\ &&else $\wht{X}_i=*$\\&& $S_{*}=\lbrace i\in [n]\setminus S_{test} : \wtt{C}_i=0\rbrace$
            \\&&$S_{prot}=[n]\setminus (S_*\cup S_{test})$\\&$\xleftarrow{X}$ &&$\xrightarrow{S_{test}, Y^{\alice}}$\\ 

            \\
            && if $C_i=0$ then $Y_i=Y^{\alice}_i$ \\
             && if $C_i=1$ then $Y_i=D$ \hspace{0.4cm} && Generate $B_{S_{test}}$ \\\\
            & &   {\textbf{DELAY} } 
         \\
              \\&&&$\xleftarrow{B_{S_{test}}}$\\&&&$\xrightarrow{\wht{X}}$\\
 & $\xrightarrow{A}$&&$\xleftarrow{\widetilde{A}[0,1]_{[n]\setminus\S_{test}}}$\\

             & &   \textbf{Winning Condition}   \\
    && \hspace{1.2cm} for $i\in S_{test}$:\tab \tab \quad \\
&&\tab$Z_i=\Id\left(\MS^i(X_i,Y^{}_i,A_i,B_i)=1 \lor \wht{X}_i=*\right)$\\
&&if $\sum\limits_{i\in S_{test}}Z_i<(1-{\kappa })|S_{test}|$,
\\
&&\hspace{1cm} $O^{test}_A=\perp$
\\&& \hspace{1.2cm}for $i\in S_{prot}$:\tab \tab \tab \\
&&\hspace{1cm}$\wtt{Z}_i=\Id(A_i[0,1]=\widetilde{A}_i[0,1])$\\
&&\hspace{2.5cm}if $\sum\limits_{i\in S_{prot}}\wtt{Z}_i \leq (1-\kappa )|S_{prot}|$ \tab \tab \\
 
&& $O^{test}_A=\perp$\\
&& Otherwise $O_A^{test}=\emptystring$ \tab
\\\\&& \textbf{Predicate}
\\&&  $V_{P_n}(X,Y^\alice,S_{test}, \wht{X},A,B, \wtt{A})=1$ iff $O_{A}^{test}=\emptystring$
\end{tabular} }
        \caption{Security game OT $P_n$}
        \label{security_game_ot} 
    \end{table}
    In Table \ref{table:ot-single-copy}, we define a single-copy security game $G_1$ for OT against cheating $\bob$.
    Note that for simplicity of notation, we will denote the inputs of the single-copy game $G_1$ with $X$ and $\wht{Y}$ and similarly for outputs (note that the inputs of the security game $P_n$ in Table \ref{table:OT} are also denoted by $X$ and $\wht{Y}$). Note that the size of $X$ and $\wht{Y}$ will automatically become clear from which game is being played. 
    
\begin{table}[H]      \centering
          \resizebox{12cm}{6cm}{  \begin{tabular}{  l l c r r  }
            \hline \\
            \textbf{$\charlie$} & &$\ver$ && \textbf{$\dave$} \\
            \hline \\ 
&&${X} \leftarrow_U \lbrace 0,1,2\rbrace$ 
            \\ && $C \leftarrow \mathrm{Bern}(1-\delta)$ \\
           &&  $\wht{Y} \leftarrow_U \lbrace 0,1,2\rbrace$ & 
            \\ &&  $\wtt{C} \leftarrow \mathrm{Bern}(1-\delta_*)$  \\ &$\xleftarrow{X}$ &&$\xrightarrow{C, \wht{Y}}$\\ 
            & &   \textbf{DELAY}   \\
              &&   if $\wtt{C}=1, \widehat{X}=X$
              \\ &&   \hspace{0mm}else if $\wtt{C}=0, \widehat{X}=*$ 
              \\
&&&$\xleftarrow{\text{if } C=0, \text{ send }B }$ \\
&&&$\xrightarrow{\wht{X} }$\\
 & $\xrightarrow{A}$&&$\xleftarrow{\text{if } C=1, \text{ send }\widetilde{A}[0,1]}$\\
  & &   \textbf{Winning Condition}   \\
    &&  if $C=0$: \tab \tab \tab \tab \\
&&if $\MS(X,\wht{Y},A,B)=0 \land \wht{X}\neq *$,
\\
&&\hspace{1cm} $O^{test}_A=\perp$\\
 &&   else if $C=1\land \wtt{C}=1$:  \tab \tab \quad \\
&&if $A[0,1]\neq\widetilde{A}[0,1]$ \\
&&\hspace{1cm} $O^{test}_A=\perp$\\
&& Otherwise $O_A^{test}=\emptystring$ \tab \tab \tab
\\\\&& \textbf{Predicate}
\\&&  $V(X,\wht{Y},C, \wht{X}, A,B,  \wtt{A})=1$ iff $O_{A}^{test}=\emptystring$
\\       \end{tabular} }
        \caption{1-copy security game $G_1$}
        \label{table:ot-single-copy} 
    \end{table}
\begin{claim} \label{claim:some_bit_entropy_OT} Consider the definition of $p_{guess}(\cdot)$ from eq. \eqref{eq:guessing_prof_def}. Let $\rho$ be any state and define $A^x(i)$ to be the outcome of measuring $\rho$ with $\lbrace \Pi_0^{xi}, \Pi_1^{xi} \rbrace$ (see Table \ref{tab:ms-meas}). There exists $x\in \lbrace 0,1,2\rbrace$, $i \in \lbrace 0,1 \rbrace$ and a small constant $d>0$  such that 
\[
p_{guess}\left({A^x(i)}\right)_\rho \leq 1-\frac{2d}{3}.
\] 
\end{claim}
\begin{proof}
We defer the proof of this claim to the Appendix.
\end{proof}
 Since $\alice$ is honest, she chooses $X\in \{0,1,2\}$ uniformly and independently of ${\kappa}^{\overline{A}}$ (Table \ref{algo:ot_cheating_bob}). In the previous claim, $\rho$ can be set to $\alice$'s state (conditioned on $\bob^*$'s outcome for any arbitrary measurement) to get the following claim, in which we show that the value of the game $G_1$ is bounded away from 1. 
\begin{claim} \label{claim:value_G_1}
   $\vall(G_1)\leq 1-\eps_*$ for some small constant $\eps_*>0$ (see Table \ref{table:ot-single-copy}). 
\end{claim}
\begin{proof}We defer the proof of this claim to the Appendix. 
\end{proof}

Given a single-copy game $G_1$ with $\vall(G_1)$ bounded away from 1, the parallel-repetition theorem says that for an $n$-fold repetition of the game $G_1$ played in parallel (denoted as $G_n$),  $\vall(G_n)$ is exponentially small in $n$. Since the input distribution to our game $G_1$ is not independent, we use anchoring ideas in our proof. Since $\vall(G_1)$ is bounded away from 1,  using parallel repetition, we prove that the winning probability of the $n$-copy anchored game $\vall(G_n)$ is exponentially small, which helps in proving min-entropy guarantees for proving security against cheating $\bob^*$.

 Consider the game $G_1$ (Table \ref{table:ot-single-copy}).  We convert the game with \textbf{DELAY}  to a game with a restricted strategy as follows: Suppose the initial shared quantum state is $\varphi^{MN}\otimes \ket{0}^{N'}$ where $M$ and $N$ are with $\charlie$ and $\dave$ respectively and $N'$ is also with $\dave$ (which will be used to do a $\CNOT$ later).
$\charlie$ and $\dave$ receive their inputs $X$ and $C\wht{Y}\wht{X}$ respectively at the beginning and their answer registers are denoted by $A$ and $B$ respectively. \begin{itemize}
    \item  If $C=0$ (test phase), $\dave$ acts on the shared quantum state $\varphi^N, C$ and $\wht{Y}$ (but does not have access to the register $\wht{X}$) and performs a $\CNOT$ on $\ket{0}^{N'}$ controlled on his answer register $B$ after which access to $N'$ is lost. The $\CNOT$ corresponds to a computational basis measurement that simulates the quantum state's decoherence after the \textbf{DELAY}. After this $\CNOT$, $\dave$ gets access to the register $\wht{X}$ but is forbidden from doing any post-processing on the answer register $B$. This reflects the fact that in the security game $G_1$, $\dave$ already sends the answer before he gets $\wht{X}$.  
    \item If $C=1$ (protocol-phase), there is a mandatory $\CNOT$ gate between $\dave$'s part of the quantum state $\varphi^N$ and $\ket{0}^{N'}$ only after which $\dave$ gets access to $\wht{X}$. Neither $\charlie$ nor $\dave$ have access to $N'$. Note that before the $\CNOT$, $\dave$ can do any arbitrary quantum operation on $\varphi^N, C$ and $\wht{Y}$.  After the $\CNOT$, $\dave$ can use $\wht{X}$ to do (classical) post-processing to generate his answer $B$. \end{itemize}

In the game $G_1$, let the distribution over the inputs conditioned on $\wtt{C}=1$ be $\wtt{\mu}(X\wht{Y}C\wht{X})$. Note that in this distribution, we have the guarantee that $X=\wht{X}$. 
Let the distribution over the inputs conditioned on $\wtt{C}=0$ be $\mu_*$. This distribution has the guarantee that $\wht{X}=*$, where $*$ is a special symbol. {In other words, $${\mu_*}^{X\wht{Y}C\wht{X}}=\wtt{\mu}^{X\wht{Y}C}\otimes \ketbra{*}{*}^{\wht{X}}.$$}
 We define the anchored distribution $\mu^{X\wht{Y}C\wht{X}}$ as follows: \begin{equation}
     \label{eq:anchored_def}
 \mu=\delta_*\mu_*+(1-\delta_*)\wtt{\mu},
 \end{equation}
 where $\delta_*>0$ is a small constant as defined in Claim \ref{claim:main}. Note that for the case when $\wht{X}=*$, the game $G_1$ is always won. 
We also note that $\mu$ (eq. \eqref{eq:anchored_def}) is the distribution of inputs in the protocol in Table \ref{table:ot-single-copy} and is the same as the distribution that is generated for proving security against cheating $\alice^*$ in Table \ref{table:OT}.  
We now apply parallel repetition over the $n$-copy anchored game (denoted as $G_n$) using ideas similar to \cite{JPY,anchoring_2,anchoring_1, anchoring3}.

Before the game begins, let $\charlie$ and $\dave$ share a pure state on registers $AE'_ABE'_B$. Here $AE'_A\cong M$ and $BE'_B\cong NN'$. After getting the inputs, 
$\charlie$ and $\dave$ apply unitaries on their part of the state and measure registers $A$ and $B$ to get the answers. Since $\charlie$ and $\dave$ follow the restricted strategy, the overall unitary corresponding to $\dave$'s restricted strategy (on receiving input $\wht{Y},C,\wht{X}$) consists of the following $3$ unitaries: \begin{itemize}
     \item The first unitary depends on $\widehat{Y},C$ corresponding to the quantum operations of $\dave$ before the \textbf{DELAY}.
     \item After this, a $\CNOT$ acts between all registers of $\dave$ (say $N$) and $\ket{0}^{N'}$ after which $N'$ cannot be accessed by $\dave$. This corresponds to the decoherence due to the \textbf{DELAY}. After this, $\dave$ gets access to the register $\wht{X}$. 
     \item If $C=0$, $\dave$ is forbidden from doing any post-processing on the answer register $B$.
     \item If $C=1$, after the $\CNOT$, $\dave$ applies an isometry (which can depend on $\wht{X}$) on all of his registers along with some ancilla registers. {This isometry maps computational basis states to computational basis states and corresponds to the classical post-processing that $\dave$ can do after the \textbf{DELAY} once he gets access to $\wht{X}$.}
 \end{itemize} We consider the $n$-copy game $G_n$. For simplicity of notation, we will denote inputs for $G_n$ by $X$ and $Y=\wht{Y}C\wht{X}$ (and treat them as a vector of $n$-inputs, i.e., $X=X_1, X_2 \dots X_n$) and similarly for outputs $A$ and $B$. We also define random variables $D_i$ and $U_i$ for all $i \in [n]$. \begin{itemize}
     \item $D_i \leftarrow_{u}\{1,2\}$.
 \item If $D_i=1$, $U_i=X_i$ ($\charlie$'s input).
  \item If $D_i=2$, $U_i=\wht{X}_i$ (part of $\dave$'s input).
  \end{itemize}This defines a joint distribution ${\mu}(X\wht{Y}C\wht{X}DU)$.

Let $\mcC\subseteq [n]$ and $\overline{\mcC} $ represent its complement. Let the state $\ket{\theta}$ be defined as (here {$y$ represents $(\wht{y},c,\wht{x})$, i.e., the entire input to $\dave $}):
\begin{equation}\label{eq:def_theta}
 \ket{\theta} \defeq \sum_{x,y}\sqrt{\mu(x,y)}\ket{xxyy}^{\wtt{X}X\wtt{Y}Y}\otimes\sum_{a_{\mcC}b_{\mcC}du}\ket{a_{\mcC}b_{\mcC}dduu}^{A_{\mcC}B_{\mcC}\wtt{D}D\wtt{U}U}\otimes \ket{\gamma_{xya_{\mcC}b_{\mcC}du}}^{E_AE_B},
 \end{equation}
 where $E_A=E'_AA_{\overline{{\mcC}}}$ and $E_B=E'_BB_{\overline{{\mcC}}}$. Note that $\ket{\gamma_{xya_{\mcC}b_{\mcC}du}}^{E_AE_B}$ is unnormalised and \[\sum_{a_{\mcC}b_{\mcC}du}\ket{a_{\mcC}b_{\mcC}dduu}^{A_{\mcC}B_{\mcC}\wtt{D}D\wtt{U}U}\otimes \ket{\gamma_{xya_{\mcC}b_{\mcC}du}}^{E_AE_B}\] is the shared state after $\charlie$ and $\dave$ do their unitary operations corresponding to the questions $x$ and $y$ ({according to the restricted strategy}). 
     Consider the (purification of) state conditioned on winning in ${\mcC}$ as 
{\small \begin{equation}\label{eq:def_varphi}
 \ket{\varphi} \defeq \frac{1}{\sqrt{q}}\sum_{x,y}\sqrt{\mu(x,y)}\ket{xxyy}^{\wtt{X}X\wtt{Y}Y}\otimes\sum_{a_{\mcC}b_{\mcC}du:V(x_{\mcC},y_{\mcC},a_{\mcC},b_{\mcC})=1}\ket{a_{\mcC}b_{\mcC}dduu}^{A_{\mcC}B_{\mcC}\wtt{D}D\wtt{U}U}\otimes \ket{\gamma_{xya_{\mcC}b_{\mcC}du}}^{E_AE_B},
 \end{equation}}
 where $q$ is the probability of success on ${\mcC}$. Note that copies $\wtt{X}$ and $\wtt{Y}$ (of $X$ and $Y$) are needed so that the marginals of $X$ and $Y$ on tracing out $\wtt{X}$ and $\wtt{Y}$ are classical (and similarly for $D$ and $U$).   
\subsection*{Parameters and notation}\label{subsection:parameter}

\noindent  We have $E_A=E'_AA_{\overline{{\mcC}}}$ and $E_B=E'_BB_{\overline{{\mcC}}}$. Let us define $\wtt{A}\defeq \wtt{X}_{\overline{\mcC}}XE_A$ and $\wtt{B} \defeq \wtt{Y}_{\overline{\mcC}}YE_B$.  
Define \begin{align} 
    R_j\defeq X_{\mcC}Y_{\mcC}A_{\mcC}B_{\mcC}X_{<j}Y_{<j}D_{-j}U_{-j}. \label{eq:define_R_j}\end{align} 
     We let $\ket{\varphi_{r_j}}$ denote the pure state that results when we measure the register $R_j$ in $\ket{\varphi}$ in the computational basis and get result $r_j$ (similarly for $X_{\mcC}, Y_{\mcC}$, etc). We let $\ket{\varphi_{*}}$ denote the pure state that results when we measure theregister $\wht{X}_j$ in $\ket{\varphi}$ in the computational basis and get result $*$.
    
    Let $\wtt{B}=\widehat{B}\wht{X}B_{\overline{\mcC}}N'$ where $\widehat{B}$ contains all registers of $\dave$ after the $\CNOT$ (other than $\wht{X}$ and the answer register $B_{\overline{\mcC}}$).     We have $0<\delta_1,\delta_2, \delta_3, \delta_*< 0.1$ and $k'=\lfloor\delta_1 n\rfloor$. We also have  \begin{equation}    
\delta_3=\delta_2+\delta_1\log(|\clA|\cdot|\clB|), \label{eq:deltas_relation} \end{equation}
  \begin{equation}
      \label{eq:parameter_t}
  \delta_3=\delta_*^2/10^5,~t=\Bigg\lfloor\frac{-\delta_2n}{\log\left(1-\frac{1}{4}(\eps_*-23\sqrt{\delta_*})\right)}\Bigg\rfloor. \end{equation} Let  $\delta_0>0$ be such that \begin{equation}
      \label{eq:parameter_delta_0}
  \delta_0 n=t+100\sqrt{\delta_*}n.\end{equation} 
  We also have $\kappa  \leq \frac{1}{4}\eps_*-\delta_0$.
\begin{lemma}\label{lemma:main}
    \[\E_{x_{\mcC}y_{\mcC}a_{\mcC}b_{\mcC}du\leftarrow \varphi^{X_{\mcC}Y_{\mcC}A_{\mcC}B_{\mcC}DU}}\D(\varphi_{x_{\mcC}y_{\mcC}a_{\mcC}b_{\mcC}du}^{\wtt{X}_{\overline{\mcC}}\wtt{Y}_{\overline{\mcC}}XYE_AE_B}\|\theta_{x_{\mcC}y_{\mcC}du}^{\wtt{X}_{\overline{\mcC}}\wtt{Y}_{\overline{\mcC}}XYE_AE_B}) \leq -\log q + |\mcC|\log(|\clA|\cdot|\clB|). \]
\end{lemma}
\begin{proof}
From eqs. \eqref{eq:def_theta} and \eqref{eq:def_varphi} and Fact \ref{fact:d_infty_calculation}, we have\begin{equation}
    \label{eq:Dmax1}\D_\infty(\varphi^{\wtt{X}_{\overline{\mcC}}\wtt{Y}_{\overline{\mcC}}XYE_AE_BDU}\|\theta^{\wtt{X}_{\overline{\mcC}}\wtt{Y}_{\overline{\mcC}}XYE_AE_BDU})\leq -\log q.
\end{equation}
Let $p(a_{\mcC},b_{\mcC})$ be the probability of getting $(a_{\mcC},b_{\mcC})$ as the output when measuring $(A_{\mcC},B_{\mcC})$ registers in $\ket{\varphi}$ in the computational basis. From Fact \ref{fact:d_infty_calculation}, we have \begin{equation}
    \label{eq:Dmax2}
\D_{\infty}(\varphi_{a_{\mcC}b_{\mcC}}^{\wtt{X}_{\overline{\mcC}}\wtt{Y}_{\overline{\mcC}}XYE_AE_BDU}\|\varphi_{}^{\wtt{X}_{\overline{\mcC}}\wtt{Y}_{\overline{\mcC}}XYE_AE_BDU})\leq-\log p(a_{\mcC},b_{\mcC})
\end{equation} We have \begin{align*}
&\E_{a_{\mcC}b_{\mcC}\leftarrow \varphi^{A_{\mcC}B_{\mcC}}}\D_{\infty}(\varphi_{a_{\mcC}b_{\mcC}}^{\wtt{X}_{\overline{\mcC}}\wtt{Y}_{\overline{\mcC}}XYE_AE_BDU}\|\theta_{}^{\wtt{X}_{\overline{\mcC}}\wtt{Y}_{\overline{\mcC}}XYE_AE_BDU}) \\& \leq \E_{a_{\mcC}b_{\mcC}\leftarrow \varphi^{A_{\mcC}B_{\mcC}}}\Big[\D_{\infty}(\varphi_{a_{\mcC}b_{\mcC}}^{\wtt{X}_{\overline{\mcC}}\wtt{Y}_{\overline{\mcC}}XYE_AE_BDU}\|\varphi_{}^{\wtt{X}_{\overline{\mcC}}\wtt{Y}_{\overline{\mcC}}XYE_AE_BDU}) \\&\tab \tab + \D_{\infty}(\varphi_{}^{\wtt{X}_{\overline{\mcC}}\wtt{Y}_{\overline{\mcC}}XYE_AE_BDU}\|\theta_{}^{\wtt{X}_{\overline{\mcC}}\wtt{Y}_{\overline{\mcC}}XYE_AE_BDU}) \Big] &\mbox{(Fact \ref{fact:chain_rule_dmax})}
\\&\leq \E_{a_{\mcC}b_{\mcC}\leftarrow \varphi^{A_{\mcC}B_{\mcC}}}\left[-\log p(a_{\mcC},b_{\mcC}) -\log q\right] &\mbox{(eqs. \eqref{eq:Dmax1} and \eqref{eq:Dmax2})}
\\&\leq -\log q + |\mcC|\log(|\clA|\cdot|\clB|). &\mbox{(Fact \ref{fact:entropic_inequalities} and Definition \ref{def:entropy})}
\end{align*}
Thus, we get \begin{align*}
     &-\log q + |\mcC|\log(|\clA|\cdot|\clB|) \\& \geq \E_{a_{\mcC}b_{\mcC}\leftarrow \varphi^{A_{\mcC}B_{\mcC}}}\D_{\infty}(\varphi_{a_{\mcC}b_{\mcC}}^{\wtt{X}_{\overline{\mcC}}\wtt{Y}_{\overline{\mcC}}XYE_AE_BDU}\|\theta_{}^{\wtt{X}_{\overline{\mcC}}\wtt{Y}_{\overline{\mcC}}XYE_AE_BDU}) \\
     &\geq \E_{a_{\mcC}b_{\mcC}\leftarrow \varphi^{A_{\mcC}B_{\mcC}}}\D(\varphi_{a_{\mcC}b_{\mcC}}^{\wtt{X}_{\overline{\mcC}}\wtt{Y}_{\overline{\mcC}}XYE_AE_BDU}\|\theta_{}^{\wtt{X}_{\overline{\mcC}}\wtt{Y}_{\overline{\mcC}}XYE_AE_BDU})  &\mbox{(Fact \ref{fact:monotonicity})}
    \\ & \geq \E_{x_{\mcC}y_{\mcC}a_{\mcC}b_{\mcC}du\leftarrow \varphi^{X_{\mcC}Y_{\mcC}A_{\mcC}B_{\mcC}DU}}\D(\varphi_{x_{\mcC}y_{\mcC}a_{\mcC}b_{\mcC}du}^{\wtt{X}_{\overline{\mcC}}\wtt{Y}_{\overline{\mcC}}XYE_AE_B}\|\theta_{x_{\mcC}y_{\mcC}du}^{\wtt{X}_{\overline{\mcC}}\wtt{Y}_{\overline{\mcC}}XYE_AE_B}). &\mbox{(Fact \ref{fact:chain_rule_D})}
\end{align*}
\end{proof}

For the $n$-copy game $G_n$, we condition on success on a subset $\mcC \subseteq [n]$ of size $r$. The following Claim says that either the probability of success in $\mcC$ is already (exponentially) small, otherwise there exists a coordinate  $j\notin \mcC$ for which the probability of winning in the coordinate $j$, conditioned on success in $\mcC$, is close to $\vall(G_1)$ and hence is bounded away from $1$. 
\begin{claim}\label{claim:main}
    Consider the $n$-copy game $G_n$. There exists a set $\{i_1, \dots i_{k'}\}$ such that $\forall~ 1\leq r< k'=\lfloor\delta_1 n\rfloor$, at least one of the following conditions hold:
    \begin{itemize}
        \item  $\Pr\left(T^{(r)}=1\right) \leq 2^{-\delta_2 n} .$
        \item $\Pr\left(T_{i_{r+1}}|T^{(r)}=1\right) \leq 1-\eps_*+23\sqrt{\delta_*}. $
    \end{itemize}
    where $T^{(r)}=\prod_{j=1}^rT_{i_j}$ and $T_i=V(X_i,Y_i,A_i,B_i)$.
\end{claim}
\noindent We first state Claim \ref{claim:all_markovs} and Claim \ref{claim:all_equations} (whose proofs are deferred to the Appendix) and use them to prove Claim \ref{claim:main}.  
\begin{claim}\label{claim:all_markovs}
 Suppose for some $1\leq r< k'$, we have\[\Pr\left(T^{(r)}=1\right) > 2^{-\delta_2 n} .
        \] Let $\mcC \defeq \{i_1, \dots i_r\}$. Then, for a randomly selected coordinate $j$ outside $\mcC$ with probability at least $\frac{1}{4}$ (over the choice of $j$), the following conditions hold:
    \begin{equation}
    \I(X_j:\wtt{B}|R_jD_jU_j)_{\varphi} \leq \frac{8\delta_3}{(1-\delta_1)}\leq 10\delta_3. \label{eq:condition1} \end{equation}
     \begin{equation}\I(Y_j:\wtt{A}|R_jD_jU_j)_{\varphi} \leq \frac{8\delta_3}{(1-\delta_1)}\leq 10\delta_3.\label{eq:condition2} \end{equation}
      \begin{equation}
\D(\varphi^{{X}_j{Y}_j}\|\theta^{X_jY_j})\leq \frac{8\delta_3}{(1-\delta_1)}\leq 10\delta_3.\label{eq:condition3} \end{equation}
     \begin{equation}\frac{1}{2}\left(\I(X_j:R_j|Y_j)_{\varphi} + \I(Y_j:R_j|X_j)_{\varphi} \right) \leq \frac{8\delta_3}{(1-\delta_1)}\leq 10\delta_3.\label{eq:condition4} 
     \end{equation}
     \begin{equation}
 \E_{r_jd_ju_j\leftarrow \varphi^{R_jD_jU_j}}\D(\varphi_{r_jd_ju_j}^{X_jY_j}\|\theta_{d_ju_j}^{X_jY_j})  \leq \frac{8\delta_3}{(1-\delta_1)}\leq 10\delta_3. \label{eq:intermediate_markov}
\end{equation}
\begin{equation}
    \label{eq:last_markov}
\I(\wht{X}_j:\wht{B}|R_jD_jU_j)_{\varphi} \leq \frac{8\delta_3}{(1-\delta_1)}\leq 10\delta_3.
\end{equation}
\end{claim}
\begin{proof}We defer the proof of this claim to the Appendix.
\end{proof}

\begin{claim}\label{claim:all_equations}
    Given that eqs. \eqref{eq:condition1} to \eqref{eq:last_markov} in Claim \ref{claim:all_markovs} hold (which happens with probability at least $\frac{1}{4}$ for a randomly selected coordinate $j$ outside $\mcC$), the following eqs. hold (recall the notation from eq. \eqref{eq:notation}):

 \noindent \textbf{Existence of unitaries (Fact \ref{fact:unitary_existence})} \begin{equation}
    \label{eq:existence_unitary_1}
\I(X_j:\wtt{B}|R_j)_{\varphi_*} \leq\delta_*.
\end{equation}
\begin{align}
\I(\wht{Y}_jC_j:\wtt{A}|R_j)_{\varphi_*}\leq \delta_*. \label{eq:existence_unitary_2}
\end{align}
\begin{equation}
     \E_{r_j\leftarrow \varphi_*^{R_j}}\onenorm{\varphi_{*,r_j}^{X_j\wht{Y}_jC_j}}{\varphi_{*,r_j}^{X_j}\otimes\varphi_{*,r_j}^{\wht{Y}_jC_j}}\leq \sqrt{\delta_*}. \label{eq:existence_unitary_3}\end{equation}
   \noindent \textbf{Correlated sampling of $R_j$ }

     \begin{equation}
\I(X_j:R_j|\wht{Y}_jC_j)_{\varphi_*} \leq\delta_*. \label{eq:joint_sampling1}
\end{equation}
\begin{equation}
    \I(R_j:\wht{Y}_jC_j|X_j)_{\varphi_*} \leq\delta_*.
\label{eq:joint_sampling2}
\end{equation}
\noindent \textbf{Correctly incorporating $\wht{X}_j$ after the \textbf{DELAY}}

We have following two {Markov Chains} (Definition \ref{def:markov}) in state $\ket{\varphi_{}}$ (here $\widehat{B},\wht{X}_j,X_j,R_j$ are classical):
\begin{equation}
\label{eq:markov_generate_B}
X_jE_A-\widehat{B}\wht{X}_jR_j-B_j.
\end{equation}
\begin{equation}
\label{eq:markov_in_phi}
E_A-X_j\widehat{B}R_j-
\wht{X}_jB_j.
\end{equation}
\begin{align}
 \E_{r_jx_j\leftarrow \varphi^{R_jX_j}} \D(\varphi_{r_jx_j}^{\wht{X}_j\wht{B}}\|\varphi_{r_jx_j}^{\wht{X}_j}\otimes \varphi_{r_jx_j}^{\wht{B}}) \leq 20\delta_3. \label{eq:tensor-lemma}
\end{align}
\begin{align}    
\onenorm{\varphi_{*}^{R_jX_j}}{\varphi^{R_jX_j}} \leq 2\sqrt{\delta_*}. \label{eq:R_jX_j_star}
\end{align}
  \begin{equation}   \label{eq:closeness_in_D_first}
\onenorm{\varphi_{*}^{R_jX_j\wht{B}}}{\varphi^{R_jX_j\wht{B}}} \leq3\sqrt{\delta_*}.  
         \end{equation}
\begin{align}
\onenorm{ \varphi^{X_jR_jE_A\wht{B}\wht{X}_jB_j}}{ \varphi_*^{X_jR_jE_A\wht{B}}\cdot \theta^{\wht{X}_j}_{X_j\wht{Y}_jC_j} \cdot\varphi^{B_j}_{R_j\wht{X}_j\wht{B}}}\leq 4\sqrt{\delta_*}. 
            \label{eq:closeness_hybrid_P1}
\end{align}
\end{claim}
\begin{proof}We defer the proof of this claim to the Appendix.
\end{proof}

\subsection*{Strategy for $G_1$}
The strategy for the 1-copy game will be as follows (see Figure \ref{fig:Strategy}):
\begin{figure}[H]
    \centering
\fbox{\begin{minipage}{40em} \vspace{2mm}
\begin{enumerate}
    \item $\charlie$ and $\dave$ get their inputs $x_j$ and $y_j=\wht{y}_jc_j\wht{x}_j$ according to the anchored distribution $\mu$ (eq. \eqref{eq:anchored_def}).
    \item Using Fact \ref{fact:hollenstein_sampling} and eqs. \eqref{eq:joint_sampling1} and \eqref{eq:joint_sampling2}, both of them jointly sample $R_j=r_j$ (here $\dave$ does not have access to $\wht{x}_j$).
    \item They start with the shared state $\ket{\varphi_{*,r_j}}$. Using eqs. \eqref{eq:existence_unitary_1} and \eqref{eq:existence_unitary_2}, they use the unitary operators $\{U_{x_j,r_j}\}_{x_j}$ and $\{V_{\wht{y}_j,c_j,r_j} \}_{\wht{y}_j,c_j}$ from 
    Fact \ref{fact:unitary_existence} to create the state \begin{equation}
        \label{eq:state_after_unitaries}
    (U_{x_j,r_j} \otimes V_{\wht{y}_j,c_j,r_j} ) \ketbra{\varphi_{*,r_j}}{\varphi_{*,r_j}} (U_{x_j,r_j}^\dagger \otimes V^\dagger_{\wht{y}_j,c_j,r_j}) .\end{equation}  
  \item If $c_j=0$, before the \textbf{DELAY} in the protocol, $\dave$ measures his register $B_j$ to generate the answer. 
\item If $c_j=1$, a $\CNOT$ acts between all registers of $\dave$ and $\ket{0}^{N''}$ after which $N''$ cannot be accessed by $\dave$ (corresponding to the \textbf{DELAY} in the restricted strategy). After this, $\dave$ gets access to $\wht{x}_j$.
\item  $\dave$ now creates $B_j$ (locally) according to the distribution $\varphi^{B_j}_{\wht{x},r_j\wht{b}}$. 
\end{enumerate}
\end{minipage}}
 \caption{Protocol $P$ for $G_1$.}
 \label{fig:Strategy}
\end{figure}

\section*{Proof of Claim \ref{claim:main} }
\begin{proof}
Let $\mcC \defeq \{i_1, \dots i_r\}$.
    Suppose there exists some protocol (with the restricted strategy) that wins in $\mcC$ with a probability greater than $2^{-\delta_2 n}$, and conditioning on success in $\mcC$, let the probability of winning the game in coordinate $j$ be $\omega$. In other words,  consider the state $\varphi^{X_jY_jR_jE_AE_B}$ where $X_jE_A$ is given to $\charlie$, $Y_jE_B$ is given to $\dave$ and $R_j$ is shared between them. On measuring their answer registers $A_j$ and $B_j$ in the computational basis (and outputting them), the probability that they win the game in the $j$-th coordinate (i.e., $V(X_j,Y_j,A_j,B_j)=1$) with the restricted strategy is $\omega$. Note that $\wtt{B}=\wtt{Y}_{\widebar{\mcC}}YE_B=\widehat{B}\wht{X}B_{\widebar{\mcC}}N'$. Since $\wht{Y}_jC_j$ is contained in $\wht{B}$, we can consider the state $\varphi^{X_j\wht{X}_jR_jE_A\wht{B}B_j}$ and trace out the other registers as they do not affect $V(X_j,Y_j,A_j,B_j)$.

    We now consider protocols $P_1, P_2$ and $P_3$. We show that the states created in these protocols are close to $\varphi^{X_j\wht{X}_jR_jE_A\wht{B}B_j}$, which will show that the probability of winning the game in the $j$-th coordinate for these protocols are close to $\omega$. Note that the final protocol is the same as the protocol $P$ in Fig. \ref{fig:Strategy} (which is the strategy we use for winning $G_1$). This will help us in upper bounding the value of $\omega$ since $\val(G_1)\leq 1-\eps_*$ (Claim \ref{claim:value_G_1}).

\begin{itemize}
    \item Consider a protocol $P_1$ where $\charlie$ and $\dave$ get questions from $(x_j,\wht{y}_jc_j\wht{x}_j)\leftarrow \varphi_*^{X_j\wht{Y}_jC_j}\cdot \theta^{\wht{X}_j}_{X_j\wht{Y}_jC_j}$ and they share $r_j\leftarrow\varphi_{*,x_j\wht{y}_jc_j}^{R_j}$. They also share the state $\ket{\varphi_{*,r_j}}$ for all $r_j$.  Consider the starting state $\ket{\varphi_{*,r_j}}$. From 
    Fact \ref{fact:unitary_existence},   $\charlie$ and $\dave$ use the unitary operators $\{U_{x_j,r_j}\}_{x_j}$ and $\{V_{\wht{y}_j,c_j,r_j} \}_{\wht{y}_j,c_j}$  to create the state 
    \begin{align}
\sigma=\E_{(x_j\wht{y}_jc_jr_j)\leftarrow\varphi_*^{X_j\wht{Y}_jC_jR_j}}  (U_{x_j,r_j} \otimes V_{\wht{y}_j,c_j,r_j} ) \ketbra{\varphi_{*,r_j}}{\varphi_{*,r_j}} (U_{x_j,r_j}^\dagger \otimes V^\dagger_{\wht{y}_j,c_j,r_j}). \label{eq:define_sigma} \end{align}
$\charlie$ and $\dave$ now consider the state $\sigma^{X_jR_jE_A\wht{B}}$ (and trace out the other registers of $\dave$). $\charlie$ then measures his answer register $A_j$ in the computational basis. If $c_j=0$, before the $\CNOT$ in the protocol, $\dave$ measures his register $B_j$ to generate the answer. If $c_j=1$, a $\CNOT$ acts between all registers of $\dave$ and $\ket{0}^{N''}$ after which $N''$ cannot be accessed by $\dave$. Since the relevant registers of $\dave$ for winning the game in the $j$-th coordinate are $\wht{X}_jR_j\wht{B}B_j$, which are classical, this $\CNOT$ does not change the state on the relevant registers.   After this, $\dave$ gets access to $\wht{x}_j$ (which has distribution $\theta^{\wht{X}_j}_{x_j\wht{y}_jc_j}$). $\dave$ now creates $B_j$ (locally) according to the distribution $\varphi^{B_j}_{\wht{x},r_j\wht{b}}$. Thus, the resultant state created in the protocol is \[\sigma^{X_jR_jE_A\wht{B}}\cdot \theta^{\wht{X}_j}_{X_j\wht{Y}_jC_j} \cdot\varphi^{B_j}_{R_j\wht{X}_j\wht{B}}.\]

    Claim \ref{claim:closeness_of_sigma} along with eq.~\eqref{eq:closeness_hybrid_P1} gives \[\onenorm{ \varphi^{X_jR_jE_A\wht{B}\wht{X}_jB_j}}{\sigma^{X_jR_jE_A\wht{B}}\cdot \theta^{\wht{X}_j}_{X_j\wht{Y}_jC_j} \cdot\varphi^{B_j}_{R_j\wht{X}_j\wht{B}}}\leq 14\sqrt{\delta_*}, \]
    and hence, the winning probability for the protocol $P_1$ is at least  $\omega-14\sqrt{\delta_*}$.

            \item Consider a protocol $P_2$ where $\charlie$ and $\dave$ get questions $(x_j,\wht{y}_jc_j\wht{x}_j)\leftarrow \varphi_*^{X_j\wht{Y}_jC_j}\cdot\theta^{\wht{X}_j}_{X_j\wht{Y}_jC_j}$. {Fact $\ref{fact:approx_markov_sampling}$  and Fact \ref{fact:hollenstein_sampling} along with eqs. \eqref{eq:joint_sampling1} and \eqref{eq:joint_sampling2} allows joint-sampling of $R_j\leftarrow\varphi_{*,x_j\wht{y}_jc_j}^{R_j}$ by $\charlie$ and $\dave$ using public randomness $S$ and local functions $f_A$ and $f_B$ (here ${R}_j$ is classical and $\wtt{R}_j$ is a copy of ${R}_j$):} \begin{equation*}
\onenorm{\varphi_*^{X_j\wht{Y}_jC_jR_j\wtt{R}_j}}{\varphi_*^{X_j\wht{Y}_jC_j}f_{A}(S,X_j)f_B(S,\wht{Y}_jC_j)} \leq 5\sqrt{\delta_*}.
            \end{equation*} They then execute the same strategy as in $P_1$. Thus, we get that the winning probability for the protocol $P_2$ is at least $\omega-14\sqrt{\delta_*}-5\sqrt{\delta_*}=\omega-19\sqrt{\delta_*}$.
           
    \item Consider a protocol $P_3$ where $\charlie$ and $\dave$ get questions  $(x_j,y_j)\leftarrow \theta^{X_j\wht{Y}_jC_j\wht{X}_j}$ and they execute same strategy as in $P_2$. From eq. \eqref{eq:closeness_in_D_first} and Fact \ref{fact:data}, we get
 \[ \onenorm{\varphi_{*}^{X_j\wht{Y}_jC_j}}{\varphi^{X_j\wht{Y}_jC_j}} \leq 3\sqrt{\delta_*}. 
   \]From eq. \eqref{eq:condition3} and Fact \ref{fact:pinsker}, we have \[ \onenorm{\theta^{X_j\wht{Y}_jC_j}}{\varphi^{X_j\wht{Y}_jC_j}} \leq \sqrt{10\delta_3}. 
   \] Thus, we get that \begin{align*}
\onenorm{\theta^{X_j\wht{Y}_jC_j\wht{X}_j}}{\varphi_*^{X_j\wht{Y}_jC_j}\cdot \theta^{\wht{X}_j}_{X_j\wht{Y}_jC_j}} &=\onenorm{\theta^{X_j\wht{Y}_jC_j}\cdot \theta^{\wht{X}_j}_{X_j\wht{Y}_jC_j}}{\varphi_*^{X_j\wht{Y}_jC_j}\cdot \theta^{\wht{X}_j}_{X_j\wht{Y}_jC_j}}
\\&=\onenorm{\theta^{X_j\wht{Y}_jC_j}}{\varphi_*^{X_j\wht{Y}_jC_j}}
\\&\leq  3\sqrt{\delta_*}+\sqrt{10\delta_3},
   \end{align*} and hence, the winning probability for the protocol $P_3$ is at least $\omega-22\sqrt{\delta_*}-\sqrt{10\delta_3}$. Note that $P_3$ is the same as $P$ in Fig. \ref{fig:Strategy}, i.e., our strategy for game $G_1$. From Claim \ref{claim:value_G_1},  we get that $\omega-22\sqrt{\delta_*}-\sqrt{10\delta_3}\leq 1-\eps_*.$
   
\end{itemize}
Using $\delta_3=\delta_*^2/10^5$, we get
$\omega\leq 1-\eps_*+23\sqrt{\delta_*}.$
\end{proof}
\begin{claim}\label{claim:closeness_of_sigma}
    \begin{align*} &\onenorm{\varphi_*^{X_jR_jE_A\wht{B}}\cdot \theta^{\wht{X}_j}_{X_j\wht{Y}_jC_j} \cdot\varphi^{B_j}_{R_j\wht{X}_j\wht{B}}}{\sigma^{X_jR_jE_A\wht{B}}\cdot \theta^{\wht{X}_j}_{X_j\wht{Y}_jC_j} \cdot\varphi^{B_j}_{R_j\wht{X}_j\wht{B}}}\leq 10\sqrt{\delta_*},
\end{align*}where $\sigma$ is from eq. \eqref{eq:define_sigma}.
\end{claim}
\begin{proof}
    
    Define $\eps^a_{r_j}\defeq \I(X_j:\wtt{B})_{\varphi_{*,r_j}} $ and $\eps^b_{r_j}\defeq\I(\wht{Y}_jC_j:\wtt{A})_{\varphi_{*,r_j}}$. From eqs. \eqref{eq:existence_unitary_1} and \eqref{eq:existence_unitary_2}, we have
    \begin{align} \label{eq:expectation_epsilon}
\E_{r_j\leftarrow\varphi_*^{R_j}} ~\eps^a_{r_j} \leq \delta_* \text{ and }
\E_{r_j\leftarrow\varphi_*^{R_j}} ~\eps^b_{r_j} \leq \delta_*. 
    \end{align}
 Using Fact \ref{fact:unitary_existence} (with the starting state $\ket{\varphi_{*,r_j}}$), we have \[
\E_{(x_j\wht{y}_jc_j)\leftarrow\varphi_{*,r_j}^{X_j\wht{Y}_jC_j}}
[\big\|(U_{x_j,r_j} \otimes V_{\wht{y}_j,c_j,r_j} ) \ketbra{\varphi_{*,r_j}}{\varphi_{*,r_j}} (U_{x_j,r_j}^\dagger \otimes V^\dagger_{\wht{y}_j,c_j,r_j}) -\ketbra{\varphi_{*,r_jx_j\wht{y}_jc_j}}{\varphi_{*,r_jx_j\wht{y}_jc_j}} \big\|_1
]\]\begin{equation*}
\leq 4\sqrt{\eps^a_{r_j}}+4\sqrt{\eps^b_{r_j}} + 2 \onenorm{\varphi_{*,r_j}^{X_j\wht{Y}_jC_j}}{\varphi_{*,r_j}^{X_j}\otimes\varphi_{*,r_j}^{\wht{Y}_jC_j}}.\end{equation*}
    Taking $\E_{r_j\leftarrow \varphi_*^{R_j}}$, we get
\[
\E_{(x_j\wht{y}_jc_jr_j)\leftarrow\varphi_*^{X_j\wht{Y}_jC_jR_j}}
[\big\|(U_{x_j,r_j} \otimes V_{\wht{y}_j,c_j,r_j} ) \ketbra{\varphi_{*,r_j}}{\varphi_{*,r_j}} (U_{x_j,r_j}^\dagger \otimes V^\dagger_{\wht{y}_j,c_j,r_j})-\ketbra{\varphi_{*,r_jx_j\wht{y}_jc_j}}{\varphi_{*,r_jx_j\wht{y}_jc_j}} \big\|_1]\]\vspace{-8mm}\begin{align} & \leq \E_{r_j\leftarrow \varphi_*^{R_j}} \left(4\sqrt{\eps^a_{r_j}}+4\sqrt{\eps^b_{r_j}} + 2 \onenorm{\varphi_{*,r_j}^{X_j\wht{Y}_jC_j}}{\varphi_{*,r_j}^{X_j}\otimes\varphi_{*,r_j}^{\wht{Y}_jC_j}}\right)
 \nonumber
\\&\leq 4\sqrt{\delta_*}+ 4\sqrt{\delta_*} + 2 \sqrt{\delta_*} =10\sqrt{\delta_*}.\label{eq:closeness_P2}\end{align}
Here, the last inequality follows from eqs. \eqref{eq:expectation_epsilon} and \eqref{eq:existence_unitary_3} and concavity of square-root. 
Consider\begin{align} &\onenorm{\varphi_*^{X_jR_jE_A\wht{B}}\cdot \theta^{\wht{X}_j}_{X_j\wht{Y}_jC_j} \cdot\varphi^{B_j}_{R_j\wht{X}_j\wht{B}}}{\sigma^{X_jR_jE_A\wht{B}}\cdot \theta^{\wht{X}_j}_{X_j\wht{Y}_jC_j} \cdot\varphi^{B_j}_{R_j\wht{X}_j\wht{B}}}\\
&=\onenorm{\sigma^{X_jR_jE_A\wht{B}}}{\varphi_*^{X_jR_jE_A\wht{B}}} \nonumber\\&\leq \Big\|\E_{(x_j\wht{y}_jc_jr_j)\leftarrow\varphi_*^{X_j\wht{Y}_jC_jR_j}}  (U_{x_j,r_j} \otimes V_{\wht{y}_j,c_j,r_j} ) \ketbra{\varphi_{*,r_j}}{\varphi_{*,r_j}} (U_{x_j,r_j}^\dagger \otimes V^\dagger_{\wht{y}_j,c_j,r_j}) \nonumber \\&\hspace{1.5cm}-{\E_{(x_j\wht{y}_jc_jr_j)\leftarrow\varphi_*^{X_j\wht{Y}_jC_jR_j}} \ketbra{\varphi_{*,r_jx_j\wht{y}_jc_j}}{\varphi_{*,r_jx_j\wht{y}_jc_j}}  }\Big\|_1 \nonumber \\&\leq 10\sqrt{\delta_*}.  &\mbox{(eq. \eqref{eq:closeness_P2})} \nonumber
\end{align}
\end{proof}

\begin{claim}\label{claim:main_avg}
    Let $k'=\lfloor\delta_1 n\rfloor$. From Claim \ref{claim:value_G_1}, we have $\val(G_1)\leq1-\eps_*$. Consider the $n$-copy game $G_n$. For  $0<\delta_1,\delta_2, \delta_3, \delta_*< 0.1$ where $\delta_3=\delta_*^2/10^5$ and \[\delta_3=\delta_2+\delta_1\log(|\clA|\cdot|\clB|),\] suppose there  exists a set $\mcC=\{i_1, \dots i_{r}\}$ for some $1\leq r< k'$ such that
   \[\Pr\left(T^{(r)}=1\right) > 2^{-\delta_2 n} \]
   where $T^{(r)}=\prod_{j=1}^rT_{i_j}$ and $T_i=V(X_i,Y_i,A_i,B_i)$. Then
    
    \[\E_{i_{r+1}\leftarrow \widebar{\mcC}}\Pr\left(T_{i_{r+1}}|T^{(r)}=1\right) \leq1-\frac{1}{4}(\eps_*-23\sqrt{\delta_*}) . \]
\end{claim}

\begin{proof}Consider the following two cases:
   \begin{itemize}
       \item Case 1- Sampling a random coordinate $j$ outside $\mcC$ satisfies all the six Markov's inequalities from Claim \ref{claim:all_markovs} (eqs. \eqref{eq:condition1}, \eqref{eq:condition2}, \eqref{eq:condition3}, \eqref{eq:condition4}, \eqref{eq:intermediate_markov} and \eqref{eq:last_markov}). This happens with probability at least $\frac{1}{4}$. Setting $i_{r+1}=j$, from Claim \ref{claim:main}, we get \[\Pr\left(T_{i_{r+1}}|T^{(r)}=1\right) \leq 1-\eps_*+23\sqrt{\delta_*}. \]
        \item Case 2- Sampling a random $j$ outside $\mcC$ does not satisfy at least one of the 6 Markov's inequalities. In this case,  we get \[\Pr\left(T_{i_{r+1}}|T^{(r)}=1\right) \leq 1. \]
   \end{itemize}
   We thus have
   \begin{align*}
   \E_{i_{r+1}\leftarrow \widebar{\mcC}}\Pr\left(T_{i_{r+1}}|T^{(r)}=1\right) &\leq \frac{1}{4}(1-\eps_*+23\sqrt{\delta_*}) + \frac{3}{4}\cdot 1
   \\&\leq 1-\frac{1}{4}(\eps_*-23\sqrt{\delta_*}).\end{align*}
\end{proof}

\subsection*{Threshold theorem}
\noindent  This proof follows ideas similar to \cite{concentration,anup_rao}. 
 Let $t$ be the largest  integer such that \[2^{-\delta_2n} \leq \left(1-\frac{1}{4}(\eps_*-23\sqrt{\delta_*})\right)^t.\] This gives \begin{equation}
\label{eq:def_t}
 t=\Bigg\lfloor\frac{-\delta_2n}{\log\left(1-\frac{1}{4}(\eps_*-23\sqrt{\delta_*})\right)}\Bigg\rfloor.\end{equation}

\begin{claim} \label{claim:before_threshold} Let $t$ be from eq. \eqref{eq:def_t}.  Let $i_1,i_2 \dots i_k$ denote a sequence of distinct random elements of $[n]$. For each $j \in [t]$, let $S_j=\{i_1,i_2\dots i_j\}$. Let $W_S$ denote the event that the players win in the coordinates
included in the set $S$. We have
\[\Pr(W_{S_t})\leq 2\left(1-\frac{1}{4}(\eps_*-23\sqrt{\delta_*})\right)^t.\]    
\end{claim}
\begin{proof}
    For each $j \in [t]$, let $L_j$ denote the event determined by $S_j$ that \[\Pr[W_{S_j}] \leq \left(1-\frac{1}{4}(\eps_*-23\sqrt{\delta_*})\right)^t,\] and $\widebar{L}_j$ denote the complement event. Then \[\Pr[W_{S_t}] = \Pr[W_{S_t} \land L_t] + \Pr[W_{S_t} \land \widebar{L}_t].\]
    The first term is bounded by $\Pr[W_{S_t}|L_t] \leq \left(1-\frac{1}{4}(\eps_*-23\sqrt{\delta_*})\right)^t$. 
\[\Pr[W_{S_t} \land \widebar{L}_t] = \prod^t_{j=1}
\Pr[W_{S_j} \land \widebar{L}_j |W_{S_{j-1}} \land \widebar{L}_{j-1}].\]
By Claim \ref{claim:main_avg}, the $j^{th}$ term in this product is bounded by $\Pr[W_{S_j}
|W_{S_{j-1}} \land \widebar{L}_{j-1}] \leq 1-\frac{1}{4}(\eps_*-23\sqrt{\delta_*}).$
Thus,\[ \Pr[W_{S_t}
] \leq 2\left(1-\frac{1}{4}(\eps_*-23\sqrt{\delta_*})\right)^t.\]
\end{proof}

\subsection*{Proof of Claim \ref{claim:threshold}}
\begin{proof}
Recall that $$\delta_0 n=t+100\sqrt{\delta_*}n.$$    Whenever
$\left(\sum_{i\in [n]}W_i\geq (1-\frac{1}{4}\eps_*+\delta_0)n\right)$, let us pick a random subset $S$ of size $t$ from the set of coordinates where the
players won and we blame this set for this bad event. Then, the probability that we blame any fixed
set $S$ is at most ${\Pr(W_S)}{\binom{(1-\frac{1}{4}\eps_*+\delta_0)n}{t}}^{-1}$. By the union bound, we have \begin{align*}
&\Pr\left(\sum_{i\in [n]}W_i\geq (1-\frac{1}{4}\eps_*+\delta_0)n\right)_{G_n} \\&\leq \sum_{S}{\Pr(W_S)}{\binom{(1-\frac{1}{4}\eps_*+\delta_0)n}
 {t}}^{-1}
 \\&\leq 2{\binom{n}
 {t}}\left(1-\frac{1}{4}(\eps_*-23\sqrt{\delta_*})\right)^t{\binom{(1-\frac{1}{4}\eps_*+\delta_0)n}
 {t}}^{-1} &\mbox{(Claim \ref{claim:before_threshold})}
 \\&\leq 2\left(1-\frac{1}{4}(\eps_*-23\sqrt{\delta_*})\right)^t \left(\frac{n}{(1-\frac{1}{4}\eps_*+\delta_0)n-t+1}\right)^t
\\& \leq 2 \left(\frac{n\left(1-\frac{1}{4}(\eps_*-23\sqrt{\delta_*})\right)}{(1-\frac{1}{4}\eps_*+\delta_0)n-t+1}\right)^t.
\end{align*}
On setting $\delta_0 =\frac{t}{n}+100\sqrt{\delta_*}$, we get
\begin{align*}
 \Pr\left(\sum_{i\in [n]}W_i\geq (1-\frac{1}{4}\eps_*+\delta_0)n\right)_{G_n}    &\leq 2 \left(\frac{\left(1-\frac{1}{4}(\eps_*-23\sqrt{\delta_*})\right)}{(1-\frac{1}{4}\eps_*+100\sqrt{\delta_*})}\right)^t\\&\leq 2^{-\delta' n}, &\mbox{(eq. \eqref{eq:def_t})}
\end{align*}
for some small constant $\delta'>0$.
\end{proof}

\noindent {\bf Acknowledgements:} We thank Pranab Sen for helpful discussions. 

This work is supported by the National Research Foundation, Singapore, through the National Quantum Office, hosted in A*STAR, under its Centre for Quantum Technologies Funding Initiative (S24Q2d0009). This work was done in part while R.J. was visiting the
Technion-Israel Institute of Technology, Haifa, Israel, and the Simons
Institute for the Theory of Computing, Berkeley, CA, USA, and when U.K. and S.C. were at the Centre for Quantum Technologies, National University of Singapore.

\newpage

\appendix

\section{Appendix}\vspace{-2mm}
\subsection*{Composition theorem}\label{sec:composition}
\vspace{-2mm}

Consider two cryptographic tasks $f$ and $g$, where $\textsc{ideal}(f)$ and $\textsc{ideal}(g)$ are the ideal functionalities for $f$ and $g$ respectively.
Let $\Pf$ be an (honest) protocol for $f$  using devices $\device_f$.
Let $\Pgf$ (see Table~\ref{prot:pgf}) be an (honest) oracle-aided protocol for $g$ using devices $\device_g$ and oracle calls to $\textsc{ideal}(f)$.
{Recall that we follow the convention that $\widehat{A}$ and $\widehat{B}$ denote all residual registers with $\alice$ and $\bob$ (other than the ones mentioned explicitly), hence they can denote different sets of registers in different states.} 

Consider an (honest) protocol $\Pg$ for task $g$ defined as follows: Take protocol $\Pgf$ and replace all calls to $\textsc{ideal}(f)$ by $\Pf$.
For example, if $\Pgf$ does one query to $\textsc{ideal}(f)$, then $\Pg$ can be described by  Table~\ref{prot:pgpf}. Note that multiple queries to $\textsc{ideal}(f)$ can also be handled similarly (assuming the remaining queries to be part of the communication protocol before or after $\textsc{ideal}(f)$ query).
\begin{table}[!h]
        \centering
        \resizebox{12cm}{6cm}{
            \begin{tabular}{  l c r }
            \hline \\
            $\alice$ (input $A_g$) & & $\bob$ (input $B_g$) \\
            \hline \\

 $A_g\widehat{A}$ & $\eta$ & $B_g\widehat{B}$

\\ \hdashline\\  

& \large{COMMUNICATION PROTOCOL} & \\ \\ 
& \large{PRIOR TO $\textsc{{ideal}}(f)$ QUERY} & \\ \\ 
$A_gA_f\widehat{A}$ & $\theta$ & $B_gB_f\widehat{B}$
\\
 \hdashline \\

 \\
& 
\centering
\resizebox{0.3\textwidth}{!}{%
\begin{circuitikz}
\tikzstyle{every node}=[font=\normalsize]
\draw (5.25,14.75) rectangle (8.5,11.75);
\node [font=\large] at (7,13.25) {IDEAL($f$)};
\draw [->, >=Stealth, dashed] (3.5,14.25) -- (5.25,14.25);
\draw [->, >=Stealth, dashed] (9.75,14.25) -- (8.5,14.25);
\node [font=\normalsize] at (4.5,14.5) {$A_f$};
\node [font=\normalsize] at (9,14.75) {$B_f$};
\draw [->, >=Stealth, dashed] (5.25,13.5) -- (3.5,13.5);
\node [font=\normalsize] at (4.5,13.75) {$O_{A_f}^\ideal$};
\draw [->, >=Stealth, dashed] (3.5,12.75) -- (5.25,12.75);
\node [font=\normalsize] at (4.5,13) {${A_f}_\perp$};
\draw [->, >=Stealth, dashed] (8.5,12.25) -- (10,12.25);
\node [font=\normalsize] at (9.25,12.5) {$O_{B_f}^\ideal$};
\end{circuitikz}
}%

 & \\

  \\ $O_{A_f}A_gA_f\widehat{A}$ & $\tau$ & $O_{B_f}B_gB_f\widehat{B}$\\
 \hdashline \\  
  & \large{COMMUNICATION PROTOCOL} & \\ \\ 
& \large{AFTER $\textsc{{ideal}}(f)$ QUERY} & \\  Output $O_{A_g}$  && Output $O_{B_g}$\\ \\$O_{A_g}O_{A_f}A_gA_f\widehat{A}$ & $\zeta$ & $O_{B_g}O_{B_f}B_gB_f\widehat{B}$\\
 \hdashline \\  
    \end{tabular} }
        \caption{Protocol $\Pgf$($\alice, \bob$)}
        \label{prot:pgf} 
    \end{table}

    \begin{table}[!h]
        \centering \resizebox{12cm}{5cm}{
            \begin{tabular}{  l c r }
            \hline \\
            $\alice$ (input $A_g$) & \hspace{1cm} & $\bob$ (input $B_g$) \\
            \hline \\
            $A_g\widehat{A}$ & $\eta$ & $B_g\widehat{B}$\\
 \hdashline \\

& \large{COMMUNICATION PROTOCOL} & \\ \\ 
& \large{PRIOR TO $\textsc{{ideal}}(f)$ QUERY} & \\ \\ $A_gA_f\widehat{A}\overline{A}$ & $\theta$ & $B_gB_f\widehat{B}\overline{B}$\\
 \hdashline \\  
 & \large{Run $\Pf$($\alice, \bob$) with input $A_f, B_f$} & \\  \\ 
  Obtain $O_{A_{f}}$ & &   Obtain $O_{B_{f}}$  \\  \\  $O_{A_{f}}A_gA_f\widehat{A}\overline{A}$ & $\tau$ & $O_{B_{f}}B_gB_f\widehat{B}\overline{B}$\\
 \hdashline \\  
  & \large{COMMUNICATION PROTOCOL} & \\ \\ 
& \large{AFTER $\textsc{{ideal}}(f)$ QUERY} & \\  Output $O_{A_g}$  && Output $O_{B_g}$\\\\ $O_{A_g}O_{A_{f}}A_gA_f\widehat{A}\overline{A}$ & $\zeta$ & $O_{B_g}O_{B_{f}}B_gB_f\widehat{B}\overline{B}$\\
 \hdashline \\  

    \end{tabular} }
        \caption{Protocol  $\Pg$($\alice, \bob$)}
        \label{prot:pgpf} 
    \end{table}

Any of the protocols below can use quantum resources and devices, but the communication is limited to classical communication. We adopt the convention that if any honest party aborts at any stage of the protocol, that party does not invoke any further subroutine or take any further part in the protocol execution. Hence, security and correctness guarantees are not required after a party aborts. This also ensures that we do not need to consider $\bot$ inputs to any of the parties in the descriptions of the honest protocols.

{\noindent {\bf Remark:} Essentially the same proof would also work when we consider the composability of general (device-dependent) protocols.}

\begin{theorem}[Composition theorem]\label{thm:composition}
If $\Pgf$ is augmented-secure and $\Pf$ is augmented-secure, then $\Pg$ is augmented-secure. 
\end{theorem}
\begin{proof}
We will first prove the theorem assuming that $\Pgf$ makes only one query to the $\textsc{ideal}({f})$.
The proof for the general case when $\Pgf$ makes polynomially many queries (sequentially) to the $\textsc{ideal}({f})$ follows similarly (since we deal with negligible closeness of states for augmented-security).
We will also show security only in the case $\alice^*$ is cheating, and $\bob$ is honest.
The other case follows by merely switching parties.

Consider protocol $\mathcal{P}_f$($\alice, \bob$). 
In an honest implementation of this, let us say that $\alice$ and $\bob$ act on $\overline{A}$ and $\overline{B}$, respectively (in addition to inputs $A_f$ and $B_f$).
However, a cheating $\alice^*$ may use all the registers that are available to her.
Thus, when we consider $\Pg$($\alice^*, \bob$), the overall registers that can be accessed during the execution of subroutine $\Pf$ are all those in $\alice^*$ possession and $B_f\overline{B}$ (since $\bob$ is honest he won't access other registers).

\noindent Let the initial starting state (Table \ref{prot:pgpf}) be $\eta^{A_g\widehat{A}B_g\widehat{B}}$. To show that the honest protocol $\Pg$ (Table \ref{prot:pgpf}) is augmented-secure, we need to exhibit a simulator for $\Pg$($\alice^*, \bob$) (Table \ref{prot:pgpf_chaeting_alice}) as per Definition~\ref{def:augmented-secure}.
We will do this as follows:
\begin{enumerate}
    \item Given a protocol $\Pg$($\alice^*, \bob$) (Table \ref{prot:pgpf_chaeting_alice}), we will do the following steps to construct a (cheating) protocol $\Pgf$($\alice^*, \bob$) (Table \ref{prot:pgf_cheating_alice}):   
    \begin{enumerate}

        \item Note that honest $\Pg$ uses honest $\Pf$  as a subroutine.
        In the (cheating) $\alice^*$ case, we will replace this part of the protocol with a simulator $\textsc{Sim}^f$ that gives augmented security for $\Pf$($\alice^*, \bob$).
  \item The constructed protocol can now be viewed as an oracle-aided $\alice^*$ protocol for ${g\vert f}$.   \end{enumerate}   
With this constructed protocol, we will show that end states of $\Pg$($\alice^*, \bob$) and  $\Pgf$($\alice^*, \bob$) are close on relevant registers (see eq.~\eqref{eq:composability_closenss_two}).
    \item Since $\Pgf$($\alice^*, \bob$) is augmented-secure, there exists a simulator $\textsc{Sim}^{g \vert f}$ for  $\Pgf$($\alice^*, \bob$).
    Now, by the previous step, we know that the end states for $\Pg$($\alice^*, \bob$) and  $\Pgf$($\alice^*, \bob$) are close. 
    Hence we get a simulator for $\Pg$($\alice^*, \bob$) as well (which is a composition of $\textsc{Sim}^{g \vert f}$ and $\textsc{Sim}^{f}$).
   
\end{enumerate}
    \begin{table}[H]
        \centering \resizebox{12cm}{6cm}{
            \begin{tabular}{  l c r }
            \hline \\
            $\alice^*$ (input $A_g$) &  & $\bob$ (input $B_g$) \\
            \hline \\ 
             $A_g\widehat{A}$ & $\eta$ & $B_g\widehat{B}$\\
 \hdashline \\

& \large{\textsc{Protocol Phase 1}} & \\ \\

& \large{\textsc{(Denoted $\priorprt$)}} &  \\  $A_gA_f\widehat{A}\overline{A
}$ & $\theta$ &  $B_gB_f\widehat{B}\overline{B}$\\
\hdashline  \\
 & \large{Run $\Pf$ with input $A_f, B_f$} & \\  \\ 
  Obtain $O_{A_{f}}$ & &   Obtain $O_{B_{f}}$  \\  \\ 
   $A_gA_fO_{A_{f}}\widehat{A}\overline{A}$ & $\tau_{\Pf}$ &  $B_gB_fO_{B_{f}}\widehat{B}\overline{B}$\\
 \hdashline \\  
& \large{\textsc{Protocol Phase 2}} & \\ \\ 
& \large{\textsc{(Denoted $\afterprt$)}} & \\
 Output $O_{A_g}$  && Output $O_{B_g}$
  \\ \\ 
  $A_gA_fO_{A_{f}}O_{A_g}\widehat{A}\overline{A}$ & $\zeta_{\Pf}$ &  $B_gB_fO_{B_{f}}O_{B_g}\widehat{B}\widebar{B}\overline{B}$\\
 \hdashline \\  
 
    \end{tabular} }
        \caption{Protocol  $\Pg$($\alice^*, \bob$)}
        \label{prot:pgpf_chaeting_alice} 
    \end{table}
   
First, consider the protocol $\Pf$($\alice, \bob$). 
Protocol $\Pg$ uses $\Pf$($\alice, \bob$) as a subroutine. 
Note that $\alice^*$, being a cheating party, may use any information generated before this subroutine is called. 
Moreover, cheating $\alice^*$ may act on the whole joint state present at that time that it has access to. 
Considering this, a protocol   $\Pg$($\alice^*, \bob$) will be as described in  Table~\ref{prot:pgpf_chaeting_alice}. Note that when both $\alice$ and $\bob$ are honest (Table \ref{prot:pgpf}), in the state $\theta$, $\widebar{A}\widebar{B}$ are independent of the other registers, but this is not true when $\alice^*$ is arbitrarily deviating from the honest behaviour (Table \ref{prot:pgpf_chaeting_alice}).

Since $P_f$($\alice^*, \bob$) is augmented-secure, there exists a corresponding simulator $\textsc{Sim}^f$.  
Let $\mathcal{P}_f \left( \textsc{Sim}^f, \bob \right)$ be corresponding to the protocol defined as per Table~\ref{prot:Sim_sec_alice}. We construct a protocol  $P_{g \vert f}$($\alice^*, \bob$) in which $\alice^*$ uses $\Sim^f$ as a subroutine (as given in Table~\ref{prot:pgf_cheating_alice}). 
    \begin{table}[!h]
        \centering
            \begin{tabular}{  l c r }
            \hline \\
            $\alice^*$ (input $A_g$) &  & $\bob$ (input $B_g$) \\
            \hline \\   $A_g\widehat{A}$ & $\eta$ & $B_g\widehat{B}$\\
 \hdashline \\

& \large{Run \textsc{$\priorprt$}} &  \\ \\  $A_gA_f\widehat{A}\overline{A}$ & $\theta$ &  $B_gB_f\widehat{B}\widebar{B}$\\
\hdashline \\
 & $\mathcal{P}_f \left( \textsc{Sim}^f, \bob \right)$  with input $A_f,B_f$& \\
Obtain $O_{A_{f}}$ &  & Obtain $O_{B_{f}}$ \\\\
 $A_gA_fO_{A_{f}}\widehat{A}\overline{A}$ & $\tau_{\textsc{Sim}^f }$ &  $B_gB_fO_{B_{f}}\widehat{B}\widebar{B}$\\
 \hdashline \\

& \large{Run \textsc{$\afterprt$}} & \\
 Output $O_{A_g}$  && Output $O_{B_g}$
 \\ \\
 $A_gA_fO_{A_{f}}O_{A_g}\widehat{A}\overline{A} $ & $\zeta_{\textsc{Sim}^f }$ &  $B_gB_fO_{B_{f}}O_{B_g}\widehat{B}\widebar{B}$\\
 
 \hdashline \\  
    \end{tabular} 
        \caption{Protocol $\Pgf$($\alice^*, \bob$)}
        \label{prot:pgf_cheating_alice} 
    \end{table}
\noindent      Since $\textsc{Sim}^f$ is a simulator (for augmented security) for $\Pf$,
we have
\begin{equation} \label{eq:p_f_closeness}
   {\onenorm{{\tau}_{\Pf}^{A_gA_f\widehat{A}\overline{A}B_gB_f\widehat{B}O_{{A_{f}}}O_{{B_{f}}}}}{{\tau}_{\textsc{Sim}^f}^{A_gA_f\widehat{A}{\overline{A}}B_gB_f\widehat{B}O_{{A_{f}}}O_{{B_{f}}}}} \leq {\negl(\lambda)}.}
\end{equation}
Here, $B_g$ and $\widehat{B}$ are part of the additional register $R$ in the definition of augmented-security (Definition~\ref{def:augmented-secure}).
{Note that  these registers are not touched during $\mathcal{P}_f$ (since $\bob$ is honest), and hence, they can be considered to be a  part of $R$.}

Also, note that eq.~\eqref{eq:p_f_closeness} contains all the registers that are used in $\afterprt$ since $\bob$ is honest. 
Hence, by Fact \ref{fact:data},
\begin{equation} \label{eq:composability_closness_one}
\onenorm{\zeta _{\Pf}^{A_gA_f\widehat{A}\overline{A}B_gB_f\widehat{B}O_{{A_{f}}}O_{{B_{f}}}O_{A_g}O_{B_g}}}{\zeta^{A_gA_f\widehat{A}\overline{A}B_gB_f\widehat{B}O_{{A_{f}}}O_{{B_{f}}}O_{A_g}O_{B_g}}_{\textsc{Sim}^f}}  \leq {\negl(\lambda)}.  
    \end{equation}
    Now viewing the protocol in Table~\ref{prot:pgf_cheating_alice} as a cheating $\alice^*$ protocol for $g \vert f$ and using the fact that $\Pgf$ is augmented-secure, there exists a simulator $\textsc{Sim}^{g|f}$  
    such that 

\begin{equation} \label{eq:composability_closenss_two}
    \onenorm{\zeta_{\textsc{Sim}^{g|f}}^{A_gA_f\widehat{A}\overline{A}B_gO_{{A_{f}}}O_{A_g}O_{B_g}}}{\zeta_{\textsc{Sim}^f}^{A_gA_f\widehat{A}\overline{A}B_gO_{{A_{f}}}O_{A_g}O_{B_g}}} \leq {\negl(\lambda)}.
\end{equation} 
Now from eqs.~\eqref{eq:composability_closness_one} and \eqref{eq:composability_closenss_two}, using the triangle inequality, we get,
\[  \onenorm{\zeta_{\textsc{Sim}^{g|f}}^{A_gA_f\widehat{A}\overline{A}B_gO_{{A_{f}}}O_{A_g}O_{B_g}}}{\zeta_{\mathcal{P}_f}^{A_gA_f\widehat{A}\overline{A}B_gO_{{A_{f}}}O_{A_g}O_{B_g}}} \leq {\negl(\lambda)}. \]
Thus, $\textsc{Sim}^{g|f}$, when combined with $\textsc{Sim}^{f}$, acts as a simulator for protocol for $g$ in Table~\ref{prot:pgpf_chaeting_alice}. This completes the proof that $\Pg$ is augmented-secure when it makes a single call to $\Pf$. 
 \end{proof}

\section*{Proof of Claim \ref{claim:L_2_to_L_1}}

\begin{proof}

  First we show that if $\Vert \ket{u} \Vert_2 = \Vert \ket{u} \Vert_2 =1$, then $\onenorm{\ketbra{u}{u}}{\ketbra{v}{v}} \leq 2\eps $.

Suppose $\Vert \ket{u} \Vert_2 = \Vert \ket{u} \Vert_2 =1$. Using Fact \ref{fact:fuchs}, we have \begin{align}
    \frac{\onenorm{\ketbra{u}{u}}{\ketbra{v}{v}}^2}{8} \leq 1-F(\ketbra{u}{u},\ketbra{v}{v}) \label{eq:fvd}
\end{align}
Then,
\begin{align*}
    \eps^2 & \geq \Vert \ket{u} - \ket{v} \Vert_2^2 & (\mbox{given}) \\
    & = \Vert \ket{u} \Vert_2^2 + \Vert \ket{v} \Vert_2^2 - 2 \mathrm{Real}( \langle u,v \rangle) \\
    & \geq 2 \left( 1- \vert  \langle u,v \rangle \vert \right) & (\mbox{since $\Vert \ket{u} \Vert_2 = \Vert \ket{v} \Vert_2 =1$}) \\
    & = 2\left(1- F\left( \ketbra{u}{u}, \ketbra{v}{v}\right)\right) \\
    & \geq \frac{1}{4} \onenorm{\ketbra{u}{u}}{\ketbra{v}{v}}^2 &(\mbox{eq. \eqref{eq:fvd}}).
\end{align*}
This completes the proof that $\onenorm{\ketbra{u}{u}}{\ketbra{v}{v}}  \leq 2\eps $.

\noindent Now we move on to $\Vert \ket{u} \Vert_2 \neq 1$ or $\Vert \ket{v} \Vert_2 \neq 1$.

\noindent Define $\ket {u^\prime} = \frac {\ket{u}}{\Vert \ket{u} \Vert_2 }$ and $\ket {v^\prime} = \frac {\ket{v}}{\Vert \ket{v} \Vert_2 }$.
First, lets analyse $\Vert \ket{u^\prime} - \ket{v^\prime} \Vert_2$.

\begin{align*}
    \Vert \ket{u^\prime} - \ket{v^\prime} \Vert_2 & = \left\Vert  \frac {\ket{u}}{\Vert \ket{{u}} \Vert_2} -  \frac {\ket{v}}{\Vert \ket{v} \Vert_2}  \right\Vert_2 \\
    & \leq \left\Vert  \frac {\ket{u}}{\Vert \ket{u} \Vert_2} -  \frac {\ket{v}}{\Vert \ket{u} \Vert_2}\right\Vert_2 + \left\Vert \frac {\ket{v}}{\Vert \ket{v} \Vert_2} -\frac {\ket{v}}{\Vert \ket{u} \Vert_2}  \right\Vert_2 & (\mbox{By triangle inequality}) \\
    & \leq \frac{\Vert \ket{u} - \ket{v} \Vert_2}{\Vert \ket{u} \Vert_2} + \Vert \ket{v} \Vert_2 \left\Vert \frac{1}{\Vert \ket{v} \Vert_2} - \frac{1}{\Vert \ket{u} \Vert_2}\right\Vert \\
    & \leq \frac{\Vert \ket{u} - \ket{v} \Vert_2}{\Vert \ket{u} \Vert_2} + \frac{\Big\vert \Vert \ket{u} \Vert_2 - \Vert \ket{v} \Vert_2 \Big\vert }{\Vert \ket{u} \Vert_2} \\
    & \leq \frac{\eps}{c} + \frac{\eps}{c} & (\mbox{since $\Vert \ket{u} - \ket{v} \Vert_2 \leq \eps$ and $\Vert \ket{u} \Vert_2 \geq c$ })    \\
    & \leq \frac{2 \eps}{c}.
\end{align*}
Thus, since $\Vert \ket{u^\prime}\Vert_2 = \Vert \ket{v^\prime}\Vert_2 =1$, we get,
\[ \Vert \ket{u^\prime}\bra{u^\prime} - \ket{v^\prime}\bra{v^\prime} \Vert_1 \leq \frac{4 \eps}{c}.\]
This gives \begin{align*}
    & \frac{4 \eps}{c}\\ &\geq \Vert \ket{u^\prime}\bra{u^\prime} - \ket{v^\prime}\bra{v^\prime} \Vert_1
     \\&= \onenorm{\frac{\ketbra{u}{u}}{\| u\|_2^2}}{\frac{\ketbra{v}{v}}{\| v\|_2^2}}
      \\&\geq \onenorm{\frac{\ketbra{u}{u}}{\| u\|_2^2}}{\frac{\ketbra{v}{v}}{\| u\|_2^2}}-\onenorm{\frac{\ketbra{v}{v}}{\| u\|_2^2}}{\frac{\ketbra{v}{v}}{\| v\|_2^2}}& (\mbox{By triangle inequality}) 
       \\&\geq  \onenorm{{\ketbra{u}{u}}}{{\ketbra{v}{v}}} -\frac{\Big\vert \Vert \ket{u} \Vert_2 - \Vert \ket{v} \Vert_2 \Big\vert \cdot(\Vert  \ket{u} \Vert_2+\Vert  \ket{v} \Vert_2)}{\Vert  \ket{u} \Vert_2^2 \Vert  \ket{v} \Vert_2^2}& (\mbox{since $\Vert  \ket{u} \Vert_2 \leq 1$ and  $\Vert \ket{v} \bra{v} \Vert_1 \leq 1$})
       \\&\geq  \onenorm{{\ketbra{u}{u}}}{{\ketbra{v}{v}}} -2\cdot\frac{\Big\vert \Vert \ket{u} \Vert_2 - \Vert \ket{v} \Vert_2 \Big\vert }{\Vert  \ket{u} \Vert_2^2 \Vert  \ket{v} \Vert_2^2}
       \\&\geq  \onenorm{{\ketbra{u}{u}}}{{\ketbra{v}{v}}} -2\cdot\frac{\Vert  \ket{u}-  \ket{v} \Vert_2}{\Vert  \ket{u} \Vert_2^2 \Vert \ket{v} \Vert_2^2}. & (\mbox{By triangle inequality}) 
\end{align*}
This gives \begin{align*}
    \onenorm{{\ketbra{u}{u}}}{{\ketbra{v}{v}}} \leq  \frac{4 \eps}{c} + \frac{2 \eps }{c^2\cdot(c-\eps)^2}.
\end{align*}
Since $\eps<c/10$, we have \begin{align*}
   \onenorm{{\ketbra{u}{u}}}{{\ketbra{v}{v}}} \leq  O\left(\frac{\eps}{c^4}\right).
\end{align*}
\end{proof}

\section*{Proof of Claim \ref{claim:some_bit_entropy_OT}}
\begin{proof}
    Consider any state $\rho$.
    For $k,l \in \lbrace 0,1,+,- \rbrace$ let $\alpha_{kl} = \Tr\left( \rho \ketbra{kl}{kl}\right)$.
    By Fact \ref{fact:gentle_measurement}, at least one of the following two holds (for some constant $d>0$) :
    \begin{enumerate}
        \item for all $k,l \in \lbrace 0,1 \rbrace$
    we have $\alpha_{kl} \leq 1-d$. In particular, two of them are at least $\frac{d}{3}$.
    \item for all $k,l \in \lbrace +,- \rbrace$ we have
    $\alpha_{kl} \leq 1-d$.   In particular, two of them are at least $\frac{d}{3}$.  
    \end{enumerate}
    Suppose the first condition holds. Then choose $x=0$.
    Now, $\Pr\left( A^{0}({0}) =0 \right) = \alpha_{00}+\alpha_{01}$ and  $\Pr\left( A^{0}({1})=0 \right) = \alpha_{00}+\alpha_{10}$. 

    \textsf{Case 1:} $\alpha_{00} \geq \frac{d}{3}$. 
    Then, $\Pr\left( A^{0}({0}) =0 \right) \geq \frac{d}{3}$ and $\Pr\left( A^{0}({1}) =0 \right) \geq \frac{d}{3}$. 
    Now at least one of $\alpha_{01}, \alpha_{10}$ or $\alpha_{11}$ is at least $\frac{d}{3}$.  If $\alpha_{01} \geq \frac{d}{3}$, then $\Pr\left( A^{0}(1)=0 \right) \leq 1 - \frac{d}{3}$.
    If $\alpha_{10} \geq \frac{d}{3}$, then $\Pr\left( A^{0}(0)=0 \right) \leq 1 - \frac{d}{3}$. 
    If   $\alpha_{11} \geq \frac{d}{3}$, then $\Pr\left( A^{0}(0)=0 \right) \leq 1 - \frac{d}{3}$ and $\Pr\left( A^{0}(1)=0 \right) \leq 1 - \frac{d}{3}$. 
    Thus in all the cases, for some $i \in \lbrace 0,1\rbrace$, we have,
\[ \frac{d}{3} \leq \Pr\left( A^{0}(i)=0\right) \leq 1-\frac{d}{3}.\]

\textsf{Case 2:} $\alpha_{00} \leq \frac{d}{3}$.
Recall that $\alpha_{kl} \leq 1-d$ for all $k,l \in \lbrace 0,1\rbrace$. 
Hence, for both $i$, we have 
$\Pr\left( A^{0}(i)=0\right) \leq 1-\frac{2d}{3}.$
   Moreover, $\alpha_{01} \geq \frac{d}{3}$ or $\alpha_{10} \geq \frac{d}{3}$.
   Hence,
     in all the cases, for some $i \in \lbrace 0,1\rbrace$, we have
    
\[ \frac{d}{3} \leq \Pr\left( A^{0}(i)=0\right) \leq 1-\frac{2d}{3}.\] 
This completes the proof when Condition 1 holds. 
The proof for Condition 2, follows similarly after choosing $x=1$.     
\end{proof}
\section*{Proof of Claim \ref{claim:value_G_1}}
\begin{proof} Consider the following two cases:

   \textbf{Case 1:} Suppose in $G_1$, $\Pr(\MS(X,\wht{Y},A,B)=1|C=0)< 1-\eps_r$.  Then \begin{align*}
       \vall(G_1)&\leq\delta[\delta_*\cdot1+(1-\delta_*)(1-\eps_r)]+(1-\delta)\cdot 1 
       \\&\leq\delta[\delta_*+1-\delta_*-\eps_r(1-\delta_*)]+(1-\delta)
        \\&\leq\delta[1-\eps_r(1-\delta_*)]+(1-\delta)
   \\&\leq 1-\delta\eps_r(1-\delta_*).\end{align*}

\textbf{Case 2:} Suppose $\Pr(\MS(X,\wht{Y},A,B)=1|C=0)\geq  1-\eps_r$.  
From Fact~\ref{fact:rigidity}, for all $a,x$, we have 
\begin{align}
        &\norm{\left(V_A\otimes V_B\right)\ket{\rho}-\ket{\Psi^{\MS}} \otimes \junkstate}_2 \leq O(\eps_r^{1/4}), \label{eq:rigidity_state2}\\
                &\norm{\left(V_A\otimes V_B\right)\left(M_{a|x}\otimes\Id\right)\ket{\rho}-\left(M^{\MS}_{a|x}\otimes \Id \right)\ket{\Psi^{\MS}}\otimes {\junkstate}}_2 \leq O(\eps_r^{1/4}). \label{eq:rigidity_Alice2}
    \end{align}
    Let $\left(V_A\otimes V_B\right)\left(M_{a|x}\otimes\Id\right)\left(V_A^\dagger\otimes V_B^\dagger\right)=\left(M^\prime_{a|x}\otimes\Id\right)$ and $\ket{\rho^\prime}=\left(V_A\otimes V_B\right)\ket{\rho}$.
We have
\begin{align}
\norm{\left(M^\prime_{a|x}\otimes\Id\right)\ket{\rho^\prime}}_2&=\norm{\left(V_A\otimes V_B\right)\left(M_{a|x}\otimes\Id\right)\ket{\rho}}_2 \nonumber\\&\geq \norm{\left(M^{\MS}_{a|x}\otimes \Id \right)\ket{\Psi^{\MS}}\otimes {\junkstate}}_2 -  O(\eps_r^{1/4}) &\mbox{(eq. \eqref{eq:rigidity_Alice2})}
\nonumber\\&\geq \frac{1}{2} -O(\eps_r^{1/4}) \geq \frac{1}{3}.  &\mbox{{(Claim \ref{claim:length})}} \label{eq:norm_lower}
\end{align}
 From eq. \eqref{eq:rigidity_state2}, we get
 \begin{align}    
 \norm{\left(M^{\MS}_{a|x}\otimes \Id \right)\left(V_A\otimes V_B\right)\ket{\rho}-\left(M^{\MS}_{a|x}\otimes \Id \right)\ket{\Psi^{\MS}} \otimes \junkstate}_2 \leq O(\eps_r^{1/4}). \label{eq:before_triangle}\end{align}
 From eqs. \eqref{eq:rigidity_Alice2} and  \eqref{eq:before_triangle}, using triangle inequality and $\ket{\rho^\prime}=\left(V_A\otimes V_B\right)\ket{\rho}$, we get
    \[\norm{\left(M^\prime_{a|x}\otimes\Id\right)\ket{\rho^\prime}-\left(M^{\MS}_{a|x}\otimes \Id \right)\ket{\rho^\prime}}_2 \leq O(\eps_r^{1/4}).
    \]
  Using Claim \ref{claim:L_2_to_L_1} and eq. \eqref{eq:norm_lower}, we get 
 \[\norm{{\left(M^\prime_{a|x}\otimes\Id\right)\ketbra{\rho^\prime}{\rho^\prime}\left(M^\prime_{a|x}\otimes\Id\right)}-{\left(M^\MS_{a|x}\otimes\Id\right)\ketbra{\rho^\prime}{\rho^\prime}\left(M^\MS_{a|x}\otimes\Id\right)}}_1 \leq O(\eps_r^{1/4}).
    \]
Consider (cheating) Bob's measurement on his part of the state $\rho$ to be {$\{\wtt{E}_b\}_b$ }. Let $E_b=V_B\wtt{E}_bV_B^{\dagger}$. Since \textbf{DELAY} has already happened, using Fact \ref{fact:data-PSD}, we get
{\small\[\norm{\sum_b  \ketbra{b}{b} \otimes \left({\left(M^\prime_{a|x}\otimes E_b\right)\ketbra{\rho^\prime}{\rho^\prime}\left(M^\prime_{a|x}\otimes E_b^{\dagger}\right)}-{\left(M^\MS_{a|x}\otimes E_b\right)\ketbra{\rho^\prime}{\rho^\prime}\left(M^\MS_{a|x}\otimes E_b^{\dagger}\right)}\right)}_1 \leq O(\eps_r^{1/4}).
    \]}
    Define \[\Pr(b)=\Tr\left(\Id\otimes E_b\right)\ketbra{\rho^\prime}{\rho^\prime}\left(\Id\otimes E_b^{\dagger}\right).\]
    We rewrite the above as
      {\small\[\norm{ \sum_b \Pr(b)\otimes \ketbra{b}{b}\otimes \left(\frac{\left(M^\prime_{a|x}\otimes E_b\right)\ketbra{\rho^\prime}{\rho^\prime}\left(M^\prime_{a|x}\otimes E_b^{\dagger}\right)}{\Tr(\left(\Id\otimes E_b\right)\ketbra{\rho^\prime}{\rho^\prime}\left(\Id\otimes E_b^{\dagger}\right))}-\frac{\left(M^\MS_{a|x}\otimes E_b\right)\ketbra{\rho^\prime}{\rho^\prime}\left(M^\MS_{a|x}\otimes E_b^{\dagger}\right)}{\Tr(\left(\Id\otimes E_b\right)\ketbra{\rho^\prime}{\rho^\prime}\left(\Id\otimes E_b^{\dagger}\right))}\right)}_1 \leq O(\eps_r^{1/4}).
    \]}This gives 
    {\small\[\sum_b \Pr(b)\norm{ \frac{\left(M^\prime_{a|x}\otimes E_b\right)\ketbra{\rho^\prime}{\rho^\prime}\left(M^\prime_{a|x}\otimes E_b^{\dagger}\right)}{\Tr(\left(\Id\otimes E_b\right)\ketbra{\rho^\prime}{\rho^\prime}\left(\Id\otimes E_b^{\dagger}\right))}-\frac{\left(M^\MS_{a|x}\otimes E_b\right)\ketbra{\rho^\prime}{\rho^\prime}\left(M^\MS_{a|x}\otimes E_b^{\dagger}\right)}{\Tr(\left(\Id\otimes E_b\right)\ketbra{\rho^\prime}{\rho^\prime}\left(\Id\otimes E_b^{\dagger}\right))}}_1 \leq O(\eps_r^{1/4})=\eps^\prime.
    \]}
    Define {\small\[\good(a,x)=\Bigg\{b~:~\norm{ \frac{\left(M^\prime_{a|x}\otimes E_b\right)\ketbra{\rho^\prime}{\rho^\prime}\left(M^\prime_{a|x}\otimes E_b^{\dagger}\right)}{\Tr(\left(\Id\otimes E_b\right)\ketbra{\rho^\prime}{\rho^\prime}\left(\Id\otimes E_b^{\dagger}\right))}-\frac{\left(M^\MS_{a|x}\otimes E_b\right)\ketbra{\rho^\prime}{\rho^\prime}\left(M^\MS_{a|x}\otimes E_b^{\dagger}\right)}{\Tr(\left(\Id\otimes E_b\right)\ketbra{\rho^\prime}{\rho^\prime}\left(\Id\otimes E_b^{\dagger}\right))}}_1\leq \sqrt{\eps^\prime}\Bigg\}.\]}
By Markov-inequality, $\Pr(b\in \good(a,x))\geq 1-\sqrt{\eps^\prime}$. Now, by union bound over all $a,x$, we get that there exists a set $\good=\bigcap_{a,x}\good(a,x)$ such that $\Pr(b \in \good) \geq 1-O(\sqrt{\eps^\prime})$.

Using $\left(V_A\otimes V_B\right)\left(M_{a|x}\otimes\Id\right)\left(V_A^\dagger\otimes V_B^\dagger\right)=\left(M^\prime_{a|x}\otimes\Id\right)$ and $\ket{\rho^\prime}=\left(V_A\otimes V_B\right)\ket{\rho}$, we get\begin{align*}
    \Tr \left(\frac{\left(M^\prime_{a|x}\otimes E_b\right)\ketbra{\rho^\prime}{\rho^\prime}\left(M^\prime_{a|x}\otimes E_b^{\dagger}\right)}{\Tr(\left(\Id\otimes E_b\right)\ketbra{\rho^\prime}{\rho^\prime}\left(\Id\otimes E_b^{\dagger}\right))}\right)=\Tr \left(\frac{\left(M_{a|x}\otimes \wtt{E}_b\right)\ketbra{\rho}{\rho}\left(M_{a|x}\otimes \wtt{E}_b^{\dagger}\right)}{\Tr(\left(\Id\otimes \wtt{E}_b\right)\ketbra{\rho}{\rho}\left(\Id\otimes \wtt{E}_b^{\dagger}\right))}\right).
\end{align*}
Combining this with Claim \ref{claim:some_bit_entropy_OT} with $\rho=\frac{\left(\Id\otimes \wtt{E}_b\right)\ketbra{\rho}{\rho}\left(\Id\otimes \wtt{E}_b^{\dagger}\right)}{\Tr\left(\Id\otimes \wtt{E}_b\right)\ketbra{\rho}{\rho}\left(\Id\otimes \wtt{E}_b^{\dagger}\right)}$ and choosing a small enough $\eps_r>0$ (and noting that since $\alice$ is honest, $X\leftarrow_{U}\{0,1,2\}$), we get  \[
p_{guess}(A[0,1]) \leq 
\Pr(b\in \good)\cdot\left(1-\frac{1}{3}\cdot \frac{2d}{3}+O(\sqrt{\eps'})\right)+\Pr(b\not\in \good) \leq 1-d/10.
\]
We thus have
\begin{align*}
\vall(G_1) & \leq \delta\cdot 1 + (1-\delta) \cdot \left[\delta_*\cdot1+(1-\delta_*)(1-d/10) \right]
\\& \leq \delta + (1-\delta) \cdot \left[\delta_*+1-\delta_*-(1-\delta_*)d/10 \right]
\\& \leq \delta + (1-\delta) \cdot \left[1-(1-\delta_*)d/10 \right]
\\& \leq \delta + 1-\delta - (1-\delta)(1-\delta_*)d/10 
\\& \leq 1 - (1-\delta)(1-\delta_*)d/10 .
\end{align*}
Let $\eps_*=\min(\delta\eps_r(1-\delta_*), (1-\delta)(1-\delta_*)d/10)$. Then \[
\vall(G_1)\leq 1-\eps_*.
\]
 
\end{proof}

\section*{Proof of Claim \ref{claim:all_markovs}}
\begin{proof}
 We note that conditioning on $\{D_jU_j\}_{j\in [n]}$, the input distribution for each coordinate becomes a product distribution in the state $\theta$. 
Recall that $\mcC= \{i_i, \dots i_r\}$. We note that \begin{equation} \label{eq:relate_q}
    q=\Pr\left(T^{(r)}=1\right) > 2^{-\delta_2 n} .
\end{equation}
We want that for a typical coordinate outside $\mcC$, the distribution of questions in state $\varphi$ is close to $\mu$. {\small\begin{align}
       &\delta_3n \nonumber\\ &\geq \delta_2n &\mbox{(eq. \eqref{eq:deltas_relation})} \nonumber \\&\geq -\log q &\mbox{(eq. \eqref{eq:relate_q})} \nonumber
       \\&\geq \D_\infty(\varphi^{XY}\|\theta^{XY}) &\mbox{(eqs. \eqref{eq:def_theta} and \eqref{eq:def_varphi})}\nonumber\\
       & \geq \D(\varphi^{XY}\|\theta^{XY})  &\mbox{(Fact \ref{fact:monotonicity})}\nonumber\\
       &= \sum_{i\in[n]}\E_{{x_{<i}y_{<i}}\leftarrow \varphi^{X_{<i}Y_{<i}}}\D(\varphi_{x_{<i}y_{<i}}^{{X}_i{Y}_i}\|\theta^{X_iY_i}) &\mbox{(Fact \ref{fact:chain_rule_D}, since $\{\theta^{X_iY_i}\}_i$ are independent)}\nonumber\\
       &\geq \sum_{i\in[n]}\D(\E_{{x_{<i}y_{<i}}\leftarrow \varphi^{X_{<i}Y_{<i}}}\varphi_{x_{<i}y_{<i}}^{{X}_i{Y}_i}\|\E_{{x_{<i}y_{<i}}\leftarrow \varphi^{X_{<i}Y_{<i}}} \theta^{X_iY_i}) &\mbox{(Fact \ref{fact:joint_convexity})}\nonumber\\ 
       &=\sum_{i\in[n]}\D(\varphi^{{X}_i{Y}_i}\|\theta^{X_iY_i}) 
       \nonumber\\&\geq \sum_{i\notin \mcC}\D(\varphi^{{X}_i{Y}_i}\|\theta^{X_iY_i}). &\mbox{(Fact \ref{fact:non_negative})} \label{eq:markov_1}
\end{align}}
Consider
\begin{align}
   & \delta_3n \nonumber\\ &\geq \delta_2n +|\mcC|\log(|\clA|\cdot|\clB|)  &\mbox{(eq. \eqref{eq:deltas_relation})} \nonumber
    \\&\geq \E_{x_{\mcC}y_{\mcC}a_{\mcC}b_{\mcC}du\leftarrow \varphi^{X_{\mcC}Y_{\mcC}A_{\mcC}B_{\mcC}DU}}[\D(\varphi_{x_{\mcC}y_{\mcC}a_{\mcC}b_{\mcC}du}^{\wtt{X}_{\widebar{\mcC}}\wtt{Y}_{\widebar{\mcC}}XYE_AE_B}\|\theta_{x_{\mcC}y_{\mcC}du}^{\wtt{X}_{\widebar{\mcC}}\wtt{Y}_{\widebar{\mcC}}XYE_AE_B})]  &\mbox{(Lemma \ref{lemma:main})} \nonumber
    \\&\geq  \E_{x_{\mcC}y_{\mcC}a_{\mcC}b_{\mcC}du\leftarrow \varphi^{X_{\mcC}Y_{\mcC}A_{\mcC}B_{\mcC}DU}}\left[\D(\varphi_{x_{\mcC}y_{\mcC}a_{\mcC}b_{\mcC}du}^{XY}\|\theta_{x_{\mcC}y_{\mcC}du}^{XY}) \right] &\mbox{(Fact \ref{fact:data})} \nonumber
   \\& = \sum_{i\notin \mcC} \E_{x_{\mcC}y_{\mcC}a_{\mcC}b_{\mcC}dux_{<i}y_{<i}\leftarrow \varphi^{X_{\mcC}Y_{\mcC}A_{\mcC}B_{\mcC}DUX_{<i}Y_{<i}}}\D(\varphi_{x_{\mcC}y_{\mcC}a_{\mcC}b_{\mcC}dux_{<i}y_{<i}}^{X_iY_i}\|\theta_{d_iu_i}^{X_iY_i}) &\mbox{(Fact \ref{fact:chain_rule_D})} \nonumber
   \\& = \sum_{i\notin \mcC} \E_{r_id_iu_i\leftarrow \varphi^{R_iD_iU_i}}\D(\varphi_{r_id_iu_i}^{X_iY_i}\|\theta_{d_iu_i}^{X_iY_i}) &\mbox{(eq. \eqref{eq:define_R_j})} \label{eq:markov_5}
   \\& = \frac{1}{2}\sum_{i\notin \mcC} \left(\E_{r_i\hat{x}_i\leftarrow \varphi^{R_i\hat{X}_i}}\D(\varphi_{r_i\hat{x}_i}^{X_i\wht{Y}_iC_i}\|\theta_{\hat{x}_i}^{X_i\wht{Y}_iC_i}) + \E_{r_ix_i\leftarrow \varphi^{R_iX_i}}\D(\varphi_{r_ix_i}^{Y_i}\|\theta_{x_i}^{Y_i}) \right)  \nonumber
    \\& \geq \frac{1}{2}\sum_{i\notin \mcC} \left(\E_{r_iy_i\leftarrow \varphi^{R_iY_i}}\D(\varphi_{r_iy_i}^{X_i}\|\theta_{y_i}^{X_i}) + \E_{r_ix_i\leftarrow \varphi^{R_iX_i}}\D(\varphi_{r_ix_i}^{Y_i}\|\theta_{x_i}^{Y_i}) \right)  &\mbox{(Fact \ref{fact:chain_rule_D})}\nonumber
     \\& = \frac{1}{2}\sum_{i\notin \mcC}\left( \E_{y_i\leftarrow \varphi^{Y_i}}\D(\varphi_{y_i}^{R_iX_i}\|\varphi_{y_i}^{R_i}\otimes\theta_{y_i}^{X_i}) + \E_{x_i\leftarrow \varphi^{X_i}}\D(\varphi_{x_i}^{R_iY_i}\|\varphi_{x_i}^{R_i}\otimes\theta_{x_i}^{Y_i}) \right) &\mbox{(Fact \ref{fact:chain_rule_D})} \nonumber
     \\& \geq \frac{1}{2}\sum_{i\notin \mcC} \left(\E_{y_i\leftarrow \varphi^{Y_i}}\D(\varphi_{y_i}^{R_iX_i}\|\varphi_{y_i}^{R_i}\otimes\varphi_{y_i}^{X_i}) + \E_{x_i\leftarrow \varphi^{X_i}}\D(\varphi_{x_i}^{R_iY_i}\|\varphi_{x_i}^{R_i}\otimes\varphi_{x_i}^{Y_i}) \right) &\mbox{(Fact \ref{fact:minimisation_for_mutual_info})} \nonumber
     \\& = \frac{1}{2}\sum_{i\notin \mcC} \left(\I(X_i:R_i|Y_i)_{\varphi} + \I(Y_i:R_i|X_i)_{\varphi} \right). \label{eq:markov_2}
\end{align}
We also want that for a typical coordinate $j$ outside $\mcC$, the mutual information between $\charlie$'s question $X_j$ and $\dave$'s registers $\wtt{B}$ is small in $\ket{\varphi}$ conditioned on $R_jD_jU_j$. From Lemma \ref{lemma:main}, Fact \ref{fact:data} and eq. \eqref{eq:deltas_relation}, we get 
\begin{align}
    \delta_3 n &\geq  \E_{x_{\mcC}y_{\mcC}a_{\mcC}b_{\mcC}du\leftarrow \varphi^{X_{\mcC}Y_{\mcC}A_{\mcC}B_{\mcC}DU}}\left[\D(\varphi_{x_{\mcC}y_{\mcC}a_{\mcC}b_{\mcC}du}^{X\wtt{B}}\|\theta_{x_{\mcC}y_{\mcC}du}^{X\wtt{B}}) \right] \nonumber
    \\& \geq \I(X:\wtt{B}|X_{\mcC}Y_{\mcC}A_{\mcC}B_{\mcC}DU)_{\varphi}
      \nonumber\\& \geq \sum_{i \notin \mcC}\I(X_i:\wtt{B}|X_{\mcC}Y_{\mcC}A_{\mcC}B_{\mcC}DUX_{<i})_{\varphi} &\mbox{(Fact \ref{fact:chain_rule_mutual_info})}  \nonumber
     \\& \geq \sum_{i \notin \mcC}\I(X_i:\wtt{B}|R_iD_iU_i)_{\varphi}. \label{eq:markov_3}
\end{align}
Here, for the second inequality, we use that {$\theta^{X\wtt{B}}_{x_{\mcC}y_{\mcC}du}=\theta^{X}_{x_{\mcC}y_{\mcC}du}\otimes\theta^{\wtt{B}}_{x_{\mcC}y_{\mcC}du}$} along with Fact \ref{fact:minimisation_for_mutual_info}. For the last inequality, we use Fact \ref{fact:chain_rule_mutual_info} and the observation that $\wtt{B}$ contains $Y_{<i}$.

Similarly, we have \begin{equation}
\delta_3n\geq \sum_{i \notin \mcC}\I(Y_i:\wtt{A}|R_iD_iU_i)_{\varphi}.
\label{eq:markov_4}\end{equation}
From Lemma \ref{lemma:main} and Fact \ref{fact:data}, we have 
\begin{align}
    \delta_3 n &\geq  \E_{x_{\mcC}y_{\mcC}a_{\mcC}b_{\mcC}du\leftarrow \varphi^{X_{\mcC}Y_{\mcC}A_{\mcC}B_{\mcC}DU}}\left[\D(\varphi_{x_{\mcC}y_{\mcC}a_{\mcC}b_{\mcC}du}^{X\wht{X}\wht{B}}\|\theta_{x_{\mcC}y_{\mcC}du}^{X\wht{X}\wht{B}}) \right] \nonumber
    \\& \geq \I(X\wht{X}:\wht{B}|X_{\mcC}Y_{\mcC}A_{\mcC}B_{\mcC}DU)_{\varphi}
      \nonumber\\& \geq \sum_{i \notin \mcC}\I(X_i\wht{X}_i:\wht{B}|X_{\mcC}Y_{\mcC}A_{\mcC}B_{\mcC}DUX_{<i}\wht{X}_{<i})_{\varphi} &\mbox{(Fact \ref{fact:chain_rule_mutual_info})}  \nonumber
     \\& \geq \sum_{i \notin \mcC}\I(X_i\wht{X}_i:\wht{B}|R_iD_iU_i)_{\varphi} \nonumber
     \\& \geq \sum_{i \notin \mcC}\I(\wht{X}_i:\wht{B}|R_iD_iU_i)_{\varphi}. \label{eq:markov_for_classical_postprocessing}
\end{align}
Here, for the second inequality, we use that {$\theta^{X\wht{X}\wht{B}}_{x_{\mcC}y_{\mcC}du}=\theta^{X\wht{X}}_{x_{\mcC}y_{\mcC}du}\otimes\theta^{\wht{B}}_{x_{\mcC}y_{\mcC}du}$} along with Fact \ref{fact:minimisation_for_mutual_info}. For the second last inequality, we use Fact \ref{fact:chain_rule_mutual_info} and the observation that $\wht{B}$ contains $\wht{Y}_{<i}C_{<i}$.

    On applying Markov's inequality on eqs. \eqref{eq:markov_1}, \eqref{eq:markov_5}, \eqref{eq:markov_2}, \eqref{eq:markov_3}, \eqref{eq:markov_4} and \eqref{eq:markov_for_classical_postprocessing} and using union bound, we get that the following conditions  hold for a randomly selected coordinate $j$ outside $\mcC$ with probability at least $\frac{1}{4}$:
    \begin{equation*}
    \I(X_j:\wtt{B}|R_jD_jU_j)_{\varphi} \leq \frac{8\delta_3}{(1-\delta_1)}\leq 10\delta_3. \end{equation*}
     \begin{equation*}\I(Y_j:\wtt{A}|R_jD_jU_j)_{\varphi} \leq \frac{8\delta_3}{(1-\delta_1)}\leq 10\delta_3.\end{equation*}
      \begin{equation*}
\D(\varphi^{{X}_j{Y}_j}\|\theta^{X_jY_j})\leq \frac{8\delta_3}{(1-\delta_1)}\leq 10\delta_3.\end{equation*}
     \begin{equation*}\frac{1}{2}\left(\I(X_j:R_j|Y_j)_{\varphi} + \I(Y_j:R_j|X_j)_{\varphi} \right) \leq \frac{8\delta_3}{(1-\delta_1)}\leq 10\delta_3.
     \end{equation*}
     \begin{equation*}
 \E_{r_jd_ju_j\leftarrow \varphi^{R_jD_jU_j}}\D(\varphi_{r_jd_ju_j}^{X_jY_j}\|\theta_{d_ju_j}^{X_jY_j})  \leq \frac{8\delta_3}{(1-\delta_1)}\leq 10\delta_3.
\end{equation*}
\begin{equation*}
\I(\wht{X}_j:\wht{B}|R_jD_jU_j)_{\varphi} \leq \frac{8\delta_3}{(1-\delta_1)}\leq 10\delta_3.
\end{equation*}
\end{proof}
\section*{Proof of Claim \ref{claim:all_equations}}

\begin{proof}  Note that for a classical state, we will use the notation from eq. \eqref{eq:notation}.

\textbf{Existence of unitaries (using Fact \ref{fact:unitary_existence})} 

\noindent Conditioning on $D_j=2$ in eq. \eqref{eq:condition1}, we get \[
\I(X_j:\wtt{B}|R_j\wht{X}_j)_{\varphi} \leq 20\delta_3.
\]
Using eq. \eqref{eq:condition3} and Fact \ref{fact:pinsker}, we get \begin{align*}
    \Pr(\wht{X}_j=*)_{\varphi}\geq \delta_*-\sqrt{10\delta_3} \geq  0.99\delta_*. &\tab\mbox{(since $\delta_3=\delta_*^2/10^5$)}\end{align*} This gives \begin{equation*}
\I(X_j:\wtt{B}|R_j)_{\varphi_*} \leq 20\delta_3/0.99\delta_*\leq\delta_*.
\end{equation*} This proves eq. \eqref{eq:existence_unitary_1}.

Conditioning on $D_j=2$ in eq. \eqref{eq:condition2}, we get \begin{align*}
   20\delta_3 &\geq \I(\wht{Y}_jC_j:\wtt{A}|R_j\wht{X}_j)_{\varphi}.
\end{align*}
As before, conditioning on $\wht{X}_j=*$, we get 
\begin{equation*}
\I(\wht{Y}_jC_j:\wtt{A}|R_j)_{\varphi_*} \leq 20\delta_3/0.99\delta_*\leq\delta_*.
\end{equation*}
This proves eq. \eqref{eq:existence_unitary_2}.

In eq. \eqref{eq:intermediate_markov}, conditioning on $D_j=2$, we get\[
 \E_{r_j\wht{x}_j\leftarrow \varphi^{R_j\wht{X}_j}}\D(\varphi_{r_j\wht{x}_j}^{X_j\wht{Y}_jC_j}\|\theta_{\wht{x}_j}^{X_j\wht{Y}_jC_j}) \leq 20\delta_3.
\]
Conditioning on $\wht{X}_j=*$, we get\[
 \E_{r_j\leftarrow \varphi_*^{R_j}}\D(\varphi_{*,r_j}^{X_j\wht{Y}_jC_j}\|\theta_{*}^{X_j\wht{Y}_jC_j}) \leq 20\delta_3/0.99\delta_*\leq\delta_*.
\]
This gives \begin{align*}
    \delta_* & \geq \E_{r_j\leftarrow \varphi_*^{R_j}}\D(\varphi_{*,r_j}^{X_j\wht{Y}_jC_j}\|\theta_{*}^{X_j\wht{Y}_jC_j})\nonumber\\& =\E_{r_j\leftarrow \varphi_*^{R_j}}\D(\varphi_{*,r_j}^{X_j\wht{Y}_jC_j}\|\theta_{*}^{X_j}\otimes \theta_{*}^{\wht{Y}_jC_j}) &\mbox{(Since $\theta_{*}^{X_j\wht{Y}_jC_j}=\theta_{*}^{X_j}\otimes \theta_{*}^{\wht{Y}_jC_j}$)}\nonumber
    \\& \geq \E_{r_j\leftarrow \varphi_{*}^{R_j}}\D(\varphi_{*,r_j}^{X_j\wht{Y}_jC_j}\|\varphi_{*,r_j}^{X_j}\otimes \varphi_{*,r_j}^{\wht{Y}_jC_j}) &\mbox{(Fact  \ref{fact:minimisation_for_mutual_info})}\nonumber
    \\&\geq \E_{r_j\leftarrow \varphi_*^{R_j}}\onenorm{\varphi_{*,r_j}^{X_j\wht{Y}_jC_j}}{\varphi_{*,r_j}^{X_j}\otimes\varphi_{*,r_j}^{\wht{Y}_jC_j}}^2 &\mbox{(Fact \ref{fact:pinsker})}
     \\&\geq \left(\E_{r_j\leftarrow \varphi_*^{R_j}}\onenorm{\varphi_{*,r_j}^{X_j\wht{Y}_jC_j}}{\varphi_{*,r_j}^{X_j}\otimes\varphi_{*,r_j}^{\wht{Y}_jC_j}}\right)^2. &\mbox{(convexity)}
\end{align*}
 This gives \begin{equation*}
     \E_{r_j\leftarrow \varphi_*^{R_j}}\onenorm{\varphi_{*,r_j}^{X_j\wht{Y}_jC_j}}{\varphi_{*,r_j}^{X_j}\otimes\varphi_{*,r_j}^{\wht{Y}_jC_j}}\leq \sqrt{\delta_*}. \end{equation*}
     This proves eq. \eqref{eq:existence_unitary_3}.
     \vspace{4mm}\\
\noindent \textbf{Correlated sampling of $R_j$ (using Fact \ref{fact:hollenstein_sampling})}

From eq. \eqref{eq:condition4}, we get \[
\I(X_j:R_j|Y_j)_{\varphi}=\I(X_j:R_j|\wht{Y}_jC_j\wht{X}_j)_{\varphi} \leq 20\delta_3. 
\]
As before, conditioning on $\wht{X}_j=*$, we get \begin{equation*}
\I(X_j:R_j|\wht{Y}_jC_j)_{\varphi_*} \leq 20\delta_3/0.99\delta_*\leq\delta_*. 
\end{equation*}
Similarly, we have \begin{align*}
 20\delta_3 &\geq  \I(R_j:Y_j|X_j)_{\varphi} 
 \\&= \I(R_j:\wht{Y}_jC_j\wht{X}_j|X_j)_{\varphi} 
  \\&= \I(R_j:\wht{X}_j|X_j)_{\varphi} + \I(R_j:\wht{Y}_jC_j|X_j\wht{X}_j)_{\varphi}  &\mbox{(Fact \ref{fact:chain_rule_mutual_info})}
  \\&\geq  \I(R_j:\wht{Y}_jC_j|X_j\wht{X}_j)_{\varphi}. 
\end{align*}
As before, conditioning on $\wht{X}_j=*$, we get \begin{equation*}
    \I(R_j:\wht{Y}_jC_j|X_j)_{\varphi_*} \leq 20\delta_3/0.99\delta_*\leq\delta_*.
\end{equation*}
{This proves eqs. \eqref{eq:joint_sampling1} and  \eqref{eq:joint_sampling2}.}

\noindent\textbf{Correctly incorporating $\wht{X}_j$ after the \textbf{DELAY}}

{ We have $\wtt{B}=\widehat{B}\wht{X}B_{\widebar{\mcC}}N'$ where $\widehat{B}$ contains all registers of $\dave$ after the $\CNOT$ (other than $\wht{X}$ and the answer register $B_{\widebar{\mcC}}$). In the state $\ket{\theta}$, on conditioning on $R_j\wht{X}_j$, all the inputs between $\charlie$ and $\dave$ are independent. {After the $\CNOT$, we have the following Markov Chain in $\ket{\theta}$}:  
\[
X_jE_A-\widehat{B}\wht{X}_jR_j-B_{j}.
\]
To get $\ket{\varphi}$, we condition $\ket{\theta}$ on $V(X_\mcC,Y_\mcC,A_\mcC,B_\mcC)=1$ which could generate correlations (and hence the Markov Chain above would not hold in general) in $\ket{\varphi}$. Since $R_j$ contains  $X_\mcC,Y_\mcC,A_\mcC,B_\mcC$, conditioning on $V(X_\mcC,Y_\mcC,A_\mcC,B_\mcC)=1$ does not disturb the above Markov Chain. Thus, we get the same Markov Chain in state $\ket{\varphi_{}}$:
\begin{equation*}
X_jE_A-\widehat{B}\wht{X}_jR_j-B_j.
\end{equation*}
 Similarly, we also get the following Markov Chain in state $\ket{\varphi_{}}$ (and $\ket{\theta_{}}$):
\begin{equation*}
E_A-X_j\widehat{B}R_j-
\wht{X}_jB_j.
\end{equation*}
This proves eqs. \eqref{eq:markov_generate_B} and \eqref{eq:markov_in_phi}. }
Conditioning on $D_j=1$ in eq. \eqref{eq:last_markov}, we get
\begin{align}
 20\delta_3 &\geq 
\I(\wht{X}_j:\wht{B}|R_jX_j)_{\varphi} \nonumber
\\ &=\E_{r_jx_j\leftarrow \varphi^{R_jX_j}} \D(\varphi_{r_jx_j}^{\wht{X}_j\wht{B}}\|\varphi_{r_jx_j}^{\wht{X}_j}\otimes \varphi_{r_jx_j}^{\wht{B}}). \nonumber
\end{align}
This proves eq. \eqref{eq:tensor-lemma}.

\noindent From eq. \eqref{eq:condition3}, we have
\begin{align*}
 10\delta_3 &\geq \D(\varphi^{{X}_j{\wht{X}}_j}\|\theta^{X_j\wht{X}_j})\geq \E_{\wht{x}_j\leftarrow \varphi^{\wht{X}_j}} \D(\varphi_{\wht{x}_j}^{X_j}\|\theta_{\wht{x}_j}^{{X}_j}).
 \end{align*}
Conditioning on $\wht{X}_j=*$, we have \[
\D(\varphi_{*}^{X_j}\|\theta_{*}^{{X}_j}) \leq 10\delta_3/0.99\delta_*\leq\delta_*/4.
\]
On applying Fact \ref{fact:pinsker} and using $\theta_{*}^{{X}_j}=\theta^{{X}_j}$, we have\[
\onenorm{\varphi_*^{X_j}}{\theta_{}^{{X}_j}} \leq \sqrt{\delta_*}/2.
\]
From eq. \eqref{eq:condition3}, using Fact \ref{fact:pinsker} and Fact \ref{fact:data}, we have\[
\onenorm{\varphi^{X_j}}{\theta_{}^{{X}_j}} \leq \sqrt{10\delta_3}.
\]
This gives \begin{equation}
 \label{eq:distance_on_switch}    
\onenorm{\varphi_*^{X_j}}{\varphi^{{X}_j}} \leq \sqrt{10\delta_3} + \sqrt{\delta_*}/2 \leq \sqrt{\delta_*}.
\end{equation}
From eq. \eqref{eq:condition4}, we have \begin{align*} 
20\delta_3 &\geq \I(Y_j:R_j|X_j)_\varphi
 \geq \I(\wht{X}_j:R_j|X_j)_\varphi.
\end{align*}
This gives 
\begin{align} 
20\delta_3 &\geq \E_{x_j\leftarrow \varphi^{X_j}} \D(\varphi_{x_j}^{\wht{X}_jR_j}\|\varphi_{x_j}^{\wht{X}_j}\otimes \varphi_{x_j}^{R_j}) \label{eq:l1_corrections}\\
&\geq \E_{x_j\wht{x}_j\leftarrow \varphi^{X_j\wht{X}_j}} \D(\varphi_{x_j\wht{x}_j}^{R_j}\|\varphi_{x_j}^{R_j}). &\mbox{(Fact \ref{fact:chain_rule_D})} \nonumber
\end{align}
Conditioning on $\wht{X}_j=*$, we get \begin{align*}
\delta_*
\geq  20\delta_3/0.99\delta_* & \geq 
\E_{x_j\leftarrow \varphi_*^{X_j}} \D(\varphi_{*,x_j}^{R_j}\|\varphi_{x_j}^{R_j}) \\ & \geq 
\E_{x_j\leftarrow \varphi_*^{X_j}} \onenorm{\varphi_{*,x_j}^{R_j}}{\varphi_{x_j}^{R_j}}^2  &\mbox{(Fact \ref{fact:pinsker})}
\\ &\geq 
\left(\E_{x_j\leftarrow \varphi_*^{X_j}} \onenorm{\varphi_{*,x_j}^{R_j}}{\varphi_{x_j}^{R_j}}\right)^2. &\mbox{(convexity)} 
\end{align*}
Thus, we get
\begin{align}
\E_{x_j\leftarrow \varphi_*^{X_j}} \onenorm{\varphi_{*,x_j}^{R_j}}{\varphi_{x_j}^{R_j}} \leq\sqrt{\delta_*}. \label{eq:small_expectation}
\end{align}
Thus, we have \begin{align}    
&\onenorm{\varphi_{*}^{R_jX_j}}{\varphi^{R_jX_j}} 
\nonumber\\& =  \onenorm{\varphi_{*}^{X_j}\cdot \varphi_{*,X_j}^{R_j}}{\varphi^{X_j}\cdot\varphi^{R_j}_{X_j}}  \nonumber
\\& \leq   \onenorm{\varphi_{*}^{X_j}\cdot \varphi_{*,X_j}^{R_j}}{\varphi_*^{X_j}\cdot\varphi^{R_j}_{X_j}}  + \onenorm{\varphi_*^{X_j}\cdot\varphi^{R_j}_{X_j}}{\varphi^{X_j}\cdot\varphi^{R_j}_{X_j}} &\mbox{({triangle inequality})} \nonumber
\\ &\leq \E_{x_j\leftarrow \varphi_*^{X_j}} \onenorm{\varphi_{*,x_j}^{R_j}}{\varphi_{x_j}^{R_j}} +  \onenorm{\varphi_*^{X_j}}{\varphi^{{X}_j}} \nonumber
\\&\leq 2\sqrt{\delta_*}. &\mbox{(eqs. \eqref{eq:distance_on_switch} and \eqref{eq:small_expectation})}  \nonumber
\end{align}
This proves eqs. \eqref{eq:R_jX_j_star}.

\noindent Using eq. \eqref{eq:markov_in_phi}, we have \begin{equation}
\onenorm{ \varphi^{X_jR_jE_A\wht{B}\wht{X}_j}}{ \varphi^{X_jR_jE_A\wht{B}}\cdot \varphi^{\wht{X}_j}_{X_jR_j\wht{B}}}=0.
            \label{eq:triangle1}
\end{equation}
            From eq. \eqref{eq:tensor-lemma}, we have
\begin{align*}
     20\delta_3 &\geq \E_{r_jx_j\leftarrow \varphi^{R_jX_j}} \D(\varphi_{r_jx_j}^{\wht{X}_j\wht{B}}\|\varphi_{r_jx_j}^{\wht{X}_j}\otimes \varphi_{r_jx_j}^{\wht{B}})
     \\ & \geq \E_{r_jx_j\wht{b}\leftarrow \varphi^{R_jX_j\wht{B}}} \D(\varphi_{r_jx_j\wht{b}}^{\wht{X}_j}\|\varphi_{r_jx_j}^{\wht{X}_j}). &\mbox{(Fact \ref{fact:chain_rule_D})}
\end{align*}
Fact \ref{fact:pinsker} and convexity gives \begin{align}
 \sqrt{20\delta_3} & \geq \E_{r_jx_j\wht{b}\leftarrow \varphi^{R_jX_j\wht{B}}} \onenorm{\varphi_{r_jx_j\wht{b}}^{\wht{X}_j}}{\varphi_{r_jx_j}^{\wht{X}_j}} \nonumber\\&\geq\onenorm{ \varphi^{X_jR_j\wht{B}}\cdot \varphi^{\wht{X}_j}_{X_jR_j\wht{B}}}{ \varphi^{X_jR_j\wht{B}}\cdot \varphi^{\wht{X}_j}_{X_jR_j}} \nonumber\\
 &{\geq\onenorm{ \varphi^{X_jR_jE_A\wht{B}}\cdot \varphi^{\wht{X}_j}_{X_jR_j\wht{B}}}{ \varphi^{X_jR_jE_A\wht{B}}\cdot \varphi^{\wht{X}_j}_{X_jR_j}} }&\mbox{(eq. \eqref{eq:markov_in_phi} and Fact \ref{fact:recovery})}.
\label{eq:triangle2}
\end{align}
From eq. \eqref{eq:l1_corrections}, we have\begin{align*} 
20\delta_3 &\geq \E_{x_j\leftarrow \varphi^{X_j}} \D(\varphi_{x_j}^{\wht{X}_jR_j}\|\varphi_{x_j}^{\wht{X}_j}\otimes \varphi_{x_j}^{R_j}) \\
&\geq \E_{x_jr_j\leftarrow \varphi^{X_jR_j}} \D(\varphi_{x_jr_j}^{\wht{X}_j}\|\varphi_{x_j}^{\wht{X}}). &\mbox{(Fact \ref{fact:chain_rule_D})} 
\end{align*}
Again using Fact \ref{fact:pinsker},  convexity and eq. \eqref{eq:markov_in_phi} gives \begin{equation}
\onenorm{ \varphi^{X_jR_jE_A\wht{B}}\cdot \varphi^{\wht{X}_j}_{X_jR_j}}{ \varphi^{X_jR_jE_A\wht{B}}\cdot \varphi^{\wht{X}_j}_{X_j}} \leq \sqrt{20\delta_3}.
\label{eq:triangle3}
\end{equation}
From eq. \eqref{eq:condition3}, we have
\begin{align*}
 10\delta_3 &\geq \D(\varphi^{{X}_j{\wht{X}}_j}\|\theta^{X_j\wht{X}_j})\\
 &\geq \E_{x_j\leftarrow \varphi^{{X}_j}} \D(\varphi_{{x}_j}^{\wht{X}_j}\|\theta_{{x}_j}^{\wht{X}_j}).
 \end{align*}
Fact \ref{fact:pinsker},  convexity and eq. \eqref{eq:markov_in_phi} \begin{equation}
\onenorm{ \varphi^{X_jR_jE_A\wht{B}}\cdot \varphi^{\wht{X}_j}_{X_j}}{ \varphi^{X_jR_jE_A\wht{B}}\cdot \theta^{\wht{X}_j}_{X_j}} \leq \sqrt{10\delta_3}. \label{eq:triangle4}
\end{equation}
Since $\theta^{\wht{X}_j}_{X_j}=\theta^{\wht{X}_j}_{X_j\wht{Y}_jC_j}$, using eqs. \eqref{eq:triangle1}, \eqref{eq:triangle2}, \eqref{eq:triangle3}, \eqref{eq:triangle4} and triangle inequality, we get \begin{align}
\onenorm{ \varphi^{X_jR_jE_A\wht{B}\wht{X}_j}}{ \varphi^{X_jR_jE_A\wht{B}}\cdot \theta^{\wht{X}_j}_{X_j\wht{Y}_jC_j}} &\leq 2\sqrt{20\delta_3}+\sqrt{10\delta_3} \nonumber\\&\leq \sqrt{\delta_*}. &\mbox{(since $\delta_3=\delta_*^2/10^5$)} \nonumber
\end{align}
 Using eq. \eqref{eq:markov_generate_B}, we get
\begin{align}
\onenorm{ \varphi^{X_jR_jE_A\wht{B}\wht{X}_jB_j}}{ \varphi^{X_jR_jE_A\wht{B}}\cdot \theta^{\wht{X}_j}_{X_j\wht{Y}_jC_j} \cdot\varphi^{B_j}_{R_j\wht{X}_j\wht{B}}} \leq \sqrt{\delta_*}. \label{eq:using_markov_for_B}
\end{align}
From eq. \eqref{eq:tensor-lemma}, we have
      \begin{align}
        20\delta_3 &\geq \E_{r_jx_j\leftarrow \varphi^{R_jX_j}} \D(\varphi_{r_jx_j}^{\wht{X}_j\wht{B}}\|\varphi_{r_jx_j}^{\wht{X}_j}\otimes \varphi_{r_jx_j}^{\wht{B}}) \nonumber 
        \\ & \geq \E_{r_jx_j\wht{x}_j\leftarrow \varphi^{R_jX_j\wht{X}_j}} \D(\varphi_{r_jx_j\wht{x}_j}^{\wht{B}}\|\varphi_{r_jx_j}^{\wht{B}}).  &\mbox{(Fact \ref{fact:chain_rule_D})} \label{eq:hybrid_2_expansion}
      \end{align}
      Conditioning on $\wht{X}_j=*$, as before, we have \begin{align}
\E_{r_jx_j\leftarrow \varphi_*^{R_jX_j}} \D(\varphi_{*,r_jx_j}^{\wht{B}}\|\varphi_{r_jx_j}^{\wht{B}})  \leq 20\delta_3/0.99\delta_*\leq\delta_*.    
      \label{eq:hybrid_P_2}\end{align}
        We have \begin{align*}
             \delta_* &\geq \E_{r_jx_j\leftarrow \varphi_*^{R_jX_j}} \D(\varphi_{*,r_jx_j}^{\wht{B}}\|\varphi_{r_jx_j}^{\wht{B}}) 
             \\ & \geq \E_{r_jx_j\leftarrow \varphi_*^{R_jX_j}} \onenorm{\varphi_{*,r_jx_j}^{\wht{B}}}{\varphi_{r_jx_j}^{\wht{B}}}^2 &\mbox{(Fact \ref{fact:pinsker})}
             \\ & \geq \left(\E_{r_jx_j\leftarrow \varphi_*^{R_jX_j}} \onenorm{\varphi_{*,r_jx_j}^{\wht{B}}}{\varphi_{r_jx_j}^{\wht{B}}}\right)^2. &\mbox{(convexity)}
         \end{align*}
     This gives\begin{align}
   &\onenorm{\varphi_{*}^{R_jX_j\wht{B}}}{\varphi^{R_jX_j\wht{B}}} \nonumber\\& = \onenorm{\varphi_{*}^{R_jX_j}\cdot \varphi_{*,R_jX_j}^{\wht{B}}}{\varphi^{R_jX_j}\cdot \varphi_{R_jX_j}^{\wht{B}}} \nonumber
   \\& \leq \E_{r_jx_j\leftarrow \varphi_*^{R_jX_j}} \onenorm{\varphi_{*,r_jx_j}^{\wht{B}}}{\varphi_{r_jx_j}^{\wht{B}}} + \onenorm{\varphi^{R_jX_j}}{\varphi_*^{R_jX_j}} &\mbox{(triangle inequality)}\nonumber
   \\& \leq \sqrt{\delta_*}+2\sqrt{\delta_*}=3\sqrt{\delta_*}. &\mbox{(eq. \eqref{eq:R_jX_j_star})}\label{eq:closeness_in_D}
\end{align} 
 This proves eq. \eqref{eq:closeness_in_D_first}.
        Using eqs. \eqref{eq:closeness_in_D} and \eqref{eq:markov_in_phi}, we get 
        \begin{equation}   \label{eq:star_vs_nonstar}
\onenorm{\varphi_{*}^{E_AR_jX_j\wht{B}}}{\varphi^{E_AR_jX_j\wht{B}}} \leq3\sqrt{\delta_*}.  
         \end{equation}
         Combining eqs. \eqref{eq:using_markov_for_B} and \eqref{eq:star_vs_nonstar}, we get \begin{align*}
\onenorm{ \varphi^{X_jR_jE_A\wht{B}\wht{X}_jB_j}}{ \varphi_*^{X_jR_jE_A\wht{B}}\cdot \theta^{\wht{X}_j}_{X_j\wht{Y}_jC_j} \cdot\varphi^{B_j}_{R_j\wht{X}_j\wht{B}}}\leq 4\sqrt{\delta_*}. 
\end{align*}     
         This proves eq. \eqref{eq:closeness_hybrid_P1}.
\end{proof}
\section*{Proof of Claim \ref{claim:non-aug-to-aug}}
\begin{proof} 
Suppose we want to use our OT protocol as a subroutine in a larger protocol. {Since the quantum state of the outer protocol decoheres before the protocol for OT starts, we know that the residual state of the outer protocol is classical.} Let $R$ be the additional relevant register corresponding to the outer protocol (see Definition \ref{def:augmented-secure}).
Let the starting state of our OT protocol (including the register $R$) in case of cheating $\alice^*$ be ${\eta^*_{\mathcal{Q}}}^{S_0S_1\widehat{A}D\widehat{B}{R}}$ (see Table \ref{algo:ot_cheating_alice}). From Lemma \ref{lem:sim_alice_security}, we have \[ \onenorm {\tau^{\widehat{A} O^\Sim_AO^\Sim_BS_0S_1D}_{\textsc{Sim}}} { \tau_{\mathcal{Q}}^{\widehat{A}O_AO_{B}S_0S_1D}} \leq {\negl(\lambda)}.\]
Because of the decoherence in the protocol due to \textbf{DELAY}, for state $\kappa_{\mathcal{Q}}$  (Table \ref{algo:ot_cheating_alice}), we can write
 \[\kappa_\mathcal{Q}^{\widehat{\widehat{A}}D\widehat{B}R}=\sum_{\widehat{\widehat{a}}d\widehat{b}} \Pr(\widehat{\widehat{a}}d\widehat{b})_{\kappa_\mathcal{Q}}\ketbra{\widehat{\widehat{a}}d\widehat{b}}{\widehat{\widehat{a}}d\widehat{b}}\otimes
{\kappa_{\mathcal{Q}_{\widehat{\widehat{a}}d\widehat{b}}}^{R}}.\]
 In the simulated protocol, the corresponding state is
   \[\kappa_\mathcal{\Sim}^{\widehat{\widehat{A}}D\widehat{B}R}=\sum_{\widehat{\widehat{a}}d\widehat{b}} \Pr(\widehat{\widehat{a}}d\widehat{b})_{\kappa_\Sim}\ketbra{\widehat{\widehat{a}}d\widehat{b}}{\widehat{\widehat{a}}d\widehat{b}}\otimes
{\kappa_{\mathcal{\Sim}_{\widehat{\widehat{a}}d\widehat{b}}}^{R}}.\]
Note, since up to the  \textbf{DELAY}, no messages were passed between $\alice^*$ and $\bob$ ({other than in the test phase which the simulator can simulate locally} since $\bob$ is honest), and $\textsc{Sim}$ was following $\alice^*$ identically, we have, 
 \begin{equation}     \label{eq:kappa_equivalence}
\kappa_\mathcal{Q}^{\widehat{\widehat{A}}D\widehat{B}R}= \kappa_\mathcal{\Sim}^{\widehat{\widehat{A}}D\widehat{B}R}.
 \end{equation}
   Since $\bob$'s actions are honest, and in particular, since neither $\alice^*$ nor $\bob$ access $\widehat{B}R$,  we have (with a slight abuse of notation since $R$ is a quantum register),
$$(\widehat{A} O^\Sim_AO^\Sim_B S_0S_1D- \widehat{\widehat{A}}D - \widehat{B}R )^{\tau_\textsc{Sim}},$$
and 
$$(\widehat{A} O_AO_B S_0S_1D- \widehat{\widehat{A}}D - \widehat{B}R )^{\tau_\textsc{Q}}.$$
Since $\widehat{A}$ contains $\widehat{\widehat{A}}$, the above Markov chains combined with eq.~\eqref{eq:kappa_equivalence} and Fact \ref{fact:recovery} implies, {
$$ (\widehat{A} O_AO_BS_0S_1DR )^{\tau_\mathcal{Q}} \approx_{\negl(\lambda)}  (\widehat{A} O^\Sim_AO^\Sim_BS_0S_1DR)^{\tau_\textsc{Sim}}.$$}

\end{proof}

\section*{Proof of Fact \ref{fact:approx_markov_sampling}}
\begin{proof}
  We have  \begin{align*}
      \eps  &\geq\I(R:Y|X)_\rho 
      \\ & =\E_{x\leftarrow\rho^X}\D(\rho^{YR}_x\|\rho^{Y}_x\otimes\rho^{R}_x) &\mbox{(Definition \ref{def:mutual_info})}\\
      &\geq  \E_{x\leftarrow\rho^X} \onenorm{\rho^{YR}_x}{\rho^{Y}_x\otimes\rho^{R}_x}^2 &\mbox{(Fact \ref{fact:pinsker})}
     \\ &=  \E_{x\leftarrow\rho^X} \onenorm{\rho^{Y}_x \cdot \rho^{R}_{x,Y}}{\rho^{Y}_x\otimes\rho^{R}_x} ^2
     \\&\geq \left(\E_{x\leftarrow\rho^{X}} \onenorm{\rho^{Y}_x \cdot \rho^{R}_{x,Y}}{\rho^{Y}_x\otimes\rho^{R}_x} \right) ^2 &\mbox{(convexity)}
      \\&= \left(\E_{x,y\leftarrow\rho^{X,Y}} \onenorm{ \rho^{R}_{x,y}}{\rho^{R}_x} \right)^2.
      \\&= \left( \onenorm{ P_{XY}P_{R|XY}}{P_{XY}P_{R|X}} \right)^2
       \\&= \left( \onenorm{ P_{XYR}}{P_{XY}P_{R|X}} \right)^2.
  \end{align*}
This gives \begin{equation*}
     \onenorm{ P_{XYR}}{P_{XY}P_{R|X}} \leq \sqrt{\eps}.
  \end{equation*}
  Similarly, we get \begin{equation*}
      \onenorm{ P_{XYR}}{P_{XY}P_{R|Y}}  \leq \sqrt{\eps}.
  \end{equation*}
\end{proof}

\bibliography{name}
\bibliographystyle{alpha}
\end{document}